\documentclass{article}
\usepackage{graphicx} % Required for inserting images
\usepackage{geometry}
\usepackage{setspace}
\usepackage{booktabs}
\usepackage{array}
\newcolumntype{L}[1]{>{\raggedright\arraybackslash}p{#1}}
\usepackage{float}
\usepackage{amsmath} 
\usepackage{amsthm}
\usepackage{amssymb}
\usepackage{bbm} % mathbbm for indicator function
\usepackage{setspace} % for double spacing
\usepackage{bm}
\usepackage{natbib}
\usepackage[backref]{hyperref} 
\usepackage[title]{appendix}
\usepackage{enumitem}

\allowdisplaybreaks

\newtheorem{proposition}{Proposition}
\newtheorem{condition}{Condition}
\newtheorem{theorem}{Theorem}
\newtheorem{remark}{Remark}
\newtheorem{lemma}{Lemma}

\newcommand{\normm}[1]{\lVert#1\rVert}

\newcommand{\E}{\mathbb{E}}
\newcommand{\R}{\mathbb{R}}

\newcommand{\I}{\mathbbm{1}}
\DeclareMathOperator{\Var}{Var}
\DeclareMathOperator{\Cov}{Cov}
\DeclareMathOperator{\adj}{adj}

\title{Local Polynomial Estimation in Irregular Correlated Random Coefficient Panel Data Models}
\author{
Yelong Chen\thanks{Department of Economics, UCL. We would like to thank Ben Deaner, Liyang Sun and seminar participants at UCL Econometrics brownbag for helpful comments.}
\and Hua Jin\footnotemark[1]
\and Yuxuan Ren\footnotemark[1]
}

\date{This draft: \today}

\begin{document}

\maketitle
\normalsize
\doublespacing

\begin{abstract}
We study estimation of the average partial effect (APE) in an irregular correlated random coefficients panel data model in which the time dimension equals the number of random coefficients, $T=p$, and the density function is degenerate. We extend \cite{grahampowell} in two directions. First, we allow the density of $d_i=\det(X_i)$ to vanish or diverge at the origin at the polynomial rate $|d|^{a}$ with $a>-1$, and show that the aggregate time effects and the APE remain identified for every such
$a$, extending earlier work studying the case $a=0$.
Second, we propose a three-step estimator: an $m$th-order local polynomial regression over the stayers for the time effects; 
a local polynomial fit of the conditional mean coefficient just outside the trimming threshold; and an imputation step that extrapolates the latter into the stayer region rather than discarding stayers. We establish joint asymptotic normality in three regimes determined by the sign of $a$, and obtain a convergence rate approaching $n^{-(m+1)/(2m+3)}$ when $a=0$, against $n^{-1/3}$ for \cite{grahampowell}, whose estimator is the special case $m=0$, $a=0$. The results show that smoothness of the underlying conditional expectations can be traded for a faster convergence rate in irregularly identified models. Simulations confirm lower bias and root mean squared error across designs. We redo the overidentification analysis of a markup model in \cite{Raval2023}, allowing for correlated random elasticities.
\end{abstract}

\newpage
\section{Introduction}
This paper studies identification and estimation of the average partial effect (APE) in a correlated random coefficients panel data model in which the time dimension equals the number of coefficients. Specifically, the panel is short, with $T$ periods, and the model is
\begin{equation}
Y_{it}=X_{it}'(\beta_i+\delta_t)+\varepsilon_{it},
\label{eq:intro_model}
\end{equation}
where $X_{it}$ is a $p\times1$ regressor vector, $\beta_i$ is a unit-specific $p\times1$ coefficient vector, and $\delta_t$ is a $p\times1$ common drift across time. The individual coefficients are allowed to be correlated with the regressor history $X_i=(X_{i1}',\ldots,X_{iT}')'$. This correlation is well understood in the literature to introduce heterogeneity bias into the usual fixed-effects estimator of the population mean $\beta_0=\mathbb{E}[\beta_i]$, the APE.

The identification framework when $T>p$ is classical. \cite{CHAMBERLAIN1982,CHAMBERLAIN1987,CHAMBERLAIN1992} establish a random-coefficient panel framework for identifying the APE and the common time drift that relies on conditional moment restrictions and sufficient variation in the regressor history. The key requirement is that $T>p$, so that the regressor matrix is not square and residual variation remains after the regressors are projected out. Then the information bound is nonsingular, and under regularity conditions a $\sqrt{n}$-type estimator is feasible.

In many applications, however, individuals are observed for only a few periods. Studies of household consumption, as in \cite{grahampowell,Sasaki_2026}, or firm production require observing the same households or firms in every period of a balanced panel. Requiring more periods to satisfy Chamberlain's $T>p$ condition leaves fewer units with complete histories and can substantially reduce the usable sample. This motivates methods for identification and estimation in the irregular case $T=p$.

\cite{grahampowell} provides the foundational identification framework in this case. Their key insight is that when $X_i$ is square, the determinant $d_i=\det(X_i)$ separates the sample into two types of units: stayers, with $d_i$ near zero and little movement in the regressor path, and movers, with $d_i$ away from zero and enough variation to identify heterogeneous coefficients. Premultiplying the model by the adjugate matrix $X_i^{*}$ gives $X_i^{*}Y_i = X_i^{*}W_i\delta_0 + d_i\beta(X_i)+X_i^{*}\varepsilon_i$\footnote{This is the model in matrix form; see Section~\ref{sec:model}.}, so for stayers the heterogeneous term vanishes and the drift is identified from the local stayer population. For movers, the coefficient is recovered from $\beta(X_i)=X_i^{-1}(Y_i-W_i\delta_0)$, and the APE can be estimated by averaging these unit-specific coefficients over the mover population after trimming away near-stayers. Their estimator is therefore a trimmed mean over movers, with the trimming threshold chosen as a bandwidth around the singular boundary.

The key limitations of this approach are precisely two. First, the trimming rule discards stayers when constructing the APE. The resulting estimator target is effectively the average among movers rather than the population APE, so the leading misspecification is a first-order trimming bias in the bandwidth. The loss of the stayer mass from the estimand induces the slow $n^{-1/3}$-type rate in the baseline case. Second, the identification argument in \cite{grahampowell} requires a local density condition of the form $f_D(d)$ bounded away from zero and infinity around the origin; this is a strong restriction on the density of the determinant at the boundary. In practice, the stayer population may be sparse or the near-singular designs may be crowded. 

This paper addresses both issues. The method is a three-step local polynomial procedure. First, it estimates the drift $\delta_0$ by a local polynomial fit at the stayer boundary, using the stayers and their near-stayer neighbors. Second, it estimates the conditional mean of the heterogeneous coefficient, $g_\beta(h)=\mathbb{E}[\beta(X_i)\mid h_i=h]$, just outside the trimming threshold, using the movers closest to the boundary. Third, it extrapolates $\hat g_\beta(h_i)$ into the stayer region and imputes the resulting values for stayers instead of discarding them. The essential conceptual point is that the procedure removes the trimming bias through extrapolation and imputation rather than by shrinking the trimming bias with increasingly aggressive undersmoothing. In the estimator, the trimmed mean is replaced by an imputation estimator that restores the population average $\mathbb{E}[\beta(X_i)]$.

The contribution is therefore twofold. The first is identification: the paper relaxes the local density condition and identifies both $\delta_0$ and $\beta_0$ under a much broader class of boundary behavior. The second is asymptotic: the local polynomial order $m$ yields a richer asymptotic characterization in which the rate depends on the local degeneracy of the determinant density. This is the central distinction from the earlier benchmark.

The identification result is stated in terms of a local density condition
\begin{equation}
 f_D(d)=|d|^{a}q_D(d),\qquad a>-1,
\end{equation}
with $q_D$ smooth and nonzero at the origin. Our paper shows that both $\delta_0$ and $\beta_0$ remain identified for every admissible $a>-1$, and $a$ only has an impact on the rate of estimation. For $a>0$ the stayer density vanishes near the origin, so the stayer population is sparse; for $a=0$ the density is locally bounded away from zero and infinity; for $a<0$ the density is locally crowded near the boundary. The asymptotic theorem delivers a joint normal limit in the three regimes. With local polynomial order $m$, the rate is governed by the boundary exponent and the undersmoothing condition. In the case $a>0$ when the stayer population is thin, the first-step drift estimation is the rate-limiting component. In the case $a<0$, the move-to-stayer crowding at the boundary is the binding constraint, and the first-step and second-step estimators converge at different rates and become asymptotically independent. At the knife-edge $a=0$, the two steps balance, and as the bandwidth is pushed to the edge of the admissible range the rate approaches
\begin{equation}
 n^{-\frac{m+1}{2m+3}}.
\end{equation}

The paper also studies finite-sample performance using calibrated simulations from a real-world Engel-curve application in \cite{grahampowell} and an empirical application motivated by \cite{Raval2023}. Simulations calibrated to the calorie-demand application show that individual demand-elasticity estimates vary significantly with the determinant $d_i$, indicating that families with greater income variation tend to have lower income elasticities. In the second application, we use four consecutive years of the Chilean ENIA census to estimate a gross-output production function with a plant-specific productivity intercept and heterogeneous elasticities for capital, labour, and intermediate inputs, together with a common time drift. We allow these plant-specific coefficients to be correlated with the plants' input histories. We then redo the overidentification analysis for the markup model.\footnote{Under cost minimization, each variable input's first-order condition delivers a separate estimate of the same firm-level markup, as the ratio of that input's output elasticity to its revenue share, so a production function estimated with more than one variable input is overidentified for the markup: the markups implied by different inputs (materials and labour) should agree if the maintained production-function assumptions are correct. \cite{Raval2023} tests exactly this restriction and rejects it, but that implementation maintains a homogeneous elasticity across producers, so a rejection could in principle reflect nothing more than heterogeneity in returns to inputs rather than a genuine failure of the production-function or cost-minimization assumptions.} We test the equality of the average markups inferred from different inputs. Because this requires only agreement of population averages rather than agreement of all the individual coefficients or their distribution, it is a weaker test than \cite{Raval2023}'s.

\textbf{Literature review.} Our work builds on \cite{grahampowell}, who extend the regular correlated random-coefficient panel framework of \cite{CHAMBERLAIN1982,CHAMBERLAIN1987,CHAMBERLAIN1992} to the irregular case \(T=p\). We extend their framework in the two directions discussed above. Another close antecedent is \cite{Sasaki_2026}, who develop local-polynomial corrections for trimming bias and inference that accommodates a mass of stayers and many slow movers. Their correction uses observations within the trimming region; ours estimates the conditional mean coefficient from movers outside it and extrapolates inward. This avoids dividing by determinants close to zero, which would amplify noise.
%We discuss the differences in detail after Theorem~\ref{theorem:asymptotic}.
%\cite{Sasaki_2026} likewise use local polynomials to recover the contribution removed by trimming, but their correction is estimated in the signed determinant from the stayer and slow-mover window $|d_i|\le h_N$ and then added to the trimmed mover mean. The derivative terms are learned from nonzero slow movers in that window, while exact stayers enter through its mass and moments. We instead estimate $g_\beta(h)$ from movers just outside the threshold and extrapolate it into the stayer region before imputing each trimmed observation. 
Their theory permits exact stayers and an asymmetric continuous density function with curvature $-1<a\le 0$ and provides asymptotically valid inference for APE; ours excludes a point mass but covers density function with curvature $a$ for every $a>-1$, including the sparse case $a>0$, and gives an asymptotic theory that implies explicit rate results across three regimes.

Other papers study related heterogeneous-coefficient problems but differ in the target parameter, panel structure, or identification strategy. \cite{PESARAN199579} analyze heterogeneous dynamic panels with coefficients that are not correlated with the regressors; \cite{ArellanoB2012} target distributional features of the random coefficients rather than the APE; long-panel work such as \cite{CorrelatedPanelModelLargeT} exploits regular asymptotics and standard bias corrections; \cite{LAAGE2024105804} studies a control-function approach in a different endogeneity design; and \cite{hausman2025linear} estimate structural and causal effects in nonseparable panel models by averaging bias-corrected, ridge-regularized individual sieve estimates, a regularization strategy for a richer nonparametric model rather than a response to the exact short-panel singularity studied here. Outside the panel setting altogether, \cite{kwon2026estimatingtreatmenteffectsbounded} studies the cross-sectional model with bounded heterogeneity bias, trading it off against variance through a generalized ridge estimator for cross-sectional treatment effects. These papers study related heterogeneity problems while imposing different correlation structures, target objects, panel dimensions, or bias-variance trade-offs. The present paper is distinguished by combining correlated random coefficients, short-panel irregular identification, and a three-step local polynomial estimator designed to restore the population APE under degenerate local density.

\begin{table}[H]
\centering
\caption{Comparison with the literature on heterogeneous coefficient panel models}
\label{tab:litreview}
\small
\singlespacing
\begin{tabular}{@{}L{3.3cm}L{2.6cm}L{2.9cm}L{3.3cm}L{2.7cm}@{}}
\toprule
\textbf{Paper} & \textbf{Target} & \textbf{Panel structure} & \textbf{Estimator} & \textbf{Convergence} \\
\midrule
\cite{grahampowell} & APE & $T=p$ (irregular) & Trimmed mover estimator & approaching $n^{-1/3}$ at the knife-edge $a=0$ \\[4pt]
\cite{Sasaki_2026} & APE & $T=p$ with stayers and slow movers & Local-polynomial aggregate correction & inference normalized by the estimated influence-function variance \\[4pt]
\cite{CHAMBERLAIN1982,CHAMBERLAIN1987,CHAMBERLAIN1992} & APE / finite-dimensional $\theta_0$ & $T>p$ (regular) & Conditional-moment / feasible GLS & regular $\sqrt{n}$ typically available \\[4pt]
\cite{PESARAN199579} & Mean / long-run coefficient & Dynamic panel, large $T$ & Mean-group and pooled estimators & consistency \\[4pt]
\cite{ArellanoB2012} & Distributional features of coefficients & Fixed $T$, factor-model structure & Distributional identification & consistency \\[4pt]
\cite{CorrelatedPanelModelLargeT} & Moments of random coefficients / APE & Large $T$ & Bias-corrected GMM & regular $n^{-1/2}$-type rates \\[4pt]
\cite{LAAGE2024105804} & APE & Short panel with time-varying endogeneity & Control-function estimator & $n^{-1/2}$ \\[4pt]
\cite{hausman2025linear} & Structural / causal effects & Nonseparable model, fixed and large $T$ & Ridge-regularized individual estimates, averaged & regular $n^{-1/2}$-type rates \\[4pt]
\cite{kwon2026estimatingtreatmenteffectsbounded} & Heterogeneous treatment effects & Cross-sectional (no panel) & Generalized ridge, bias-aware inference & not point-consistent \\[4pt]
\textbf{This paper} & APE & $T=p$ (irregular) & Local polynomial imputation estimator & approaching $n^{-(m+1)/(2m+3)}$ at $a=0$ \\
\bottomrule
\end{tabular}
\end{table}

\textbf{Organization.} The remainder of the paper is organized as follows. Section~\ref{sec:model} sets out the correlated random coefficients model, reviews the identification and estimation strategy of Graham and Powell, and extends identification to a degenerate stayer density. Section~\ref{sec:estimation} develops the three-step local polynomial estimator. Section~\ref{sec:theory} derives its asymptotic properties across the three regimes indexed by the boundary exponent $a$, together with primitive smoothness conditions. Section~\ref{sec:simulations} evaluates finite-sample performance against the Graham-Powell benchmark. Section~\ref{sec:chile} applies the method to the Chilean manufacturing production data and uses the resulting average input elasticities to redo the overidentification test for markup estimation of \cite{Raval2023} under correlated random coefficients.

\textbf{Notation.} Let $\adj(X)$ denote the adjugate matrix of a square matrix $X$ and $\det(X)$ its determinant. The symbol $\normm{\cdot}_F$ denotes the Frobenius norm and $\normm{\cdot}$ the Euclidean norm; $\I[\cdot]$ is the indicator function. $C^{m}[a,b]$ denotes the class of functions with $m$ continuous derivatives on $[a,b]$. We use $a_n\lesssim b_n$ to denote domination up to a constant factor and $a_n\asymp b_n$ when the two are of the same order.

\section{Irregular Correlated Random Coefficient Panel Data Model}\label{sec:model}
\subsection{Setup and Irregularity}

Consider the model
\begin{equation}\label{eq:model}
    Y_i = X_i\beta(X_i)+W_i\delta_0 + \epsilon_i, \quad \E(\epsilon_i|X_i)=0,\quad var(\epsilon_i|X_i)=\sigma^2I_T
\end{equation}
where $X_i=(x_{i,1},\dots,x_{i,T})'\in\R^{T\times p}$, $\beta(X_i)\in\R^p$, $Y_i\in\R^T$,
\[
W_i = \begin{pmatrix}
    0_{p}',& & 0_p'\\
    x_{i,2}',& & 0_p'\\
    \vdots&  \ddots & \vdots\\
    0_{p}',& & x_{i,T}'\\
\end{pmatrix}\in\R^{T\times p(T-1)},
\qquad
\delta_0=(\delta_{0,2}',\dots,\delta_{0,T}')'\in\R^{p(T-1)},
\qquad \epsilon_i\in\R^T .
\]
 
Equation~(\ref{eq:model}) is a reformulation of equation~(9) in \cite{grahampowell}. The unit-specific vector $\beta(X_i)$ is allowed to covary arbitrarily with the entire regressor path; $\delta_0$ collects the aggregate time effects, which are common to all units under the common-trends assumption, with the normalisation $\delta_{0,1} = \mathbf{0}$. Throughout, we take $T=p$, so that $X_i$ is square, and we write
\[
    d_i=d(X_i)=\det(X_i),\qquad h_i=|d_i|,\qquad X_i^{*}=\adj(X_i)=d_i\,X_i^{-1},
    \qquad W_i^{*}=X_i^{*}W_i,\qquad Y_i^{*}=X_i^{*}Y_i .
\]
The parameters of interest are the aggregate time effects $\delta_0$ and the average partial effect
\begin{align}\label{eq:estimand}
    \beta_0 = \E[\beta(X_i)].
\end{align}

The model is irregular when $T=p$. The natural route to $\delta_0$ is to define $M(X_i)=I_T-X_i(X_i'X_i)^{-1}X_i'$ and exploit $\E[M(X_i)Y_i\mid X_i]=M(X_i)W_i\delta_0$. This is the basis of \cite{CHAMBERLAIN1992}, and it requires residual variation in $W_i$ after projecting out $X_i$. When $T=p$ and $X_i$ is nonsingular, however, $M(X_i)=0$. Chamberlain's information bound for $\delta_0$ is then $\mathcal{I}(\delta_0)=\E[W_i'M(X_i)\Sigma(X_i)^{-1}M(X_i)W_i]=0$, and the information bound for $\beta_0$ is singular as well. By \cite{CHAMBERLAIN1992}, no regular $\sqrt{n}$-consistent estimator of either parameter exists. 

\subsection{Graham and Powell (2012) Identification and Estimation}
\paragraph{Identification of $\delta_0$ and $\beta_0$.}
Identification, as shown in \cite{grahampowell}, rests on the behaviour of the model within a vanishingly small subpopulation. This subpopulation is characterized by $d_i$. Since $d_i=0$ exactly when the rows (equivalently, columns) are linearly dependent, the magnitude of $d_i$ reflects the amount of linearly independent within-individual regressor variation. A simple case arises when regressors remain constant over time. In the leading example $T=p=2$, with $x_{it}=(1,z_{it})'$, we have $d_i=z_{i2}-z_{i1}$, so $d_i=0$ corresponds exactly to no within-individual variation in $z_{it}$. Following \cite{grahampowell}, we refer to units with $d_i$ at or near zero as \textit{stayers} and the remainder as \textit{movers}. The two subpopulations carry complementary identifying content: stayers pin down $\delta_0$ because their outcomes move only with calendar time, while movers pin down $\beta_0$ because only they exhibit the regressor variation needed to trace out a slope.

Premultiplying (\ref{eq:model}) by $X_i^{*}$ and using $X_i^{*}X_i=d_iI_p$ gives
\begin{equation}\label{eq:adj_model}
    Y_i^{*} = d_i\,\beta(X_i) + W_i^{*}\delta_0 + X_i^{*}\epsilon_i,
    \qquad\text{so}\qquad
    \E\!\left[Y_i^{*}\mid X_i\right] = W_i^{*}\delta_0 + d_i\,\beta(X_i).
\end{equation}
The heterogeneous term enters multiplied by $d_i$ and therefore disappears on the stayer subpopulation: $\E[Y_i^{*}\mid X_i, d_i=0]=W_i^{*}\delta_0$. Provided that $d_i$ is continuously distributed with a density that is positive and smooth in a neighbourhood of the origin --- so that conditioning on $d_i=0$ is the limit of conditioning on events of positive probability --- and that $\E[W_i^{*\prime}W_i^{*}\mid d_i=0]$ is nonsingular, the conditional linear predictor of $Y_i^{*}$ on $W_i^{*}$ among stayers identifies $\delta_0$,
\begin{align}\label{eq:estimand2}
    \delta_0 = \E[W_i^{*\prime}W_i^{*}|d_i=0]^{-1}\E[W_i^{*\prime}Y_i^{*}|d_i=0].
\end{align}
 
With $\delta_0$ treated as known, (\ref{eq:model}) gives $\beta(X_i)=\E[X_i^{-1}(Y_i-W_i\delta_0)\mid X_i]$ for every $X_i$ with $d_i\neq0$, which suggests identifying $\beta_0$ by the unconditional mean of $X_i^{-1}(Y_i-W_i\delta_0)$. This fails: $X_i^{-1}=d_i^{-1}X_i^{*}$, and when $d_i$ has positive density at the origin the expectation is generally undefined. \cite{grahampowell} resolve this by trimming. Under smoothness of $\E[\beta(X_i)\mid d_i=u]$ near $u=0$ and integrability of $\beta(X_i)$,
\begin{equation}\label{eq:trimmed_id}
    \beta_0 = \E[\beta(X_i)]
    = \lim_{h\downarrow0}\E\!\left[\beta(X_i)\,\I[h_i> h]\right]
    = \lim_{h\downarrow0}\E\!\left[X_i^{-1}(Y_i-W_i\delta_0)\,\I[h_i> h]\right].
\end{equation}
Trimming removes the units with near-singular design matrices; taking the limit and exploiting continuity restores the original estimand. The remaining conditions in \cite{grahampowell} are regularity conditions of the usual kind: bounded second moments of $W_i^{*}$, twice continuous differentiability in $u$ of the density of $d_i$, and of the conditional moments, $\E[\beta(X_i)\mid d_i=u]$, $\E[W_i^{*}\mid d_i=u]$ and $\E[W_i^{*\prime}W_i^{*}\mid d_i=u]$ in a neighbourhood of zero, and nonsingularity of $\E[W_i^{*\prime}W_i^{*}\mid h_i=0]$.

\paragraph{Estimation}
The estimator replaces (\ref{eq:estimand}) and (\ref{eq:trimmed_id}) by sample analogues, using a single bandwidth $h_0\to0$ to split the sample. The aggregate time effects are estimated over the stayer subsample,
\begin{equation}\label{eq:gp_delta}
    \hat{\delta}_0^{\,GP}
    = \left[\frac{1}{n}\sum_{i=1}^n \I[h_i\le h_0]\,W_i^{*\prime}W_i^{*}\right]^{-1}
      \left[\frac{1}{n}\sum_{i=1}^n \I[h_i\le h_0]\,W_i^{*\prime}Y_i^{*}\right],
\end{equation}
and the average partial effect by the trimmed mean over the mover subsample,
\begin{equation}\label{eq:gp_beta}
    \hat{\beta}_0^{\,GP}
    = \frac{\frac{1}{n}\sum_{i=1}^n \I[h_i> h_0]\,X_i^{-1}\!\left(Y_i-W_i\hat{\delta}_0^{\,GP}\right)}
           {\frac{1}{n}\sum_{i=1}^n \I[h_i> h_0]} .
\end{equation}
Estimator (\ref{eq:gp_delta}) is a local constant kernel regression of $Y_i^{*}$ on $W_i^{*}$ evaluated at the point $h=0$. Both estimators converge at rate $\sqrt{nh_0}$, and $h_0$ must satisfy $nh_0^3\to 0$ to remove the first-order trimming bias, so that the fastest attainable rate is slower than $n^{-1/3}$. Our approach strengthens the estimation by exploiting the smoothness of these conditional expectations, and thereby improves the rate of convergence.

\subsection{Identification under a Degenerate Stayer Density}

The identification argument of \cite{grahampowell} conditions on the exact stayers, $d_i=0$, which relies on the density of $d_i$ being bounded away from both zero and infinity at $d=0$. If instead the stayer density vanishes or diverges as $d\to0$, the conditional expectations are no longer well-defined at $d=0$. We therefore replace this conditioning argument with an explicit boundary limit and impose the following condition.

\begin{condition}\label{cond:identification}\leavevmode
    \begin{enumerate}
        \item\label{id:limits} The two-sided limits
        $\mathcal{A}(0^{+}):=\lim_{u\to 0}\E[W_i^{*\prime}W_i^{*}\mid d_i=u]$ and
        $\mathcal{B}(0^{+}):=\lim_{u\to 0}\E[W_i^{*\prime}Y_i^{*}\mid d_i=u]$
        exist, and $\mathcal{A}(0^{+})$ is nonsingular.
        \item\label{id:nomass} $P(d_i=0)=0$.
        \item\label{id:bounded} For some $\bar\delta>0$,
        $\displaystyle\sup_{0<|u|<\bar\delta}\normm{\E[\beta(X_i)\mid d_i=u]}
        +\sup_{0<|u|<\bar\delta}\normm{\E[W_i^{*\prime}\beta(X_i)\mid d_i=u]}\le C.$
    \end{enumerate}
\end{condition}

\noindent
Condition~\ref{cond:identification} places no restriction whatsoever on the density of $d_i$ at the origin. Part~\ref{id:limits} asks only that the conditional expectations have limits. Part~\ref{id:nomass} rules out a point mass of exact stayers; without it the trimmed limit identifies the movers' average partial effect rather than $\beta_0$. Part~\ref{id:bounded} imposes a uniform bound on the conditional expectations of $\beta(X_i)$ and $W_i^{*\prime}\beta(X_i)$ in a neighbourhood of the origin. This is a mild restriction: it is satisfied, for example, if $\beta(X_i)$ and $W_i^{*}$ are bounded. Note this condition is guaranteed by conditions in Section~\ref{sec:theory}.

The following proposition summarizes the identification result under a degenerate stayer density.

\begin{proposition}\label{prop:identification}
Under Condition~\ref{cond:identification}, $\delta_0$ and $\beta_0$ are identified by
\begin{align}\label{eq:id_degenerate}
    \delta_0 = \mathcal{A}(0^{+})^{-1}\mathcal{B}(0^{+})
    = \lim_{h \downarrow 0} \E[W_i^{*\prime}W_i^{*}\mid h_i=h]^{-1}\,
      \lim_{h \downarrow 0} \E[W_i^{*\prime}Y_i^{*}\mid h_i=h],
    \qquad
    \beta_0 = \lim_{h\downarrow0}\E\!\left[X_i^{-1}(Y_i-W_i\delta_0)\,\I[h_i\ge h]\right].
\end{align}
\end{proposition}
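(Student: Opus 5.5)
The argument has two parts, mirroring the two displays in (\ref{eq:id_degenerate}): first the time effects from the stayer boundary, then the APE from the trimmed mover mean with $\delta_0$ treated as known.

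\textbf{Identification of $\delta_0$.} Multiply the adjugate equation (\ref{eq:adj_model}) by $W_i^{*\prime}$ and take conditional expectations given $d_i=u$ for $0<|u|<\bar\delta$. By the law of iterated expectations and $\E[\epsilon_i\mid X_i]=0$, the term $W_i^{*\prime}X_i^{*}\epsilon_i$ has conditional mean zero, so
\begin{equation*}
\E[W_i^{*\prime}Y_i^{*}\mid d_i=u]=\E[W_i^{*\prime}W_i^{*}\mid d_i=u]\,\delta_0+u\,\E[W_i^{*\prime}\beta(X_i)\mid d_i=u].
\end{equation*}
\begin{itemize}
\item By Condition~\ref{cond:identification}.\ref{id:bounded}, the last term is bounded in norm by $C|u|$, so it tends to $0$ as $u\to0$.
\item By Condition~\ref{cond:identification}.\ref{id:limits}, the other two terms converge, giving $\mathcal{B}(0^{+})=\mathcal{A}(0^{+})\delta_0$.
\item Nonsingularity of $\mathcal{A}(0^{+})$ then yields $\delta_0=\mathcal{A}(0^{+})^{-1}\mathcal{B}(0^{+})$.
\end{itemize}
For the expression in $h_i=|d_i|$, write $\E[\cdot\mid h_i=h]$ as a mixture of the conditional expectations at $d_i=h$ and $d_i=-h$. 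Because the limits in part~\ref{id:limits} are two-sided, this mixture has the same limit whatever the mixing weights are.

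The step I expect to be the main obstacle is exactly this one: the weights $f_D(h)/(f_D(h)+f_D(-h))$ need not converge, and the density may vanish or diverge at the origin. The argument I would use is that each conditional expectation is within $\varepsilon$ of its limit for small $h$. A convex combination of two quantities each within $\varepsilon$ of the same limit is also within $\varepsilon$ of it. So no density assumption is needed, and this is where the proof shows that the exponent $a$ plays no role in identification.

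Once the limits exist, the matrix $\E[W_i^{*\prime}W_i^{*}\mid h_i=h]$ is invertible for small $h$, since it converges to the nonsingular $\mathcal{A}(0^{+})$. Continuity of matrix inversion then gives the product-of-limits form stated in the proposition.

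\textbf{Identification of $\beta_0$.} For $h_i\ge h>0$ we have $d_i\ne0$, so $X_i$ is invertible. Then $X_i^{-1}(Y_i-W_i\delta_0)=\beta(X_i)+X_i^{-1}\epsilon_i$, and conditioning on $X_i$ removes the noise term. Hence
\begin{equation*}
\E[X_i^{-1}(Y_i-W_i\delta_0)\I[h_i\ge h]]=\E[\beta(X_i)\I[h_i\ge h]].
\end{equation*}
Two technical points arise here.
\begin{itemize}
\item Integrability of $\beta(X_i)$ is implicit in $\beta_0$ being defined in (\ref{eq:estimand}). For fixed $h>0$, the term $X_i^{-1}\epsilon_i$ has conditional mean zero, so the conditioning is legitimate. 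I would note that integrability of $X_i^{-1}\epsilon_i\I[h_i\ge h]$ follows since $\normm{X_i^{-1}}\le\normm{X_i^{*}}/h$ and the second moments of $X_i^{*}$ are finite.
\item Since $\I[h_i\ge h]\to\I[h_i>0]$ pointwise as $h\downarrow0$, dominated convergence with dominating function $\normm{\beta(X_i)}$ gives a limit of $\E[\beta(X_i)\I[d_i\ne0]]$.
\end{itemize}
Finally, $P(d_i=0)=0$ by Condition~\ref{cond:identification}.\ref{id:nomass}, so this limit equals $\beta_0$. Alternatively, part~\ref{id:bounded} bounds the removed piece $\E[\beta(X_i)\I[h_i<h]]$ in norm by $C\,P(h_i<h)\to0$ directly, without invoking dominated convergence.
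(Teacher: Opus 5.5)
Your proposal is correct and follows the paper's proof step for step. It uses the same conditioning on $d_i=u$ with the $C|u|$ bound from Condition~\ref{cond:identification}.\ref{id:bounded}, the same convex-combination argument over $d_i=\pm h$ that makes the density exponent irrelevant, and the same iterated-expectations reduction to $\E[\beta(X_i)\I[h_i\ge h]]$. The only variation is your main route to the final limit: dominated convergence needs only integrability of $\beta(X_i)$ and $P(d_i=0)=0$ but gives no rate, whereas the paper (as in your stated alternative) uses Condition~\ref{cond:identification}.\ref{id:bounded} to get the explicit bound $C\,P(0<h_i<h)$; the moment bound on $X_i^{*}$ you invoke for integrability is not part of Condition~\ref{cond:identification}, though the paper leaves that point implicit too.
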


\begin{proof}[Proof of Proposition~\ref{prop:identification}]
\textbf{Step 1 ($\delta_0$).} Premultiplying (\ref{eq:adj_model}) by $W_i^{*\prime}$ and taking expectations conditional on $d_i=u$ gives, by iterated expectations and $\E[\epsilon_i\mid X_i]=0$,
\[
    \E\!\left[W_i^{*\prime}Y_i^{*}\mid d_i=u\right]
    = \E\!\left[W_i^{*\prime}W_i^{*}\mid d_i=u\right]\delta_0
      + u\,\E\!\left[W_i^{*\prime}\beta(X_i)\mid d_i=u\right].
\]
By Condition~\ref{cond:identification}.\ref{id:bounded}, the last term is bounded in norm by $C|u|$ for $0<|u|<\bar\delta$ and hence vanishes as $u\to0$. Letting $u\to0$ and using Condition~\ref{cond:identification}.\ref{id:limits} yields $\mathcal{B}(0^{+})=\mathcal{A}(0^{+})\delta_0$, and $\mathcal{A}(0^{+})$ is invertible.

\textbf{Step 2 (conditioning on $h_i$).} For $h>0$, the conditional law of $d_i$ given $h_i=h$ is supported on $\{-h,+h\}$; writing $\pi(h)=P(d_i=h\mid h_i=h)$, for any integrable $\mathcal{Y}$,
\[
    \E[\mathcal{Y}\mid h_i=h]
    = \pi(h)\,\E[\mathcal{Y}\mid d_i=h] + \left(1-\pi(h)\right)\E[\mathcal{Y}\mid d_i=-h].
\]
This is a convex combination of two quantities that share the same limit as $h\downarrow0$, by the two-sided limits in Condition~\ref{cond:identification}.\ref{id:limits}; it therefore converges to that common limit whatever the behaviour of $\pi(h)$. Applying this to $W_i^{*\prime}Y_i^{*}$ and to $W_i^{*\prime}W_i^{*}$ gives the second representation of $\delta_0$ in (\ref{eq:id_degenerate}).

\textbf{Step 3 ($\beta_0$).} On $\{h_i\ge h\}$ the matrix $X_i$ is invertible, and $\E[X_i^{-1}(Y_i-W_i\delta_0)\mid X_i]=\beta(X_i)$, so iterated expectations give
$\E[X_i^{-1}(Y_i-W_i\delta_0)\I[h_i\ge h]]=\E[\beta(X_i)\I[h_i\ge h]]$. By Condition~\ref{cond:identification}.\ref{id:nomass}, $\beta_0=\E[\beta(X_i)\I[h_i>0]]$, whence for $h<\bar\delta$
\[
    \normm{\E\!\left[\beta(X_i)\I[h_i\ge h]\right]-\beta_0}
    = \normm{\E\!\left[\E[\beta(X_i)\mid d_i]\,\I[0<h_i< h]\right]}
    \le C\,P(0<h_i<h)\;\longrightarrow\;0
\]
as $h\downarrow0$, using Condition~\ref{cond:identification}.\ref{id:bounded}. This is the second statement in (\ref{eq:id_degenerate}).
\end{proof}

% \section{Estimation}
% We follow a three-step procedure:

% \textbf{Step 1:} Estimate $\delta_0$ by Kernel regression when $T=p$:
% \begin{align}\label{eq:delta_estimation}
%     (\hat{\delta}_0,\dots,\hat{\delta}_{m}):=\min_{\gamma_0,\dots,\gamma_m} \sum_{i=1}^n \normm{Y_i^{*}-\sum_{k=0}^m h_i^k W_i^{*} \gamma_k}^2 K(h_i/h_0)
% \end{align}
% where $K(\cdot)$ is a kernel function with support $[0,1]$ and satisfies $\int_{0}^{1} K(u)du=1$. $h_0$ is the threshold below which the "Stayers" are defined.

% \textbf{Step 2:} Estimate $g_{\beta}(h)=\E[\beta(X_i)|h_i=h]$ of dimension $p$:
% \begin{align}\label{eq:g_estimation}
% \begin{aligned}
%     (\hat{\gamma}_0',\dots,\hat{\gamma}_m'):=&\min_{\gamma_0,\dots,\gamma_m} \sum_{i=1}^n \normm{\hat{\beta}(X_i)-\sum_{k=0}^m (h_i-h_0)^k \gamma_k}^2 K((h_i-h_0)/b)\\
%     \hat{g}_{\beta}(h_i) := &\sum_{k=0}^m \hat{\gamma}_k (h_i-h_0)^k
% \end{aligned}
% \end{align}
% where $\hat{\beta}(X_i) = (X_i'X_i)^{-1}X_i'(Y_i-W_i\hat{\delta_0})$ is an estimator of $\beta(X_i)$, $b$ is the bandwidth. We can choose $b=ch_0$, with $c>0$.

% \textbf{Step 3:} Estimate $\beta_0$ by
% \begin{align}\label{eq:beta0_estimation}
%     \hat{\beta}_0 = \frac{1}{n}\sum_{i=1}^n \hat{\beta}(X_i)\I[h_i\geq h_0] + \frac{1}{n}\sum_{i=1}^n \hat{g}_{\beta}(h_i)\I[h_i< h_0]
% \end{align}

% Note that letting $m=0$ and $K(\cdot)$ be the uniform kernel, we have the time drift estimator $\hat{\delta}_0 = \frac{1}{n}\sum_{i=1}^n W_i^{*\prime} Y_i^{*}\I[h_i< h_0]/ \frac{1}{n}\sum_{i=1}^n W_i^{*\prime}W_i^{*} \I[h_i< h_0]$, the same as the one in \cite{grahampowell}

\section{Estimation}\label{sec:estimation}
Define $\mathcal{A}(h)=\E[W_i^{*\prime}W_i^{*}\mid h_i=h]$, $\mathcal{B}(h)=\E[W_i^{*\prime}Y_i^{*}\mid h_i=h]$, $g_y(h)=\mathcal{A}(h)^{-1}\mathcal{B}(h)$, and $g_{\beta}(h) = \E[\beta(X_i)\mid h_i=h]$.
Proposition~\ref{prop:identification} identifies $\delta_0$ as the boundary value $g_y(0)$ of the
function $g_y(h)$, and $\beta_0$ as an average of $\beta(X_i)$ over a population with $h_i>0$.
This suggests a three-step procedure: estimate $g_y(0)$ by a local polynomial fit over the
stayers; use the resulting $\hat{\delta}_0$ to form unit-specific coefficients $\hat{\beta}(X_i)$
for the movers, and fit $\E[\beta(X_i)\mid h_i=h]$ near the threshold; then average,
imputing $\hat{g}_\beta(h_i)$ for the stayers. The steps are described in detail below.

\medskip\noindent
\textbf{Step 1: the aggregate time effects.}
Estimate $\delta_0$ by a local polynomial regression of $Y_i^{*}$ on $W_i^{*}$ at the boundary point $h=0$:
\begin{align}\label{eq:delta_estimation}
    (\hat{\delta}_0,\dots,\hat{\delta}_{m}):=\arg\min_{\gamma_0,\dots,\gamma_m}\ \sum_{i=1}^n \normm{Y_i^{*}-\sum_{k=0}^m h_i^k W_i^{*} \gamma_k}_2^2\, K(h_i/h_0),
    \qquad \gamma_k\in\R^{p(T-1)},
\end{align}
where $K$ is a kernel function that is nonnegative, supported on $[0,1]$, and normalised by $\int_0^1K(u)\,du=1$. Only units with $h_i\le h_0$ receive positive weight, so $h_0$ is the threshold below which the stayers are defined. By (\ref{eq:adj_model}), $\E[W_i^{*\prime}Y_i^{*}\mid h_i=h]=\mathcal{A}(h)g_y(h)$ with $g_y(0)=\delta_0$, so (\ref{eq:delta_estimation}) fits an $m$th-order Taylor expansion of $g_y$ at the origin. Only the intercept $\hat\delta_0$ estimates a parameter of interest; $\hat\delta_1,\dots,\hat\delta_m$ estimate $g_y^{(k)}(0)/k!$ and serve only to absorb the curvature of $g_y$ that would otherwise contaminate the intercept. Since the polynomial terms enter multiplied by $W_i^{*}$, (\ref{eq:delta_estimation}) is an ordinary weighted least squares fit of $Y_i^{*}$ on the $(m+1)p(T-1)$ regressors $(W_i^{*},h_iW_i^{*},\dots,h_i^mW_i^{*})$, and is computed as such.

\medskip\noindent
\textbf{Step 2: the conditional average coefficient near the boundary.}
For any unit with $h_i>0$ the matrix $X_i$ is invertible, and
\[
    \hat{\beta}(X_i) = (X_i'X_i)^{-1}X_i'\left(Y_i-W_i\hat{\delta}_0\right)=d_i^{-1}X_i^{*}\left(Y_i-W_i\hat\delta_0\right)
\]
estimates $\beta(X_i)$ up to the first-step error. Its precision deteriorates as $h_i\downarrow0$, which is why these units cannot simply be averaged. We instead use them to estimate $g_{\beta}(h)=\E[\beta(X_i)\mid h_i=h]$, a $p$-vector, by a second local polynomial fit centred at the threshold:
\begin{align}\label{eq:g_estimation}
\begin{aligned}
    (\hat{\gamma}_0',\dots,\hat{\gamma}_m'):=&\arg\min_{\gamma_0,\dots,\gamma_m}\ \sum_{i=1}^n \normm{\hat{\beta}(X_i)-\sum_{k=0}^m (h_i-h_0)^k \gamma_k}_2^2\, K\!\left((h_i-h_0)/b\right)\\
    \hat{g}_{\beta}(h) := &\sum_{k=0}^m \hat{\gamma}_k (h-h_0)^k,
\end{aligned}
\end{align}
where $b$ is a second bandwidth, tied to the first by $b=ch_0$ for some $c>0$. Since $K$ is supported on $[0,1]$, the fit uses only units with $h_i\in[h_0,(1+c)h_0]$: the movers closest to the threshold, for whom $\hat\beta(X_i)$ is noisiest but whose conditional mean is closest to the stayer region. The fitted polynomial is then evaluated at $h<h_0$, where no unit contributes to (\ref{eq:g_estimation}), so $\hat g_\beta$ is an extrapolation across the threshold, licensed by the smoothness of $g_\beta$.

\medskip\noindent
\textbf{Step 3: the average partial effect.}
Combine the two regions,
\begin{align}\label{eq:beta0_estimation}
    \hat{\beta}_0 = \frac{1}{n}\sum_{i=1}^n \hat{\beta}(X_i)\I[h_i\geq h_0] + \frac{1}{n}\sum_{i=1}^n \hat{g}_{\beta}(h_i)\I[h_i< h_0].
\end{align}
Movers contribute their own coefficient estimate; stayers, for whom $\hat\beta(X_i)$ is undefined or wildly imprecise, contribute the imputed value $\hat g_\beta(h_i)$. The estimator is thus an imputation estimator rather than a trimmed mean, and it targets $\E[\beta(X_i)]$ directly rather than through a limit.

\section{Asymptotic Theory}\label{sec:theory}
\subsection{Main result}

We first collect the notation used in the limit distributions. Define the unit-specific
regression error and the coefficient deviation
\[
    v_i=(X_i'X_i)^{-1}X_i'\epsilon_i=d_i^{-1}X_i^{*}\epsilon_i,
    \qquad
    \varsigma_i = \beta(X_i)-g_{\beta}(h_i),
\]
and the associated conditional second moments
\[
    \Sigma_v(h) = h^2\,var(v_i\mid h_i=h)=\E\!\left[X_i^{*}\epsilon_i\epsilon_i'X_i^{*\prime}\mid h_i=h\right],
    \qquad
    \Sigma_\beta(h) = var\!\left(\beta(X_i)\mid h_i=h\right)=\E[\varsigma_i\varsigma_i'\mid h_i=h].
\]
The scaling by $h^2$ in $\Sigma_v$ is what makes it bounded: $var(v_i\mid h_i=h)$ itself diverges
like $h^{-2}$ as $h\downarrow0$. For the first step, let
\[
    \zeta_i = Y_i^{*}-W_i^{*}g_y(h_i),\quad\Omega(h)=\E[W_i^{*\prime}\zeta_i\zeta_i'W_i^{*}\mid h_i=h]\in\R^{p(T-1)\times p(T-1)},
    \quad
    g_W(d) = \E[W_i^{*}\mid d_i=d]\in\R^{p\times p(T-1)}.
\]
Finally, write $\nu(u)=(1,u,\dots,u^m)'\in\R^{m+1}$, $\mathcal{U}(u)=\nu(u)\nu(u)'$, and let $e_1=(1,0,\dots,0)'\in\R^{m+1}$ be
the first unit vector of $\R^{m+1}$. The kernel constants entering the limits are
\begin{align}\label{eq:kernel_constants}
\begin{aligned}
    \mathcal{S}_m &= \alpha\int_{0}^{1}\mathcal{U}(u)K(u)u^{a}du,
    &\mathcal{T}_m &= \alpha\int_{0}^{1}\mathcal{U}(u)K^{2}(u)u^{a}du,
    &\mathcal{Q}_m &= \alpha\int_{0}^{1}\mathcal{U}(u)K(u)(1+cu)^{a}du,\\
    \tau_m &= \Big(\alpha(-1/c)^{k}B(k+1,a+1)\Big)_{k=0}^{m},
    &\omega_m &= \mathcal{Q}_m^{-1}\tau_m,
    &a_{+}(u) &= c^{-1}\nu(u)'K(u)\omega_m,
\end{aligned}
\end{align}
with $B(\cdot,\cdot)$ the Beta function, and
$\Xi_0 = \lim_{h\downarrow0}\E\!\left[d_i^{-1}W_i^{*}\I[h_i\ge h]\right]$ is the matrix through
which the first-step error propagates into $\hat\beta_0$. Assume the following conditions.

\begin{condition}\label{cond:Regularity}\leavevmode
    \begin{enumerate}
        \item\label{reg:iid} The data $\{(Y_i,X_i)\}_{i=1}^n$ are i.i.d., and $T=p$.
        \item\label{reg:bounded} $\normm{X_i}_F\leq C$ and $\normm{\beta(X_i)}_2\leq C$ almost surely.
              
        \item\label{reg:continuity} $\Sigma_v(h)$ and $\Sigma_\beta(h)$ are continuous at $h=0$.
        \item\label{reg:moment} $\E[\|\epsilon_i\|_2^{2+\delta}\mid X_i]\leq C$ for some $\delta>0$.
    \end{enumerate}
\end{condition}

\noindent
Condition~\ref{cond:Regularity}.\ref{reg:bounded} implies $\normm{W_i^{*}}_F\le C$, $var(\epsilon_i\mid X_i)=\sigma^2I_T$, $\normm{\Sigma_v(h)}_F\le C$, and $\normm{\Sigma_\beta(h)}_F\leq C$. It is an assumption of convenience; it can be replaced by moment bounds at the cost of
longer proofs. What it does \emph{not} deliver is boundedness of the conditional variance of
$\hat\beta(X_i)$ itself, which is of order $h_i^{-2}$ near the stayer boundary; this is the source
of the factor $h_0^{a-1}$ in the second-step variance. Condition~\ref{cond:Regularity}.\ref{reg:moment} supplies the Lyapunov condition for the triangular-array central limit theorems used in both steps.

\begin{condition}\label{cond:Smoothness}\leavevmode
    \begin{enumerate}
        \item\label{sm:density} The density of $d_i$ satisfies $f_D(d)=|d|^{a}q_D(d)$ with
              $q_D(0)=\alpha/2\neq 0$, $a>-1$, and $q_D\in C^{1}[-\bar{h},\bar{h}]$.
        \item\label{sm:functions} $\mathcal{A}(h)$, $\mathcal{B}(h)$, $\Omega(h)$ and $g_{\beta}(h)$
              belong to $C^{m+1}[0,\bar{h}]$, and $g_W(d)\in C^{1}[-\bar{h},\bar{h}]$. $\mathcal{A}(0^{+})$ exists and is invertible. $\mathcal{B}(0^{+})$ exists and is finite.
    \end{enumerate}
\end{condition}

\noindent
Condition~\ref{cond:Smoothness}.\ref{sm:density} implies that the density $f$ of $h_i=|d_i|$
satisfies $f(h)=f_D(h)+f_D(-h)=h^{a}\left(q_D(h)+q_D(-h)\right)$, so that $f(h)/h^{a}\to\alpha>0$
as $h\downarrow0$ and $f(h)\lesssim h^{a}$ on $(0,\bar h]$. The exponent $a$ is the single parameter governing how the stayer
population thins out at the boundary: $\Pr(h_i\le h)\asymp h^{a+1}$, so the first step is
effectively based on $nh_0^{a+1}$ observations. A larger $a$ means a sparser stayer population and
a less precise $\hat\delta_0$; a smaller $a$ means a denser one, and correspondingly more
observations with near-singular design matrices. The case
$a=0$, in which $f_D$ is bounded away from zero and infinity at the origin, is the configuration
assumed throughout \cite{grahampowell}. Condition~\ref{cond:Smoothness}.\ref{sm:functions} implies $g_y\in C^{m+1}[0,\bar h]$ and asks for $m+1$ derivatives, matching the order of the local polynomial fits; note that $g_W$ is required
to be smooth \emph{across} $d=0$, which is what licenses the $\pm d$ cancellation underlying the
existence of $\Xi_0$. Section~\ref{sec:primitive} gives primitive conditions on the joint density
of $(X_i,\epsilon_i)$ under which Condition~\ref{cond:Smoothness} holds.

\begin{condition}\label{cond:Kernel}\leavevmode
    \begin{enumerate}
        \item\label{ker:kernel} $K$ is nonnegative, once continuously differentiable, supported on
              $[0,1]$, normalised by $\int_0^1K(u)\,du=1$, and strictly positive
              on a subset of $[0,1]$ of positive Lebesgue measure.
        \item\label{ker:bandwidth} The bandwidths satisfy $b/h_0=c>0$.
        \item\label{ker:rates} $nh_0^{1-a}\to\infty$, $nh_0^{1+a}\to\infty$, and
              $nh_0^{a+2m+3}\to 0$.
    \end{enumerate}
\end{condition}

\noindent
The requirement that $K$ be strictly positive on a set of positive measure is what makes the moment
matrices $\mathcal{S}_m$ and $\mathcal{Q}_m$ in (\ref{eq:kernel_constants}) positive definite, since
each is a Gram matrix of the powers $1,u,\dots,u^m$ under a measure with density proportional to
$K$. The three requirements in Condition~\ref{cond:Kernel}.\ref{ker:rates} have distinct roles:
$nh_0^{1+a}\to\infty$ ensures that the stayer window contains enough units for the first step;
$nh_0^{1-a}\to\infty$ ensures that the near-boundary movers, whose errors are inflated by
$h_i^{-1}$, still vanish asymptotically; and $nh_0^{a+2m+3}\to0$ is the undersmoothing condition
$\sqrt{nh_0^{a+1}}\,h_0^{m+1}\to0$, which removes the bias of the local polynomial fit from the
limit distribution. Writing $h_0\asymp n^{-\gamma}$, the three hold jointly whenever
\[
    \frac{1}{a+2m+3}\;<\;\gamma\;<\;\frac{1}{1+|a|},
\]
an interval that is nonempty for every $a>-1$ and every $m\ge0$; in particular
$h_0\to0$ is implied and need not be assumed separately. Pushing $\gamma$ to its lower limit gives a
convergence rate approaching $n^{-\frac{1}{2}\left(1-\frac{1+a}{a+2m+3}\right)}$ for $a\ge 0$ and $n^{-\frac{1}{2}\left(1-\frac{1-a}{a+2m+3}\right)}$ for $a\le 0$, which at $a=0$
is $n^{-(m+1)/(2m+3)}$.

\begin{theorem}\label{theorem:asymptotic}
Suppose Conditions \ref{cond:Regularity}--\ref{cond:Kernel} hold. Let
$\phi_1 = e_1'\mathcal{S}_m^{-1}\mathcal{T}_m\mathcal{S}_m^{-1}e_1$,
$\Lambda_0 = \mathcal{A}(0)^{-1}\Omega(0)\mathcal{A}(0)^{-1}$, and, when $a<1$,
\begin{align}\label{eq:Upsilon_def}
    \Upsilon_0 := \alpha\,\varphi_0\,\Sigma_v(0),
    \qquad
    \varphi_0 := \int_{1}^{\infty}t^{a-2}
      \Big(1+a_{+}\big(\tfrac{t-1}{c}\big)\Big)^{2}dt
    = c\int_{0}^{1}(1+cu)^{a-2}\left(1+a_{+}(u)\right)^{2}du
      + \frac{(1+c)^{a-1}}{1-a}\;\in(0,\infty).
\end{align}
Then the estimators $\hat{\beta}_0$ and $\hat{\delta}_0$ defined in
equations~(\ref{eq:beta0_estimation}) and (\ref{eq:delta_estimation}) satisfy
the following.
\begin{enumerate}
    \item[(i)] \emph{(Case $a>0$: the first step dominates.)}
    \begin{align*}
        \sqrt{nh_0^{1+a}}\left(\hat{\delta}_0-\delta_0\right)
        \;\to_d\; N\!\left(0,\;\phi_1\Lambda_0\right),\quad \text{and} 
        \quad \sqrt{nh_0^{1+a}}\left(\hat{\beta}_0-\beta_0\right)
        \;\to_d\; N\!\left(0,\;\Xi_0\,\phi_1\Lambda_0\,\Xi_0'\right),
    \end{align*}
    and the joint limit is degenerate: with
    $\mathcal{Z}\sim N(0,\phi_1\Lambda_0)$,
    \begin{align*}
        \sqrt{nh_0^{1+a}}
        \begin{pmatrix}\hat{\delta}_0-\delta_0\\ \hat{\beta}_0-\beta_0\end{pmatrix}
        \;\to_d\;
        \begin{pmatrix}I_{p(T-1)}\\ -\Xi_0\end{pmatrix}\mathcal{Z}.
    \end{align*}
 
    \item[(ii)] \emph{(Case $a=0$: both steps contribute at the common rate
    $\sqrt{nh_0}$.)}
    \begin{align*}
        \sqrt{nh_0}
        \begin{pmatrix}\hat{\delta}_0-\delta_0\\ \hat{\beta}_0-\beta_0\end{pmatrix}
        \;\to_d\; N\!\left(0,\;
        \begin{pmatrix}
            \phi_1\Lambda_0 & -\phi_1\Lambda_0\Xi_0'\\
            -\Xi_0\phi_1\Lambda_0 & \Xi_0\phi_1\Lambda_0\Xi_0'+\Upsilon_0
        \end{pmatrix}\right).
    \end{align*}
 
    \item[(iii)] \emph{(Case $-1<a<0$: the two estimators converge at different
    rates and are asymptotically independent.)}
    \begin{align*}
        \begin{pmatrix}
            \sqrt{nh_0^{1+a}}\left(\hat{\delta}_0-\delta_0\right)\\[2pt]
            \sqrt{nh_0^{1-a}}\left(\hat{\beta}_0-\beta_0\right)
        \end{pmatrix}
        \;\to_d\; N\!\left(0,\;
        \begin{pmatrix}\phi_1\Lambda_0 & 0\\ 0 & \Upsilon_0\end{pmatrix}\right).
    \end{align*}
\end{enumerate}
\end{theorem}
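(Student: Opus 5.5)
The plan is to linearize both estimators and derive a joint influence-function representation, then apply a Lyapunov triangular-array CLT; the three cases come from comparing the orders of the two leading stochastic terms. For Step 1, write $\hat\delta$-vector as the weighted least squares solution and substitute $Y_i^{*}=W_i^{*}g_y(h_i)+\zeta_i$. A Taylor expansion of $g_y$ to order $m$ at $0$ shows that the polynomial part is reproduced exactly, which leaves a remainder of order $h_0^{m+1}$. Standard moment calculations then give $n^{-1}h_0^{-(1+a)}\sum_i \nu(h_i/h_0)\nu(h_i/h_0)'\otimes W_i^{*\prime}W_i^{*}K(h_i/h_0)\to_p \mathcal{S}_m\otimes\mathcal{A}(0)$, using $f(h)/h^a\to\alpha$ and the continuity of $\mathcal{A}$. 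Hence
\[
\hat\delta_0-\delta_0=\frac{1}{nh_0^{1+a}}\sum_i e_1'\mathcal{S}_m^{-1}\nu(h_i/h_0)K(h_i/h_0)\,\mathcal{A}(0)^{-1}W_i^{*\prime}\zeta_i+O_p(h_0^{m+1})+o_p\big((nh_0^{1+a})^{-1/2}\big).
\]
The variance of the leading term is $\phi_1\Lambda_0/(nh_0^{1+a})$, computed from $\mathcal{T}_m$ and $\Omega(0)$. Undersmoothing, $nh_0^{a+2m+3}\to0$, makes the bias negligible.

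Next, decompose $\hat\beta_0-\beta_0$. Write $\hat\beta(X_i)=\beta(X_i)+v_i-d_i^{-1}X_i^{*}W_i(\hat\delta_0-\delta_0)$, and note that $X_i^{*}W_i=W_i^{*}$. Substituting this into both the mover sum and the Step 2 fit, the second-step fit is linear in its responses, so $\hat g_\beta(h)=\tilde g_\beta(h)+\text{(fit of $v_i+\varsigma_i$)}-\text{(fit of }d_i^{-1}W_i^{*})(\hat\delta_0-\delta_0)$. The proof then proceeds in four parts.

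\emph{(a) First-step propagation.} The coefficient multiplying $(\hat\delta_0-\delta_0)$ is $-\big[n^{-1}\sum_i d_i^{-1}W_i^{*}\I[h_i\ge h_0]+n^{-1}\sum_i \widehat{\text{fit}}(h_i)\I[h_i<h_0]\big]$. Here I need $g_W\in C^1$ across $0$ and the symmetric cancellation $d^{-1}(g_W(d)-g_W(-d))$ to show that this converges to $\Xi_0$. The imputation part is of order $P(h_i<h_0)\cdot h_0^{-1}\cdot O(h_0)$, which I expect to vanish.

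\emph{(b) Population-level bias.} The term $n^{-1}\sum_i \beta(X_i)\I[h_i\ge h_0]+n^{-1}\sum_i \tilde g_\beta(h_i)\I[h_i<h_0]-\beta_0$ is handled by a Taylor expansion of $g_\beta$ about $h_0$. Extrapolation error on $[0,h_0]$ is $O(h_0^{m+1})$ times a mass of order $h_0^{a+1}$, so the term is $O(h_0^{a+m+2})$ plus a centered $\sqrt n$ term. Both are negligible relative to $(nh_0^{1-a})^{-1/2}$ and $(nh_0^{1+a})^{-1/2}$.

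\emph{(c) The noise term $v_i$.} Mover units contribute $n^{-1}\sum_i v_i\I[h_i\ge h_0]$. The imputed units add $n^{-1}\sum_j \I[h_j<h_0]\,\nu(\cdot)'$ times the local polynomial weights applied to $v_i$ for $h_i\in[h_0,(1+c)h_0]$. Replacing the sum over $j$ by its expectation, $\alpha\int_0^{h_0}h^a\nu((h-h_0)/b)\,dh\approx h_0^{a+1}\tau_m$, and the denominator by $nb\,h_0^{a}\mathcal{Q}_m$, gives the weight $1+a_+((h_i-h_0)/b)$ for each near-boundary mover. The variance is $n^{-1}\E[h_i^{-2}\Sigma_v(h_i)(1+a_+)^2\I[h_i\ge h_0]]$. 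Substituting $t=h/h_0$ gives $h_0^{a-1}\alpha\varphi_0\Sigma_v(0)/n$, where finiteness of the tail integral needs $a<1$. For $a\ge1$ this term is instead $O(\log/n)$ or $O(1/n)$, which is negligible in case (i).

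\emph{(d) The term $\varsigma_i$.} This is $O_p(n^{-1/2})$ plus smaller terms, so it is negligible.

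Finally, stack the leading terms of Step 1 and (c). They are sums of independent mean-zero arrays supported on the disjoint sets $\{h_i\le h_0\}$ and $\{h_i\ge h_0\}$, so their cross-covariance vanishes. Lyapunov's condition follows from the $2+\delta$ moment bound together with $nh_0^{1+a}\to\infty$ and $nh_0^{1-a}\to\infty$. Comparing $(nh_0^{1+a})^{-1/2}$ with $(nh_0^{1-a})^{-1/2}$ yields the three cases: the first step dominates when $a>0$, both contribute at $\sqrt{nh_0}$ when $a=0$ (giving the stated covariance through $-\Xi_0$), and when $a<0$ the rates separate, giving the block-diagonal limit.

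The main obstacle I expect is step (a) together with the replacement of random denominators in (c). Showing $n^{-1}\sum_i d_i^{-1}W_i^{*}\I[h_i\ge h_0]\to_p\Xi_0$ when $a\le0$ is delicate because its variance is $n^{-1}h_0^{a-1}$. That variance is only $o(1)$ thanks to $nh_0^{1-a}\to\infty$, which is exactly what that rate condition is for. I would first bound the variance directly, then control the bias by the $C^1$ cancellation of $g_W$.
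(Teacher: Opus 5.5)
Your proposal is correct and follows essentially the same route as the paper. It uses the same Step-1 local polynomial linearization, the same principal-value argument for $\Xi_0$, the same influence function $v_i\bigl(\I[h_i\ge h_0]+a_{+}(u_i)\bigr)$ and the same disjoint-support joint CLT; the paper only groups your terms (a)--(d) by region (movers versus imputed stayers) rather than by source of error. One detail to tighten is that the $\pm d$ cancellation must be applied to the product $\psi(d)=g_W(d)q_D(d)$ rather than to $g_W$ alone, because $f_D$ need not be symmetric; this is covered by $q_D\in C^{1}$ in Condition~\ref{cond:Smoothness}.
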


\begin{remark}
The three cases trace out how the curvature parameter $a$ reallocates precision between the two
steps. When $a>0$ the stayer population is thin, so $\hat\delta_0$ is the imprecise ingredient and
its error dominates; $\hat\beta_0$ inherits the first-step limit distribution through $\Xi_0$, and
the two estimators are asymptotically perfectly dependent. When $a<0$ the stayer population is
thick, so $\hat\delta_0$ is comparatively well estimated, but the mover subsample is crowded with
near-singular design matrices and the second step becomes the binding constraint; the two
estimators then converge at different rates and are asymptotically independent. The case $a=0$
balances the two, and is the configuration studied by \cite{grahampowell}.

Setting $m=0$ and using the uniform kernel, our first step reduces to the aggregate-time-effect estimator of \cite{grahampowell}; if the imputation step is also omitted, the APE estimator reduces to their trimmed mover mean. Our higher-order local polynomial reduces the boundary bias from $O(h_0)$ to $O(h_0^{m+1})$, while the extrapolation step restores the contribution of the trimmed observations. At $a=0$, this changes the rate from the $n^{-1/3}$ benchmark toward $n^{-(m+1)/(2m+3)}$.

%\cite{Sasaki_2026} likewise use local polynomials to recover the contribution removed by trimming, but their correction is estimated in the signed determinant from the stayer and slow-mover window $|d_i|\le h_N$ and then added to the trimmed mover mean. The derivative terms are learned from nonzero slow movers in that window, while exact stayers enter through its mass and moments. We instead estimate $g_\beta(h)$ from movers just outside the threshold and extrapolate it into the stayer region before imputing each trimmed observation. Their theory permits exact stayers and a broad class of crowded or asymmetric continuous distributions; ours excludes a point mass but covers $f_D(d)=|d|^a q_D(d)$ for every $a>-1$, including the sparse case $a>0$, and gives an explicit rate decomposition across the three regimes.

\end{remark}

\subsection{Primitive Conditions for Smoothness}\label{sec:primitive}

Every random variable appearing in the model is a measurable function of
$(X_i,\epsilon_i)$, say $\mathcal{Y}(X_i,\epsilon_i)$. Since $X_i\in\R^{T\times p}$
has $Tp$ entries, we may reparametrise these entries as
$\left(x^{\bar p-1},d\right)$ with $\bar p:=T(p+1)$, where $d=d(X_i)=\det(X_i)$ is
the signed determinant and $x^{\bar p-1}\in\R^{\bar p-1}$ collects the remaining
coordinates together with $\epsilon_i$; any such variable is then written
$\mathcal{Y}\!\left(x^{\bar p-1},d\right)$.

Define the conditional expectation
\[
    g_{\mathcal{Y}}(u)\;:=\;\E\!\left[\mathcal{Y}\!\left(x^{\bar{p}-1},d_i\right)
      \,\middle|\,d_i=u\right]
    \;=\;\int \mathcal{Y}\!\left(x^{\bar{p}-1},u\right)\,
      f_{x^{\bar{p}-1}\mid d_i=u}\!\left(x^{\bar{p}-1}\mid u\right)\,
      d\!\left(x^{\bar{p}-1}\right).
\]
We impose the following conditions on a neighbourhood $[-h_{*},h_{*}]$ of the
origin.

\begin{condition}\label{cond:primitive}\leavevmode
\begin{enumerate}
\item\label{prim:factorization}
There exist $a>-1$ and a nonnegative function $q\!\left(x^{\bar{p}-1},u\right)$
such that the joint density of $\left(x^{\bar{p}-1},d(X_i)\right)$ satisfies
\[
    f_{x^{\bar{p}-1},d(X_i)}\!\left(x^{\bar{p}-1},u\right)
    \;=\; |u|^{a}\,q\!\left(x^{\bar{p}-1},u\right),
    \qquad u\in[-h_{*},h_{*}],
\]
and, for almost every $x^{\bar{p}-1}$, the map
$u\mapsto q\!\left(x^{\bar{p}-1},u\right)$ belongs to
$C^{m+1}\bigl([-h_{*},h_{*}]\bigr)$.

\item\label{prim:smoothY}
For almost every $x^{\bar{p}-1}$, the map
$u\mapsto \mathcal{Y}\!\left(x^{\bar{p}-1},u\right)$ belongs to
$C^{m+1}\bigl([-h_{*},h_{*}]\bigr)$, where $\mathcal{Y}$ represents elements of $\beta(X_i)$, $W_i^{*}$, $Y_i^{*}$.

\item\label{prim:envelope}
\[
\int
\underbrace{\Big(\max_{0\le k\le m+1}\ \sup_{u\in[-h_{*},h_{*}]}
\left|\partial_u^{k}\,q\!\left(x^{\bar{p}-1},u\right)\right|\Big)}_{=:\;\Theta_q\left(x^{\bar p-1}\right)}
\Big(1+
\underbrace{\max_{0\le k\le m+1}\ \sup_{u\in[-h_{*},h_{*}]}
\left|\partial_u^{k}\,\mathcal{Y}\!\left(x^{\bar{p}-1},u\right)\right|}_{=:\;\Theta_{\mathcal{Y}}\left(x^{\bar p-1}\right)}\Big)
d\!\left(x^{\bar{p}-1}\right)<\infty .
\]

\item\label{prim:nondegenerate}
\emph{Non-degeneracy at the boundary.}
\[
    \bar q(0)\;:=\;\int q\!\left(x^{\bar{p}-1},0\right)\,d\!\left(x^{\bar{p}-1}\right)\;>\;0 .
\]
\end{enumerate}
\end{condition}

\noindent
Condition~\ref{cond:primitive} restricts primitives only: the joint density of the reparametrised
data and the smoothness of $\mathcal{Y}$ in the determinant direction, $d_i$. Part~\ref{prim:factorization}
isolates the singular factor $|u|^{a}$ and requires what remains to be smooth across the origin,
which strengthens Condition~\ref{cond:Smoothness}.\ref{sm:density}. Part~\ref{prim:envelope} is the standard
hypothesis permitting the interchange of differentiation and integration, and
part~\ref{prim:nondegenerate} rules out the degenerate case for the function $q$. Note that no positivity of the density at $u=0$ is required: the factor $|u|^{a}$ may
vanish or diverge there.

\begin{lemma}\label{lemma:primitive_conditions}
Under Condition~\ref{cond:primitive} there exists $\bar h\in(0,h_{*}]$ such that
the following hold.
\begin{enumerate}
    \item The marginal density of $d(X_i)$ satisfies
    $f_{d(X_i)}(u)=|u|^{a}\bar q(u)$ with
    $\bar q\in C^{m+1}\bigl([-\bar h,\bar h]\bigr)$ and $\bar q(0)>0$; in
    particular $f_{d(X_i)}(u)/|u|^{a}\to \bar q(0)$ as $u\to 0$.
    \item $g_{\mathcal{Y}}\in C^{m+1}\bigl([-\bar h,\bar h]\bigr)$ and
    $\displaystyle\sup_{|u|\le \bar h}\bigl|g_{\mathcal{Y}}^{(k)}(u)\bigr|<\infty$
    for $k=0,1,\dots,m+1$, where $\mathcal{Y}$ represents elements of $\beta(X_i)$, $W_i^{*}$, $Y_i^{*}$.
\end{enumerate}
\end{lemma}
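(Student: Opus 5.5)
The plan is to obtain both parts by integrating out $x^{\bar p-1}$ from the factorised joint density in Condition~\ref{cond:primitive}.\ref{prim:factorization}. The singular factor $|u|^{a}$ does not depend on $x^{\bar p-1}$, so it can be pulled outside every integral. Only the smooth factor $q$ is then differentiated under the integral sign.

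\textbf{Part 1.} Integrating the joint density over $x^{\bar p-1}$ gives, for $u\in[-h_*,h_*]$,
\[
f_{d(X_i)}(u)=|u|^{a}\int q\!\left(x^{\bar p-1},u\right)d\!\left(x^{\bar p-1}\right)=|u|^{a}\bar q(u),
\]
where $\bar q(u)$ denotes the integral. To show $\bar q\in C^{m+1}([-h_*,h_*])$ I would differentiate under the integral $k\le m+1$ times. The justification is dominated convergence combined with the mean value theorem applied to difference quotients. The $u$-free dominating function is $\Theta_q$, which is integrable because $\Theta_q\le\Theta_q(1+\Theta_{\mathcal Y})$ and Condition~\ref{cond:primitive}.\ref{prim:envelope} holds. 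The same domination gives continuity of $\bar q^{(k)}$, since each $\partial_u^k q$ is continuous in $u$ for a.e.\ $x^{\bar p-1}$.

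Condition~\ref{cond:primitive}.\ref{prim:nondegenerate} gives $\bar q(0)>0$. Because $\bar q$ is continuous, I can choose $\bar h\in(0,h_*]$ with $\bar q(u)\ge\bar q(0)/2$ on $[-\bar h,\bar h]$. The limit $f_{d(X_i)}(u)/|u|^{a}=\bar q(u)\to\bar q(0)$ follows immediately.

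\textbf{Part 2.} For $0<|u|\le\bar h$ the conditional density is the joint density divided by the marginal, and the factor $|u|^{a}$ cancels:
\[
f_{x^{\bar p-1}\mid d_i=u}\!\left(x^{\bar p-1}\mid u\right)=\frac{q\!\left(x^{\bar p-1},u\right)}{\bar q(u)}.
\]
Hence $g_{\mathcal Y}(u)=N(u)/\bar q(u)$ with $N(u):=\int\mathcal Y\!\left(x^{\bar p-1},u\right)q\!\left(x^{\bar p-1},u\right)d\!\left(x^{\bar p-1}\right)$. The right-hand side is also well defined at $u=0$, so I take it as the definition of $g_{\mathcal Y}(0)$. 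This is the continuous extension, and it is the version relevant to the two-sided limits used elsewhere in the paper.

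By Leibniz's rule, $\partial_u^k(\mathcal Y q)=\sum_{j\le k}\binom{k}{j}\partial_u^j\mathcal Y\,\partial_u^{k-j}q$. Each term is bounded in absolute value by $\Theta_{\mathcal Y}\Theta_q$, so the whole derivative is bounded by $2^{k}\Theta_q\Theta_{\mathcal Y}$, which is integrable by Condition~\ref{cond:primitive}.\ref{prim:envelope}. The dominated-convergence argument from Part~1 therefore gives $N\in C^{m+1}([-\bar h,\bar h])$, with $\sup|N^{(k)}|\le 2^{k}\int\Theta_q\Theta_{\mathcal Y}<\infty$.

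Since $\bar q\in C^{m+1}$ and $\bar q\ge\bar q(0)/2>0$ on $[-\bar h,\bar h]$, the reciprocal $1/\bar q$ is also $C^{m+1}$ with bounded derivatives. This follows from the Faà di Bruno formula, or by induction on $(1/\bar q)'=-\bar q'/\bar q^2$. The product $g_{\mathcal Y}=N\cdot(1/\bar q)$ is then $C^{m+1}$ with each derivative bounded on the compact interval. Applying this entrywise to $\beta(X_i)$, $W_i^{*}$ and $Y_i^{*}$ completes the argument.

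\textbf{Main obstacle.} The only delicate point is the rigorous interchange of differentiation and integration. The issue is that $\Theta_{\mathcal Y}$ alone need not be integrable; only the product $\Theta_q(1+\Theta_{\mathcal Y})$ is assumed integrable. I will handle this by always differentiating the product $\mathcal Y q$, never $\mathcal Y$ separately, and by bounding difference quotients through the mean value theorem with the envelope $2^{k}\Theta_q(1+\Theta_{\mathcal Y})$.
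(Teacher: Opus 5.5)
Your proposal is correct and follows essentially the same route as the paper. Both proofs integrate out $x^{\bar p-1}$ to get $f_{d(X_i)}(u)=|u|^{a}\bar q(u)$, cancel $|u|^{a}$ in the conditional density to write $g_{\mathcal Y}=N_q/\bar q$, differentiate under the integral via the Leibniz rule with envelope $2^{k}\Theta_q(1+\Theta_{\mathcal Y})$, and finish with a positive lower bound on $\bar q$ near the origin and the quotient rule. Your explicit choice of $g_{\mathcal Y}(0):=N_q(0)/\bar q(0)$ as the continuous version is a small clarification of a point at $u=0$ that the paper's proof passes over.
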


\begin{remark}
Part 1 delivers Condition~\ref{cond:Smoothness}.\ref{sm:density} with $q_D=\bar q$ and
$\alpha=2\bar q(0)$. Part 2 is stated on the \emph{two-sided} interval
$[-\bar h,\bar h]$, which is what Condition~\ref{cond:Smoothness}.\ref{sm:functions} requires of
$g_W$. The smoothness order is $m+1$
so as to match Condition~\ref{cond:Smoothness}.\ref{sm:functions}.
Taking $\mathcal{Y}$ scalar valued entails no loss: the vector- and
matrix-valued objects of interest --- $g_\beta$, $g_W$, $\mathcal{A}$,
$\mathcal{B}$ and $\Omega$ --- are obtained by applying the lemma to each entry
separately.

Finally, the objects conditioned on $h_i=|d_i|$ rather than on $d_i$ --- namely
$\mathcal{A}$, $\mathcal{B}$, $\Omega$ and $g_\beta$ --- are covered by the same
argument applied to the symmetrised numerator and denominator: for
$h\in(0,\bar h]$,
\[
    \E\!\left[\mathcal{Y}\mid h_i=h\right]
    = g_{\mathcal{Y}}(h)\frac{\bar q(h)}{\bar q(h)+\bar q(-h)} + g_{\mathcal{Y}}(-h)\frac{\bar q(-h)}{\bar q(h)+\bar q(-h)},
\]
because $f_{d(X_i)}(u)=|u|^{a}\bar q(u)$ assigns the two branches $u=\pm h$
weights proportional to $\bar q(h)$ and $\bar q(-h)$. Both $g_{\mathcal{Y}}$ and $\bar q$ are
$C^{m+1}$ on $[-\bar h,\bar h]$, and the denominator is bounded away from zero near the origin by
part (1), so $\E\!\left[\mathcal{Y}\mid h_i=h\right]\in C^{m+1}[0,\bar h]$.
\end{remark}

\section{Simulation study}\label{sec:simulations}
\subsection{A Simple Simulation Study}\label{sec:simplesim}
In this section, a known data-generating process is used to demonstrate the properties of the local linear estimator (we set $m=1$). We consider a short-panel setting with $T=2$ and a large cross-section, $N=10{,}000$. The regressors are generated as $X_{i1} \sim \mathcal{N}(0,1)$ and $X_{i2} \sim \mathcal{N}(1,1)$, and we define the key scalar $h_i = |X_{i2} - X_{i1}|$, which coincides with $|\det(X_i)|$ in the two-period case, consistent with the convention used throughout the paper. Individual coefficients are allowed to be correlated with regressors through $\beta_{i1} = 10h_i + \varepsilon_{i1}$ and $\beta_{i2} = 10h_i + \varepsilon_{i2}$, where $\varepsilon_{ij}$ are independent standard normal shocks. Outcomes are generated according to $Y_{i1} = \beta_{i1} + \beta_{i2}X_{i1} + u_{i1}$ and $Y_{i2} = \beta_{i1} + \beta_{i2}X_{i2} + u_{i2}$, with $u_{it}$ also standard normal. The parameter of interest is the average coefficient $\beta_0 = \mathbf{E}[\beta_i] = 10\,\mathbf{E}[h_i]\cdot(1,1)' \approx (14.00,14.00)'$. In each replication, the trimming threshold $h_0$ is set at the 10th percentile of $h_i$. The trimmed estimator of Graham and Powell (2012) averages the individual coefficients of the units with $h_i>h_0$. The local linear estimator fits $g_\beta(h)$ among those units with an Epanechnikov kernel, bandwidth $b = 30h_0$, and inverse-variance weights $h_i^2$, and imputes $\hat g_\beta(h_i)$ for the trimmed units. The simulation is repeated $5{,}000$ times to compute the mean squared error (MSE) and bias for both estimators, and Tables~\ref{tab:simpleresult} and~\ref{tab:varyingN} report them for the coefficient on $X_{it}$, $\beta_{i2}$. The true value $\beta_0$ is computed once by Monte Carlo from a draw of $2{,}000{,}000$ units. Note that, for a fair comparison with the Graham--Powell estimator, the density of $h_i$ is bounded away from zero at the origin, which corresponds to the case $a=0$ in the asymptotic theory.

\begin{table}[ht]
\centering
\caption{Finite Sample Performance ($N=10{,}000$)}
\label{tab:simpleresult}
\begin{tabular}{lcc}
\hline\hline
Estimator & MSE & Bias \\
\hline
Graham--Powell & 2.0383 & 1.4231 \\
Local Linear   & 0.0110 & $-0.0049$ \\
\hline\hline
\end{tabular}
\end{table}

Table~\ref{tab:simpleresult} reports results that reflect the ability of local linear smoothing to correct the first-order boundary bias that affects kernel estimators based on trimming. To further assess the finite-sample behavior of the two estimators, Table~\ref{tab:varyingN} reports results across cross-sectional sample sizes ranging from $N=250$ to $N=5000$ confirming that the local linear correction's bias elimination, not merely a variance improvement, is what drives its advantage.

\begin{table}[ht]
\centering
\caption{Finite Sample Performance Across Sample Sizes}
\label{tab:varyingN}
\begin{tabular}{lcccccc}
\hline\hline
& \multicolumn{3}{c}{Graham--Powell} & \multicolumn{3}{c}{Local Linear} \\
\cmidrule(lr){2-4} \cmidrule(lr){5-7}
$N$ & Bias & Var & MSE & Bias & Var & MSE \\
\hline
250   & 1.4232 & 0.5397 & 2.5650 & $-0.0003$ & 0.4535 & 0.4534 \\
500   & 1.4267 & 0.2622 & 2.2975 & \phantom{$-$}0.0002  & 0.2196 & 0.2196 \\
1000  & 1.4237 & 0.1292 & 2.1560 & $-0.0036$ & 0.1083 & 0.1083 \\
2000  & 1.4241 & 0.0631 & 2.0912 & $-0.0036$ & 0.0531 & 0.0531 \\
5000  & 1.4210 & 0.0254 & 2.0446 & $-0.0067$ & 0.0213 & 0.0213 \\
\hline\hline
\end{tabular}
\end{table}

\subsection{Empirical Study}

This section calibrates the empirical application of calorie demand among Nicaraguan households using the Red de Protección Social (RPS) balanced panel for 2000--2002. For $N$ households observed over $T=2$ periods, the outcome equation follows the correlated random-coefficient model
\begin{equation}\label{eq:sim_dgp}
    \log(\text{Cal}_{it}) = \beta_{i1}
        + \beta_{i2}\,\log(\text{Exp}_{it})
        + \varepsilon_{it},
\end{equation}
where $\text{Cal}_{it}$ is household calorie consumption and $\text{Exp}_{it}$ denotes weekly household food expenditure in
Nicaraguan Córdobas. The household-specific coefficients are generated as
\[\beta_{ij} = \beta_{0j}+ \lambda_j\,|\det(X_i)|+ \nu_{ij},\qquad j=1,2,\]
where $\det(X_i)=\log(\text{Exp}_{i1})-\log(\text{Exp}_{i2})$ and the $\nu_{ij}$ are independent normal shocks. We assume no time drift for now.

\subsection{Estimation Result}

\begin{table}[htbp]
\centering
\caption{OLS Regression of Individual Coefficient Estimates on the Determinant ($Y_0,Y_2$ Specification)}
\label{tab:determinant_regression}
\begin{tabular}{llccc}
\hline
Individual coefficient & Parameter & Estimate & $t$-statistic & $p$-value \\
\hline
$\hat{\beta}_{i1}$ (Intercept coefficient)
& Intercept ($\beta_{01}$)
& 6.3898
& 5.880
& 0.0000 \\

& Determinant slope ($\lambda_1$)
& $-3.8355$
& $-2.230$
& 0.0258\\

\hline

$\hat{\beta}_{i2}$ (Log expenditure coefficient)
& Intercept ($\beta_{01}$)
& 0.3143
& 2.820
& 0.0049 \\

& Determinant slope ($\lambda_2$)
& 0.3724
& 2.110
& 0.0348 \\

\hline
\end{tabular}

\vspace{0.2cm}
\begin{minipage}{0.95\textwidth}
\footnotesize
\textit{Notes:} The dependent variable in each regression is the estimated individual coefficient obtained from the Graham--Powell estimator. The explanatory variable is the determinant of the individual's regressor matrix, $\det(X_i)$. The regressions are estimated using only observations retained after trimming ($|\det(X_i)|>h$) at the 10th percentile. Both intercepts are significant at the 1\% level; both determinant slopes are significant at the 5\% level. In particular, the determinant has a significant negative effect on the estimated intercept coefficient ($\lambda_1<0$) and a significant positive effect on the estimated expenditure coefficient ($\lambda_2>0$). These results are obtained using the $Y_0,Y_2$ specification. The corresponding regressions using the $Y_1,Y_2$ specification do not yield statistically significant determinant effects (intercept coefficient: slope $-2.00$, $p=0.278$; log-expenditure coefficient: slope $0.18$, $p=0.349$).
\end{minipage}
\end{table}

These findings demonstrate that the determinant is not merely a numerical criterion for trimming observations but remains an economically important determinant of the individual coefficient estimates. Individuals whose regressor matrices are closer to singularity tend to exhibit systematically different estimated coefficients from those with larger determinants.

Importantly, this relationship is observed only when the estimator is constructed using the 2000 and 2002 observations $Y_0,Y_2$. Repeating the same exercise using the 2001 and 2002 observations $Y_1,Y_2$ does not produce a statistically significant relationship between the determinant and the estimated coefficients. This suggests that the influence of the determinant depends on the amount of within-individual variation in the regressors over the selected time periods. The larger time span between 2000 and 2002 generates greater variation in expenditure, leading to greater dispersion in the determinants and making their effect on the estimated coefficients more pronounced.

\subsection{Calibrated Simulation}
\label{sec:calibrated_sim}

The two specifications in Table~\ref{tab:determinant_regression} give two calibrated designs of exactly the form studied in Section~\ref{sec:simplesim}: one with a boundary slope that is statistically indistinguishable from zero ($Y_1,Y_2$), and one with a boundary slope that is significant ($Y_0,Y_2$). We use each, in turn, to calibrate $(\beta_{0j},\lambda_j)$ in the random-coefficient equation~\eqref{eq:sim_dgp} and resample household expenditure pairs from the RPS data, preserving their empirical joint distribution rather than imposing a parametric form.

\begin{table}[htbp]
\centering
\caption{Calibrated finite-sample performance, intercept coefficient}
\label{tab:calibrated_sim}
\begin{tabular}{lcccccc}
\hline\hline
& \multicolumn{3}{c}{Graham--Powell} & \multicolumn{3}{c}{Local Polynomial} \\
\cmidrule(lr){2-4} \cmidrule(lr){5-7}
$N$ & Bias & Var & MSE & Bias & Var & MSE \\
\hline
\multicolumn{7}{l}{\emph{Panel A: $Y_0,Y_2$ calibration (significant boundary slope)}} \\
250   & $-0.2284$ & 2.3515 & 2.4032 & $-0.0478$ & 2.5247 & 2.5264 \\
500   & $-0.1982$ & 1.1702 & 1.2093 & $-0.0155$ & 1.2551 & 1.2551 \\
1000  & $-0.2139$ & 0.5991 & 0.6447 & $-0.0316$ & 0.6392 & 0.6401 \\
2000  & $-0.2182$ & 0.3052 & 0.3527 & $-0.0374$ & 0.3266 & 0.3280 \\
5000  & $-0.2116$ & 0.1160 & 0.1607 & $-0.0266$ & 0.1218 & 0.1225 \\
\hline
\multicolumn{7}{l}{\emph{Panel B: $Y_1,Y_2$ calibration (flat boundary slope)}} \\
250   & $-0.1280$ & 2.3138 & 2.3297 & $-0.0328$ & 2.4220 & 2.4226 \\
500   & $-0.0554$ & 1.1756 & 1.1785 & \phantom{$-$}0.0248  & 1.2449 & 1.2453 \\
1000  & $-0.0669$ & 0.5842 & 0.5885 & \phantom{$-$}0.0211  & 0.6163 & 0.6166 \\
2000  & $-0.0674$ & 0.2999 & 0.3044 & \phantom{$-$}0.0189  & 0.3148 & 0.3151 \\
5000  & $-0.0787$ & 0.1156 & 0.1218 & \phantom{$-$}0.0068  & 0.1224 & 0.1224 \\
\hline\hline
\end{tabular}

\vspace{0.4em}
\begin{minipage}{0.92\textwidth}
\footnotesize \emph{Notes.} $5{,}000$ Monte Carlo replications per cell. $X_i$ resampled with replacement from the real $(\log\text{Exp}_{i,a},\log\text{Exp}_{i,b})$ pairs in the RPS panel; $(\beta_{0j},\lambda_j,\sigma_{\nu_j})$ from the corresponding row of Table~\ref{tab:determinant_regression}. Trimming and bandwidth as in Table~\ref{tab:varyingN}.
\end{minipage}
\end{table}

Table~\ref{tab:calibrated_sim} makes the contrast plain. Under the $Y_0,Y_2$ calibration, where the CMC genuinely has a slope to correct, the Graham--Powell estimator carries a persistent bias of about $-0.21$ to $-0.22$ across the entire sample-size range, while the local polynomial estimator's bias is smaller by roughly a factor of five to eight throughout and falls as low as $-0.015$. Under the $Y_1,Y_2$ calibration, by contrast, the Graham--Powell bias is itself much smaller, so the local polynomial estimator has very little to correct. The two conditions under which the estimators behave differently are the density of stayers and the curvature of $g_\beta(h)$ near the boundary. When the boundary slope is flat, the two estimators perform similarly; when the boundary slope is significant, the local polynomial estimator's bias correction is substantial and its MSE advantage is clear.

\newpage

% =====================================================================
%  Section for the main paper: T = p = 5 application to the full ENIA
%  census.  Requires: booktabs, amsmath, graphicx.
%  Figures: fig_scatter.pdf, fig_ghat.pdf
% =====================================================================

% =====================================================================
%  Section for the main paper: T = p = 5 application to the full ENIA
%  census.  Requires: booktabs, amsmath, graphicx.
%  Figures: fig_scatter.pdf, fig_ghat.pdf
% =====================================================================

% =====================================================================
%  Section for the main paper: T = p = 4 application to the ENIA census.
%  Requires: booktabs, amsmath, graphicx.
%  Figures: fig_scatter.pdf, fig_ghat.pdf
% =====================================================================

% =====================================================================
%  Section for the main paper: T = p = 4 application to the ENIA census.
%  Requires: booktabs, amsmath, graphicx.
%  Figures: fig_scatter.pdf, fig_ghat.pdf
% =====================================================================

% =====================================================================
%  Section for the main paper: T = p = 4 application to the ENIA census.
%  Requires: booktabs, amsmath, graphicx.
%  Figures: fig_scatter.pdf, fig_ghat.pdf
% =====================================================================

\section{Application: heterogeneous input elasticities in Chilean
manufacturing}
\label{sec:chile}

\subsection{Specification}

Production function estimation is the canonical setting in which a coefficient such as output elasticity
that is almost certainly heterogeneous is nevertheless restricted to be common
across units.\footnote{The typical model in industrial organization is \begin{equation}
    y_{i} = \beta_L l_i + \beta_K k_i +  w_i + \varepsilon_i  
\end{equation}
where $y$ is output, $l$ is labour, $k$ is capital, $w$ is productivity observed by firms but not by econometricians, and $\varepsilon$ is a random shock.} \citet{Raval2023}, for example, studies tests of firm markups using multiple flexible inputs. Methodologically, theoretical papers including \citet{OlleyPakes1996}, \citet{LevinsohnPetrin2003}, \citet{ACF2015}, and \citet{gandhi2020identification} use various identification assumptions to address the correlation between endogenous productivity and input choices, but nevertheless assume common elasticities.

Several papers allow production technologies to vary across firms. \citet{MairesseGriliches1990} estimate firm-specific production functions from long panels, while \citet{KasaharaSchrimpfSuzuki2023} model a finite mixture of latent production technologies. \citet{LiSasaki2024} identify firm-specific flexible-input elasticities from ex-ante cost shares, and \citet{Li2026TERC} estimates a heterogeneous Cobb--Douglas production function with time-varying endogenous random coefficients. We allow the productivity intercept and all input elasticities to differ across plants and to be correlated with their input histories, while a common time drift allows these coefficients to change over time. Specifically, we set up the gross-output production-function model
\begin{equation}
  y_{it} \;=\; X_{it}'(\beta_i + \delta_t) \;+\; \varepsilon_{it},
  \qquad
  X_{it} \;=\; (1,\; k_{it},\; l_{it},\; x_{it})',
  \qquad t = 1,\dots,4,
  \label{eq:klm}
\end{equation}
where $y$ is log real gross output, $k$ and $l$ are logs of the real capital
stock and employment, and $x$ is the log of real intermediate inputs,
constructed as the sum of deflated materials and energy. This is the
single-intermediate-input form used by \citet{gandhi2020identification}; aggregating materials
and energy is not a concession here but a choice we return to in
Section~\ref{sec:designsize}, since it materially improves what the design can
resolve.

Note that unobserved productivity is represented by the coefficient on the constant regressor. It, along with all
elasticities, may vary across plants and be correlated with the input
history. The vector $\delta_t$ is a common drift that shifts the entire
coefficient vector in period $t$, allowing the technology itself to move over
the window while $\beta_i$ carries the permanent plant-specific component. We
normalise $\delta_1 = 0$, so the estimand $\mathbf{E}[\beta_i]$ is the average
elasticity vector anchored at the first year of the window.

\subsection{Data and design}

We use the full Encuesta Nacional Industrial Anual, the annual census of
Chilean manufacturing plants with at least ten employees, covering 1979-1996:
86{,}186 plant-year observations on 10{,}927 plants. Real gross output, real
materials, real energy and the real capital stock (buildings, machinery and
vehicles, mid-year) are sectorally deflated series; the capital stocks are
those of \citet{Greenstreet2007}, built by a permanent inventory procedure
separately for structures, equipment and transportation. Labour is the total
employment headcount. We access the census through the replication package of
\citet{Raval2023}, and depart from his construction in using the mid-year real
capital stock rather than a rental-cost aggregate, since the auxiliary
depreciation and interest-rate files his code merges are not needed for our
purpose.

Estimating \eqref{eq:klm} requires plants observed in four \emph{consecutive}
years with strictly positive output and all inputs. Coverage is remarkably flat
across the sample period, with every window from 1987 onwards retaining between
3{,}250 and 3{,}414 plants; we use 1992--1995, which retains the most
(3{,}414) and sits in the post-stabilisation period rather than spanning the
debt crisis. Because each plant contributes a single observation, the units are
independent and no clustering is required. Table~\ref{tab:design} reports the
design diagnostics of Section~2.

\begin{table}[t]
\centering
\caption{Design diagnostics, ENIA 1992--1995}
\label{tab:design}
\begin{tabular}{lc}
\toprule
Plants ($N$), each observed $T = 4$ years        & 3{,}414 \\
Coefficients ($p$)                               & 4 \\
Drift parameters ($Q = (T-1)p$)                  & 12 \\
$\pi_0 = \Pr(\det X_i = 0)$ (exact atom)         & 0.0076 \\
Median $h_i$                                     & $5.97 \times 10^{-3}$ \\
$h_i$ at the 10th percentile                     & $5.05 \times 10^{-4}$ \\
Median $\operatorname{cond}(X_i)$                & $5.8 \times 10^{3}$ \\
$\hat a$ in $f(h) \sim h^{a}$                    & $-0.077$ \\
\bottomrule
\end{tabular}

\vspace{0.6em}
\begin{minipage}{0.92\textwidth}
\footnotesize \emph{Notes.} $\hat a$ is the slope of $\log \hat f(h)$ on
$\log h$ over the lowest three deciles of $h$, from a reflected Epanechnikov
density estimate.
\end{minipage}
\end{table}

Two features matter. First, $\hat a \approx 0$: the density of the determinant
is bounded away from zero at the boundary, placing it in the thick-boundary regime $a < 1$ of
Theorem~1 in which the trimmed estimator converges more slowly than $\sqrt N$.
Second, the point mass at $\det X_i = 0$ is negligible. Requiring a
$4 \times 4$ design to be exactly singular is far more demanding than requiring
one input to be unchanged between two adjacent years, and fewer than $1\%$ of plants
satisfy it. Assumption~4(d) is therefore not in question, and the trimming
proportion is a free choice rather than something the data forces.

We use that freedom to keep it small. Every estimate in Table~\ref{tab:main}
trims at most $10\%$ of plants. The bias of the trimmed estimator is
proportional to $p_0$, so holding $p_0$ below one tenth keeps the estimand
close to the average partial effect over the whole population rather than over
an endogenously selected subpopulation of movers. The costs are the ones noted
above---retained ill-conditioned designs, and a thinner stayer sample for the
drift step---and we let the standard errors display them rather than buying
precision by trimming a quarter of the sample.

\subsection{Is there heterogeneity to speak of?}

Equation~\eqref{eq:klm} is exactly identified, so within-plant residuals vanish
and the model cannot test its own homogeneity restriction. We add a fifth year,
1996, which leaves 3{,}158 plants with one residual degree of freedom apiece.
Estimating $\delta$ by pooled least squares on the interacted design, purging
it from the outcome and then fitting plant-specific regressions gives
$\hat\sigma = 0.130$, and the \citet{Swamy1970} statistic for
$H_0: \beta_i = \beta$ for all $i$ is $S = 1.19 \times 10^{5}$ on $12{,}580$
degrees of freedom. Homogeneity is rejected overwhelmingly. The dispersion is
economically large as well: among movers the interquartile range of
$\hat\beta_i$ is $1.19$ for intermediates and $2.15$ for capital, and $0.93$
and $1.39$ respectively in the better-conditioned half of the sample where
estimation error contributes less.

Heterogeneity alone does not require the machinery of this paper; only
heterogeneity \emph{correlated} with the determinant does. We regress the
estimated individual coefficients on $h_i$ among movers,
\begin{equation}
  \hat\beta_{i,p} \;=\; \alpha_p + \kappa_p h_i + e_{ip},
  \label{eq:kappa}
\end{equation}
weighting by $h_i^2$, which is proportional to the inverse conditional variance
when the $\Sigma_v(h_i)/h_i^{2}$ term dominates. Unweighted least squares is
uninformative here, being dominated by near-boundary observations that carry
almost no information about $g$.

\begin{table}[t]
\centering
\caption{Selection gradient $\hat\kappa_p$ in \eqref{eq:kappa}}
\label{tab:kappa}
\begin{tabular}{lccc}
\toprule
& $\hat\kappa$ & $t$ & $p$ \\
\midrule
Capital       & $-0.040$ & $-0.82$ & $0.410$ \\
Labour        & $-0.635$ & $-5.13$ & $<0.001$ \\
Intermediates & $\phantom{-}0.621$ & $\phantom{-}3.00$ & $0.003$ \\
\bottomrule
\end{tabular}

\vspace{0.6em}
\begin{minipage}{0.92\textwidth}
\footnotesize \emph{Notes.} Movers only, trimming at $10\%$
($N_{\text{mov}} = 3{,}072$, $342$ stayers). Weighted by $h_i^2$;
heteroskedasticity-robust standard errors.
\end{minipage}
\end{table}

Labour and intermediates show sharp gradients of opposite sign; capital does
not. Plants whose input bundle moves least have markedly \emph{higher}
estimated labour elasticities and lower intermediate elasticities. This is the
configuration in which the trimmed estimator is biased and an extrapolation
estimator is called for.

Figure~\ref{fig:scatter} plots the individual coefficient estimates against
$h_i$ with the trimmed region shaded, and Figure~\ref{fig:ghat} the local
linear estimate of $g$ with its extrapolation to the boundary. Both display the
$h_i^{-2}$ heteroskedasticity plainly: the cloud fans out sharply as $h_i$
approaches the threshold and the confidence band for $\hat g$ widens in the
same region. The labour panel of Figure~\ref{fig:ghat} shows the gradient of
Table~\ref{tab:kappa} directly, $\hat g$ rising steeply as $h \to h_0$.

\begin{figure}[t]
\centering
\includegraphics[width=\textwidth]{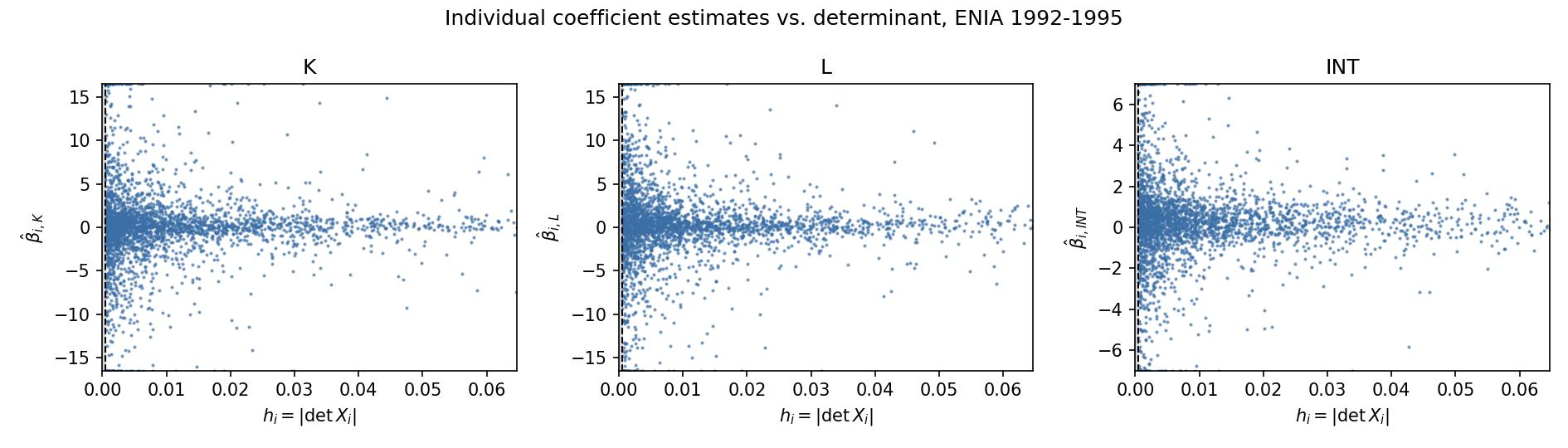}
\caption{Individual coefficient estimates $\hat\beta_{i,p}$ against
$h_i = |\det X_i|$, ENIA 1992--1995. Shaded region: units trimmed at
$h_0 = q_{10}(h)$. Vertical axes truncated at the 97th percentile of
$|\hat\beta_{i,p}|$.}
\label{fig:scatter}
\end{figure}

\begin{figure}[t]
\centering
\includegraphics[width=\textwidth]{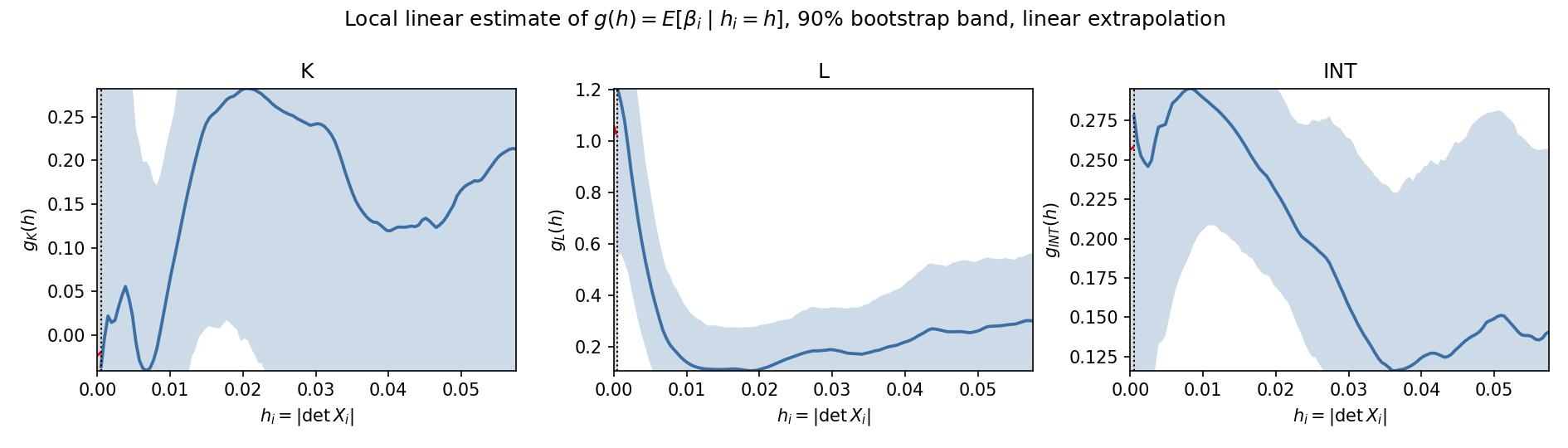}
\caption{Local polynomial estimate of $g(h) = \mathbf{E}[\beta_i \mid h_i = h]$
among movers (solid), with 90\% bootstrap band, and its linear extrapolation
into the trimmed region (dashed). Shaded region as in Figure~\ref{fig:scatter}.}
\label{fig:ghat}
\end{figure}

\subsection{The estimated drift}
\label{sec:drift}

Table~\ref{tab:drift} reports $\hat\delta_t$ at the $10\%$ threshold, where
$342$ stayers identify the twelve parameters, comparing the pooled,
unweighted ($m=0$) fit of \citet{grahampowell} with the local-linear
($m=1$) drift estimator of Section~\ref{sec:estimation}.

\begin{table}[t]
\centering
\caption{Estimated coefficient drift $\hat\delta_t$, $10\%$ trimming
(1992 normalised to zero)}
\label{tab:drift}
\begin{tabular}{lcccc}
\toprule
& Intercept & Capital & Labour & Intermediates \\
\midrule
\multicolumn{5}{l}{\emph{Panel A: pooled drift, $m=0$}} \\
1993 & $-0.459$ & $-0.065$ & $\phantom{-}0.008$ & $0.091$ \\
     & (0.260) & (0.030) & (0.063) & (0.039) \\
1994 & $-1.270$ & $-0.108$ & $-0.082$ & $0.227$ \\
     & (0.471) & (0.049) & (0.112) & (0.067) \\
1995 & $-1.630$ & $-0.107$ & $-0.169$ & $0.286$ \\
     & (0.584) & (0.062) & (0.148) & (0.086) \\
\midrule
\multicolumn{5}{l}{\emph{Panel B: local polynomial drift, $m=1$}} \\
1993 & $-0.948$ & $-0.103$ & $-0.055$ & $0.193$ \\
     & (0.446) & (0.054) & (0.115) & (0.084) \\
1994 & $-1.348$ & $-0.147$ & $-0.153$ & $0.298$ \\
     & (0.992) & (0.085) & (0.182) & (0.165) \\
1995 & $-1.784$ & $-0.189$ & $-0.207$ & $0.386$ \\
     & (1.224) & (0.102) & (0.243) & (0.200) \\
\bottomrule
\end{tabular}

\vspace{0.6em}
\begin{minipage}{0.92\textwidth}
\footnotesize \emph{Notes.} $\hat\delta$ from \eqref{eq:delta_estimation}
with a uniform kernel, estimated on the $342$ units with
$h_i \leq h_0 = q_{10}(h)$. Bootstrap standard errors over plants, 400
replications. Wald test of $H_0:\delta=0$: Panel A, $24.05$ on twelve
degrees of freedom ($p=0.020$); Panel B, $14.80$ ($p=0.253$).
\end{minipage}
\end{table}

The point estimates are directionally the same under both orders, but the
Wald test is not: the pooled ($m=0$) fit rejects $H_0:\delta=0$ comfortably
at the $5\%$ level, while the local polynomial ($m=1$) correction, with
roughly double the standard errors throughout, no longer rejects at
conventional levels. The pooled fit's rejection was evidently borrowing precision from the same kind
of first-order trimming bias documented elsewhere in this paper, and once
that is corrected the $342$-stayer sample is simply not large enough to pin
the drift down as tightly. We now report this as a suggestive
pattern rather than a statistically established one, and note that it is
the drift step, not the coefficient step, that makes low trimming
expensive.

\subsection{Estimates}

\begin{table}[t]
\centering
\caption{Average input elasticities, ENIA 1992--1995}
\label{tab:main}
\begin{tabular}{lcccc}
\toprule
Trimmed & Stayers & Capital & Labour & Intermediates \\
\midrule
\multicolumn{5}{l}{\emph{Panel A: Graham--Powell trimmed estimator}} \\
\phantom{0}2\% & \phantom{0}69 & $0.038$ & $0.649$ & $0.225$ \\
               &     & (0.910) & (0.848) & (0.476) \\
\phantom{0}4\% & 137 & $0.279$ & $0.539$ & $0.253$ \\
               &     & (0.362) & (0.503) & (0.238) \\
\phantom{0}6\% & 205 & $0.184$ & $0.535$ & $0.239$ \\
               &     & (0.288) & (0.367) & (0.177) \\
\phantom{0}8\% & 274 & $0.217$ & $0.367$ & $0.206$ \\
               &     & (0.218) & (0.292) & (0.151) \\
10\%           & 342 & $0.242$ & $0.222$ & $0.306$ \\
               &     & (0.214) & (0.269) & (0.140) \\
\midrule
\multicolumn{5}{l}{\emph{Panel B: local polynomial correction}} \\
\phantom{0}2\% & \phantom{0}69 & $0.037$ & $0.773$ & $0.213$ \\
\phantom{0}4\% & 137 & $0.280$ & $0.509$ & $0.269$ \\
\phantom{0}6\% & 205 & $0.187$ & $0.539$ & $0.243$ \\
\phantom{0}8\% & 274 & $0.189$ & $0.429$ & $0.206$ \\
10\%           & 342 & $0.214$ & $0.324$ & $0.303$ \\
\midrule
\multicolumn{5}{l}{\emph{Panel C: common-coefficient benchmarks}} \\
Pooled OLS, 1992 base & & $0.112$ & $0.224$ & $0.715$ \\
Pooled OLS, year mean & & $0.112$ & $0.238$ & $0.704$ \\
Within, 1992 base     & & $0.044$ & $0.257$ & $0.486$ \\
\bottomrule
\end{tabular}

\vspace{0.6em}
\begin{minipage}{0.92\textwidth}
\footnotesize \emph{Notes.} $N = 3{,}414$ plants, $T = p = 4$. Local polynomial
estimator uses $m = 1$, $c = b/h_0 = 30$ and inverse-variance weights, and the
drift step uses the same $m=1$ local-linear correction as Table~\ref{tab:drift}
(both panels share this drift estimate, subtracted before the coefficient
step). Standard errors from 400 bootstrap replications over plants; Panel~B
estimates carry standard errors within $0.02$ of the corresponding Panel~A
entries and are omitted. Panel~C estimates
$y_{it} = X_{it}'(\beta + \delta_t)$ by pooled and within least squares. All
thresholds are at or below $10\%$ by design.
\end{minipage}
\end{table}

The intermediate-input elasticity is the coefficient this design resolves
most precisely. It sits between $0.21$ and $0.31$ across the $4$--$10\%$
grid, with standard errors between $0.14$ and $0.24$, and lies well below
both the pooled OLS value of $0.70$ and the within value of $0.49$ at every
threshold. Allowing the elasticity to be plant-specific and correlated with
input history therefore lowers the average substantially relative to the
common-coefficient estimate, which is the kind of quantity the model exists
to deliver, and which no amount of proxy-variable machinery would recover,
since the proxies are aimed at the intercept.

Capital and labour are estimated markedly less precisely once the drift
step's own boundary bias is corrected with standard errors of $0.21$--$0.27$ at
the $10\%$ threshold, and considerably wider still at $2\%$, reflect
honestly the cost of the thinner stayer sample the drift step relies on.
Both coefficients remain positive and broadly in the neighbourhood of the
benchmark values from $4\%$ trimming onward. This reason for being less precise is that the conditional variance of $\hat\beta_i$ is governed by the
corresponding row of $\operatorname{adj}(X_i)$, and intermediate inputs are
adjusted far more freely across years than capital or headcount, so their
coefficient is the one the determinant can resolve most sharply.

The $2\%$ row should still be read with care. It trims so little that only
$69$ stayers remain to identify 12 drift parameters, and the resulting
standard errors are roughly over $0.5$ for every coefficient. We suggest using results with $4\%$ trimming as benchmark. 

Finally, under correlated coefficient the average partial effect and the
within estimator do not estimate the same object. We conduct a Wald test to test the hypothesis that they (two population mean) are the same. Bootstrapping the difference at the $10\%$
threshold, the capital and intermediates differences are within $1.25$
standard errors of zero ($t=0.93$ and $t=-1.20$), and the labour
difference is negligible ($t=-0.13$); the joint Wald statistic is $3.27$ on
three degrees of freedom ($p = 0.35$). Given the standard errors in
Table~\ref{tab:main} we argue that the interpretation of this result should not be the
absence of correlated heterogeneity. Homogeneity is decisively rejected from the sharp
gradients in Table~\ref{tab:kappa} for two of three inputs, and the
point estimate for intermediates differs from the within value by $0.18$. However, since the average partial effect and the within estimator represents some population mean of the random heterogeneous coefficients, it is possible that in our application and design they are the same.

\subsection{Markups and the overidentification test}
\label{sec:markup}

The Introduction motivated this application by an overidentification test in the spirit of \citet{Raval2023}. Under cost minimization, each variable input's first-order condition delivers its own estimate of the same firm-level markup, $\text{markup}_j = \theta_j / s_j$, where $\theta_j$ is that input's output elasticity and $s_j$ is its share of revenue.\footnote{Revenue shares require nominal, undeflated variables that play no role in \eqref{eq:klm} itself: nominal gross output for revenue, and nominal expenditure on each input. From the underlying ENIA records we construct revenue from nominal gross output, wage as the sum of wages, bonuses, payroll and family-allowance taxes, and materials expenditure as the sum of nominal raw materials purchases, nominal fuel purchases and electricity purchased. The share is taken at the base year of the window (1992), matching the $\delta_1=0$ normalisation under which $\hat\beta$ is literally the elasticity vector in year one. Averaging over the $3{,}395$ of $3{,}414$ plants with a valid, positive share that year gives a mean labour share of $14.9\%$ and a mean materials share of $52.0\%$.} If the maintained production-function assumptions are correct, the markups implied by different inputs should agree. We now carry that test out, using labour and intermediates as the two different inputs.

\begin{table}[t]
\centering
\caption{Input-specific markups, ENIA 1992--1995}
\label{tab:markup}
\begin{tabular}{lcccccc}
\toprule
Trimmed & \multicolumn{3}{c}{Graham--Powell} & \multicolumn{3}{c}{Local polynomial} \\
\cmidrule(lr){2-4}\cmidrule(lr){5-7}
& Labour & Materials & Diff. & Labour & Materials & Diff. \\
\midrule
2\%  & 4.367 & 0.433 & 3.934 & 5.202 & 0.409 & 4.793 \\
4\%  & 3.629 & 0.487 & 3.142 & 3.426 & 0.518 & 2.908 \\
6\%  & 3.602 & 0.459 & 3.143 & 3.625 & 0.468 & 3.157 \\
8\%  & 2.470 & 0.395 & 2.075 & 2.885 & 0.395 & 2.489 \\
10\% & 1.496 & 0.587 & 0.908 & 2.181 & 0.582 & 1.599 \\
\bottomrule
\end{tabular}

\vspace{0.6em}
\begin{minipage}{0.92\textwidth}
\footnotesize \emph{Notes.} $\text{markup}_j = \hat\theta_j / s_j$ using the Panel A and Panel B elasticities of Table~\ref{tab:main} and the base-year (1992) revenue shares above ($s_L=0.149$, $s_M=0.520$).
\end{minipage}
\end{table}

Table~\ref{tab:markup} shows the labour-implied markup running three to ten times the materials-implied markup at every trimming threshold, and driven mainly by labour's small revenue share amplifying an already imprecisely estimated output elasticity into a large implied markup. Whether this gap is more than sampling noise is a separate question, which we address by bootstrapping $H_0:\text{markup}_L=\text{markup}_M$ over $400$ plant resamples, redrawing both the elasticities and the revenue shares from the same resampled plants each time.

\begin{table}[t]
\centering
\caption{Overidentification test, $H_0: \text{markup}_L=\text{markup}_M$}
\label{tab:overid}
\begin{tabular}{lcccccc}
\toprule
Trimmed & \multicolumn{3}{c}{Graham--Powell} & \multicolumn{3}{c}{Local polynomial} \\
\cmidrule(lr){2-4}\cmidrule(lr){5-7}
& Diff. & SE & $p$ & Diff. & SE & $p$ \\
\midrule
2\%  & 3.934 & 5.925 & 0.507 & 4.793 & 6.183 & 0.438 \\
4\%  & 3.142 & 3.720 & 0.398 & 2.908 & 3.814 & 0.446 \\
6\%  & 3.143 & 2.637 & 0.233 & 3.157 & 2.660 & 0.235 \\
8\%  & 2.075 & 2.103 & 0.324 & 2.489 & 2.176 & 0.253 \\
10\% & 0.908 & 1.879 & 0.629 & 1.599 & 1.883 & 0.396 \\
\bottomrule
\end{tabular}
\end{table}

None of the tests in Table~\ref{tab:overid} rejects $H_0$ at conventional levels. Running the identical exercise under the homogeneous-elasticity restriction, using pooled OLS elasticities of $0.224$ (labour) and $0.715$ (intermediates), gives markups of $1.509$ and $1.374$ and a much smaller point gap of $0.135$ (bootstrap SE $0.100$, $t=1.35$, $p=0.176$), also a non-rejection.

The comparison across the two rows is the point worth making. Allowing elasticities to be plant-specific and correlated with input history does \emph{not} shrink the labour/materials markup gap relative to the homogeneous benchmark, but the estimate is also not statistically sharper because of larger standard errors, particularly for the labour elasticity. Thus, neither version of the test has much power here. We report this as an honest, if inconclusive, finding rather than force a rejection that the data do not support.

\subsection{Design size choice}
\label{sec:designsize}

The specification in \eqref{eq:klm} aggregates materials and energy into a
single intermediate input. Disaggregating them raises the design $T = p$ from  4 to 5, the drift
parameters $Q$ from 12 to 20, and lowers $N$ from 3{,}414 to 3{,}133 and the stayer count at
$p_0 = 0.10$ from $342$ to $314$. Estimating the larger model on the same
census and at the same thresholds, greatly inflates the variance of obtained estimators, due to only $63$ stayers being used to estimate 20 drift parameters.

\subsection{Discussion}
\label{sec:discussion}

The local polynomial correction plays two different roles in this application. In the drift step, the Wald test of a stable
technology across 1993-95 is no longer rejected at conventional levels once the drift's own
boundary bias is removed (Table~\ref{tab:drift}), exactly the kind of reversal the simulations of
Section~\ref{sec:simulations} anticipate. In the overidentification test of Section~\ref{sec:markup},
by contrast, neither the trimmed nor the local polynomial estimator rejects equal markups across inputs because the labour elasticity is imprecisely estimated
once heterogeneity is allowed for.

The more informative comparison in this application is not Graham-Powell against local polynomial
but the correlated-random-coefficients estimate, in either form, against the homogeneous-elasticity
benchmark the markup literature has used until now. No existing implementation of the production
approach to markup estimation has previously allowed the elasticities
entering the markup formula to be anything other than common across producers. Once they are allowed
to vary, the homogeneous model's own markup gap turns out to be smaller than either
correlated-random-coefficients estimate by roughly an order of magnitude
(Table~\ref{tab:overid} versus the pooled benchmark).

\section{Conclusion}
 
This paper has studied identification and estimation of the average partial effect in a correlated
random coefficients panel data model with $T=p$, the configuration in which the parameter is
irregularly identified and the information bound is singular. Our aim was to weaken the conditions
under which such identification is available, and to convert smoothness of the underlying
conditional expectations into a faster rate of convergence than the trimmed-mean estimator of
\cite{grahampowell} attains.
 
Three findings emerge. First, identification does not require the determinant of the design matrix
to have a density bounded away from zero at the origin. Replacing the boundary evaluation by an
explicit boundary limit, both the aggregate time effects and the average partial effect are
identified under weak conditions. Second, the exponent $a$ governs
how precision is allocated between the two estimation steps, and produces three distinct limiting
regimes: for $a>0$ the sparse stayer population makes the time effects the binding ingredient and
the two estimators are asymptotically perfectly dependent; for $a<0$ the crowding of near-singular
design matrices among the movers binds instead, and the two converge at different rates and are
asymptotically independent; $a=0$ balances the two. Third, the major bias in
\cite{grahampowell} arises from discarding the stayers, not from smoothing. Because the conditional
mean coefficient is smooth up to the boundary $0$, it can be estimated from the movers and extrapolated
across the threshold, which removes that bias rather than shrinking it away and relaxes the
undersmoothing requirement from $nh_0^{3}\to0$ to $nh_0^{a+2m+3}\to0$.
 
The practical advantage is a convergence rate that improves with the polynomial order, approaching
the parametric rate as $m$ grows, at the cost only of higher-order smoothness of objects that are
already assumed differentiable.

The Chilean application confirms that this diagnosis is not a modelling convenience. Correlated
heterogeneity is economically large (Table~\ref{tab:kappa}), and correcting the drift step reverses a
substantive conclusion about the stability of the technology across years. Although the overidentification
test built on it does not distinguish the trimmed estimator from the local polynomial correction, the idea to be conveyed by this paper is that no existing implementation of the
production approach to markup estimation has allowed elasticity heterogeneity at all, and doing so
here moves the estimated markup gap by roughly an order of magnitude relative to the
homogeneous-elasticity benchmark. 

Several extensions remain. The polynomial order and the bandwidth ratio $c$ are treated as fixed;
a data-driven choice of both, and of $h_0$ itself, would be valuable. Finally, the
extensions considered by \cite{grahampowell}, overidentification when $T>p$, a point mass of
exact stayers, and event study style discrete valued regressors with correlated random coefficients, can plausibly be
accommodated within the present framework.

\begin{appendices}
\section{Lemmas}
\begin{lemma}[Lindeberg--Feller central limit theorem Proposition 2.27 in \cite{van2000asymptotic}]\label{lemma:LF_CLT}
For each $n$ let $Y_{n,1}, \dots, Y_{n,k_n}$ be independent random vectors with finite variances such that
\begin{align}
\sum_{i=1}^{k_n} \mathrm{E} \|Y_{n,i}\|^2 \mathbf{1}\{\|Y_{n,i}\| > \varepsilon\} &\to 0, \qquad \text{every } \varepsilon > 0, \\
\sum_{i=1}^{k_n} \operatorname{Cov} Y_{n,i} &\to \Sigma.
\end{align}
Then the sequence $\sum_{i=1}^{k_n} (Y_{n,i} - \mathrm{E} Y_{n,i})$ converges in distribution to $N(0,\Sigma)$.
\end{lemma}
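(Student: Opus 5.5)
This is the classical Lindeberg--Feller theorem, and I would prove it in two stages. First, reduce to the scalar case with the Cramér--Wold device. Second, prove the scalar statement with a characteristic-function argument.

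\textbf{Reduction to the scalar case.} Fix $t$ of the same dimension as $Y_{n,i}$ and set $Z_{n,i}=t'(Y_{n,i}-\E Y_{n,i})$. By independence,
\[
    \sum_i\Var(Z_{n,i})=t'\Big(\sum_i\Cov Y_{n,i}\Big)t\to t'\Sigma t=:s^2 .
\]
If $s^2=0$, Chebyshev's inequality gives $\sum_iZ_{n,i}\to_p0$, which is the required $N(0,0)$ limit. Otherwise it suffices to show $\sum_iZ_{n,i}\to_d N(0,s^2)$ for every $t$.

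\textbf{Lindeberg condition for the centered variables.} This is the first technical step. The bound $|Z_{n,i}|\le\normm{t}\,\normm{Y_{n,i}-\E Y_{n,i}}$ reduces the task to showing that Lindeberg's condition for $Y_{n,i}$ implies it for $\tilde Y_{n,i}=Y_{n,i}-\E Y_{n,i}$.

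I would first bound the individual means:
\[
    \normm{\E Y_{n,i}}\le\varepsilon+\E\normm{Y_{n,i}}\I\{\normm{Y_{n,i}}>\varepsilon\}
    \le\varepsilon+\varepsilon^{-1}\E\normm{Y_{n,i}}^2\I\{\normm{Y_{n,i}}>\varepsilon\}.
\]
The last expectation is at most the full Lindeberg sum, which tends to zero. Hence $\max_i\normm{\E Y_{n,i}}\le 2\varepsilon$ for large $n$, and since $\varepsilon$ is arbitrary, $\max_i\normm{\E Y_{n,i}}\to0$.

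Next, on the event $\{\normm{\tilde Y_{n,i}}>2\varepsilon\}$ we eventually have $\normm{Y_{n,i}}>\varepsilon$ and $\normm{\tilde Y_{n,i}}\le 2\normm{Y_{n,i}}$. Therefore
\[
    \sum_i\E\normm{\tilde Y_{n,i}}^2\I\{\normm{\tilde Y_{n,i}}>2\varepsilon\}
    \le 4\sum_i\E\normm{Y_{n,i}}^2\I\{\normm{Y_{n,i}}>\varepsilon\}\to0 .
\]

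\textbf{Scalar argument.} Put $\sigma_{n,i}^2=\Var Z_{n,i}$. Lindeberg's condition gives
\[
    \max_i\sigma_{n,i}^2\le\varepsilon^2+\sum_i\E Z_{n,i}^2\I\{|Z_{n,i}|>\varepsilon\},
\]
so the array is uniformly asymptotically negligible. I would then use the Taylor bound
\[
    |e^{ix}-1-ix+x^2/2|\le\min(|x|^3/6,\;x^2).
\]
Split each expectation on $\{|Z_{n,i}|\le\varepsilon\}$ and its complement. This gives
\[
    \sum_i\big|\E e^{iuZ_{n,i}}-(1-u^2\sigma_{n,i}^2/2)\big|
    \le \frac{|u|^3\varepsilon}{6}\sum_i\sigma_{n,i}^2
    + u^2\sum_i\E Z_{n,i}^2\I\{|Z_{n,i}|>\varepsilon\}.
\]
Letting $n\to\infty$ and then $\varepsilon\downarrow0$ makes the right-hand side vanish.

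For complex numbers of modulus at most one, $\big|\prod_i a_i-\prod_i b_i\big|\le\sum_i|a_i-b_i|$. Uniform negligibility ensures $|1-u^2\sigma_{n,i}^2/2|\le1$ for large $n$, so the inequality applies with $a_i=\E e^{iuZ_{n,i}}$ and $b_i=1-u^2\sigma_{n,i}^2/2$. It then remains to compare $\prod_i(1-u^2\sigma_{n,i}^2/2)$ with $\prod_i e^{-u^2\sigma_{n,i}^2/2}$. Using $|e^{-x}-1+x|\le x^2$ for $x\ge0$, the difference is at most
\[
    \sum_i u^4\sigma_{n,i}^4/4\le \tfrac{u^4}{4}\max_i\sigma_{n,i}^2\sum_i\sigma_{n,i}^2\to0 .
\]
Since $\prod_i e^{-u^2\sigma_{n,i}^2/2}=e^{-u^2\sum_i\sigma_{n,i}^2/2}\to e^{-u^2s^2/2}$, Lévy's continuity theorem yields $\sum_iZ_{n,i}\to_dN(0,s^2)$, and Cramér--Wold completes the proof.

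I expect the main obstacle to be the bookkeeping in the centering step: the hypothesis is stated for the uncentered $Y_{n,i}$, so one must confirm that the means are uniformly small before Lindeberg's condition can be transferred to $Y_{n,i}-\E Y_{n,i}$. The remaining steps are routine.
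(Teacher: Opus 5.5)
The paper gives no proof of this lemma. It is imported verbatim from van der Vaart (Proposition 2.27) and used as a black box in Step~6 of the proof of Lemma~\ref{lemma:delta_convergence} and Step~3 of the proof of Theorem~\ref{theorem:asymptotic}, so there is no in-paper argument to compare against.

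Your proof is the standard textbook argument and is correct as written: a Cram\'er--Wold reduction, then transfer of the uncentered Lindeberg condition to the centered variables via $\max_i\normm{\E Y_{n,i}}\to0$, then the characteristic-function comparison with $\prod_i e^{-u^2\sigma_{n,i}^2/2}$. The separate treatment of $s^2=0$ is harmless but unnecessary, because the characteristic-function argument already gives the degenerate limit.
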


\begin{lemma}\label{lemma:delta_convergence}
Suppose Conditions \ref{cond:Regularity}--\ref{cond:Kernel} hold and $T=p$. Then the estimator $\hat\delta_0$ defined in
equation~(\ref{eq:delta_estimation}) satisfies
\begin{align*}
    \hat{\delta}_0-\delta_0
    = O_p\!\left(h_0^{m+1}\right)
    + o_p\!\left((nh_0^{a+1})^{-1/2}\right)
    + \left(e_1'\mathcal{S}_m^{-1}\otimes \mathcal{A}(0)^{-1}\right) V_n ,
\end{align*}
where $V_n:=\frac{1}{nh_0^{a+1}}\sum_{i=1}^n Z_i'\zeta_i K(h_i/h_0)$, $Z_i = (W_i^*,\frac{h_i}{h_0}W_i^*,\dots,(\frac{h_i}{h_0})^m W_i^*)\in\R^{T\times(m+1)p(T-1)}$, and
\begin{align*}
    \sqrt{nh_0^{a+1}}\left(e_1'\mathcal{S}_m^{-1}\otimes \mathcal{A}(0)^{-1}\right) V_n
    \;\to_d\;
    N\!\left(0,\;\left(e_1'\mathcal{S}_m^{-1}\mathcal{T}_m\mathcal{S}_m^{-1}e_1\right)
    \cdot\left(\mathcal{A}(0)^{-1}\Omega(0)\mathcal{A}(0)^{-1}\right)\right).
\end{align*}
Here $e_1=(1,0,\dots,0)'\in\R^{m+1}$,
\begin{align*}
    \mathcal{S}_m = \alpha\int_{0}^{1}\mathcal{H}(u)\,K(u)\,u^{a}\,du,
    \qquad
    \mathcal{T}_m = \alpha\int_{0}^{1}\mathcal{H}(u)\,K^2(u)\,u^{a}\,du,
    \qquad
    \mathcal{H}(u) = \left(u^{k+l-2}\right)_{1\le k,l\le m+1},
\end{align*}
$\zeta_i = Y_i^{*}-W_i^{*}g_y(h_i)$, and
$\Omega(h)=\E\!\left[W_i^{*\prime}\zeta_i\zeta_i'W_i^{*}\mid h_i=h\right]
\in\R^{p(T-1)\times p(T-1)}$.
\end{lemma}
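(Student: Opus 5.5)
We treat (\ref{eq:delta_estimation}) as the weighted least squares fit of $Y_i^*$ on $Z_i$ after rescaling the coefficients: writing $\theta=(\gamma_0',h_0\gamma_1',\dots,h_0^m\gamma_m')'$, the estimator is $\hat\theta=\hat S_n^{-1}\frac{1}{nh_0^{a+1}}\sum_i Z_i'Y_i^*K(h_i/h_0)$ with $\hat S_n=\frac{1}{nh_0^{a+1}}\sum_i Z_i'Z_iK(h_i/h_0)$, and $\hat\delta_0$ is the first block, i.e.\ $(e_1'\otimes I_{p(T-1)})\hat\theta$. We then decompose $Y_i^*=W_i^*g_y(h_i)+\zeta_i$ and Taylor expand $g_y(h_i)=\sum_{k=0}^m g_y^{(k)}(0)h_i^k/k!+R_i$ with $\|R_i\|\lesssim h_i^{m+1}$, which is legitimate since $g_y\in C^{m+1}[0,\bar h]$ by Condition~\ref{cond:Smoothness}.\ref{sm:functions}. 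The polynomial part is reproduced exactly by the fit (it lies in the column span of $Z_i$ with coefficient $\theta_0$ whose first block is $g_y(0)=\delta_0$, using Proposition~\ref{prop:identification}), so
\[
\hat\delta_0-\delta_0=(e_1'\otimes I)\hat S_n^{-1}\Big(\tfrac{1}{nh_0^{a+1}}\textstyle\sum_i Z_i'W_i^*R_iK(h_i/h_0)\Big)+(e_1'\otimes I)\hat S_n^{-1}V_n .
\]

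Step 1 is the denominator: we show $\hat S_n\to_p \mathcal{S}_m\otimes\mathcal{A}(0)$. Its mean is $h_0^{-(a+1)}\int_0^{h_0}\mathcal{H}(h/h_0)\otimes\mathcal{A}(h)K(h/h_0)f(h)dh$; substituting $h=h_0u$ and using $f(h_0u)/(h_0u)^a\to\alpha$ and continuity of $\mathcal{A}$ at $0$ gives the limit by dominated convergence (bounded $K$, $u^a$ integrable since $a>-1$). Its variance is $O((nh_0^{a+1})^{-1})\to0$ by boundedness of $W_i^*$ (Condition~\ref{cond:Regularity}.\ref{reg:bounded}) and $nh_0^{a+1}\to\infty$. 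Positive definiteness of $\mathcal{S}_m$ (from Condition~\ref{cond:Kernel}.\ref{ker:kernel}) and invertibility of $\mathcal{A}(0)$ make the inverse converge. Step 2 is the bias: since $|R_i|\lesssim h_0^{m+1}$ on the kernel support, the same mean/variance computation shows the bracketed remainder term is $O_p(h_0^{m+1})$, and multiplying by $\hat S_n^{-1}=O_p(1)$ keeps it so.

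Step 3 handles the stochastic term. Write $(e_1'\otimes I)\hat S_n^{-1}V_n=(e_1'\mathcal{S}_m^{-1}\otimes\mathcal{A}(0)^{-1})V_n+(e_1'\otimes I)(\hat S_n^{-1}-(\mathcal{S}_m\otimes\mathcal{A}(0))^{-1})V_n$; since $V_n=O_p((nh_0^{a+1})^{-1/2})$ (shown next) and the matrix difference is $o_p(1)$, the second piece is $o_p((nh_0^{a+1})^{-1/2})$, yielding the displayed expansion. For the CLT, note $\E[W_i^{*\prime}\zeta_i\mid h_i]=\mathcal{B}(h_i)-\mathcal{A}(h_i)g_y(h_i)=0$, so $V_n$ is a centred i.i.d.\ sum. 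Apply Lemma~\ref{lemma:LF_CLT} to $Y_{n,i}=(nh_0^{a+1})^{-1/2}Z_i'\zeta_iK(h_i/h_0)$: the covariance sum equals $h_0^{-(a+1)}\E[\mathcal{H}(h_i/h_0)\otimes W_i^{*\prime}\zeta_i\zeta_i'W_i^*K^2(h_i/h_0)]\to\mathcal{T}_m\otimes\Omega(0)$ by the same substitution and continuity of $\Omega$ (Condition~\ref{cond:Smoothness}.\ref{sm:functions}). The Lindeberg condition follows via Lyapunov: $\sum_i\E\|Y_{n,i}\|^{2+\delta}\lesssim n(nh_0^{a+1})^{-1-\delta/2}h_0^{a+1}=(nh_0^{a+1})^{-\delta/2}\to0$, using Condition~\ref{cond:Regularity}.\ref{reg:moment} and boundedness of $X_i^*$ (so $\|\zeta_i\|\lesssim 1+\|\epsilon_i\|$, as $\beta$ and $g_y$ are bounded). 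Finally the sandwich $(e_1'\mathcal{S}_m^{-1}\otimes\mathcal{A}(0)^{-1})(\mathcal{T}_m\otimes\Omega(0))(\mathcal{S}_m^{-1}e_1\otimes\mathcal{A}(0)^{-1})$ factors by the mixed-product rule into $\phi_1\Lambda_0$.

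The main obstacle is making the $u^a$ change of variables rigorous uniformly for $a\in(-1,0)$, where $f$ blows up at $0$; we would handle it by writing $f(h_0u)=(h_0u)^a\tilde q(h_0u)$ with $\tilde q$ bounded and continuous (Condition~\ref{cond:Smoothness}.\ref{sm:density}), so all integrands are dominated by $C u^a K(u)$, which is integrable.
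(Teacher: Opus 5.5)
Your proposal is correct and follows essentially the same route as the paper's proof. Both arguments use the rescaled weighted least squares form with $\theta_0$ carrying the $h_0^k$ factors and the split $\hat\theta-\theta_0=S_n^{-1}\Lambda_n+S_n^{-1}V_n$. Both obtain convergence of $S_n$ to $\mathcal{S}_m\otimes\mathcal{A}(0)$ via the substitution $h=h_0u$ and dominated convergence with the dominating function $Cu^aK(u)$, and both bound the bias by $O_p(h_0^{m+1})$ using Markov's inequality. Replacing $S_n^{-1}$ by its limit at cost $o_p((nh_0^{a+1})^{-1/2})$, and applying a Lyapunov-verified Lindeberg--Feller CLT built on the orthogonality $\E[W_i^{*\prime}\zeta_i\mid h_i]=0$, also match the paper step for step.
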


\begin{lemma}\label{lemma:beta0_positive}
Suppose Conditions \ref{cond:Regularity}--\ref{cond:Kernel} hold. Then the
principal-value limit
\begin{align}\label{eq:Xi0_def}
    \Xi_0 \;:=\; \lim_{h\downarrow0}\
    \E\!\left[d_i^{-1}W_i^{*}\,\I[\,|d_i|\ge h\,]\right]
    \;=\;\lim_{h\downarrow0}\ \E\!\left[(X_i'X_i)^{-1}X_i'W_i\,\I[h_i\ge h]\right]
\end{align}
exists and is finite, and the estimator defined in
equation~(\ref{eq:beta0_estimation}) satisfies
\begin{align*}
    \E_n \hat{\beta}(X_i)\I[h_i\geq h_0] - \E \beta(X_i)\I[h_i\geq h_0]
    = \E_n v_i\I[h_i\geq h_0] - \Xi_0\Delta\delta
      + o_p\!\left(\normm{\Delta\delta}_2\right) + O_p\!\left(n^{-1/2}\right),
\end{align*}
where $\Delta\delta=\hat\delta-\delta_0$ and
$v_i=(X_i'X_i)^{-1}X_i'\epsilon_i$.
\end{lemma}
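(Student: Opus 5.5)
We prove Lemma~\ref{lemma:beta0_positive} in two parts: first the existence of $\Xi_0$, then the expansion of the mover average.

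\textbf{Existence of $\Xi_0$.} Condition on $d_i$:
\[
\E\!\left[d_i^{-1}W_i^{*}\I[|d_i|\ge h]\right]=\int_{h\le|u|\le \bar h}u^{-1}g_W(u)f_D(u)\,du+\E\!\left[d_i^{-1}W_i^{*}\I[|d_i|>\bar h]\right].
\]
The last term is finite because $\normm{W_i^*}_F\le C$. Substitute $f_D(u)=|u|^aq_D(u)$ and fold the integral onto $u>0$:
\[
\int_h^{\bar h}u^{a-1}\big(g_W(u)q_D(u)-g_W(-u)q_D(-u)\big)\,du .
\]
By Condition~\ref{cond:Smoothness}, $g_W\in C^1[-\bar h,\bar h]$ and $q_D\in C^1[-\bar h,\bar h]$, so the product $r(u)=g_W(u)q_D(u)$ is $C^1$. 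Hence $r(u)-r(-u)=O(u)$, and the integrand is $O(u^{a})$. Since $a>-1$, this is integrable at $0$, and the limit exists by dominated convergence. This is the $\pm d$ cancellation flagged after Condition~\ref{cond:Smoothness}. The second expression for $\Xi_0$ is the same object, because $(X_i'X_i)^{-1}X_i'W_i=X_i^{-1}W_i=d_i^{-1}W_i^*$.

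\textbf{Expansion of the mover average.} For $h_i>0$,
\[
\hat\beta(X_i)=X_i^{-1}(Y_i-W_i\delta_0)-X_i^{-1}W_i\Delta\delta=\beta(X_i)+v_i-d_i^{-1}W_i^*\Delta\delta .
\]
Therefore
\[
\E_n\hat\beta(X_i)\I[h_i\ge h_0]-\E\beta(X_i)\I[h_i\ge h_0]
=(\E_n-\E)\beta(X_i)\I[h_i\ge h_0]+\E_nv_i\I[h_i\ge h_0]-\E_n\!\left[d_i^{-1}W_i^*\I[h_i\ge h_0]\right]\Delta\delta .
\]
The first term is an average of bounded i.i.d.\ terms, since $\normm{\beta}\le C$, so it is $O_p(n^{-1/2})$ by Chebyshev. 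It remains to show that the matrix $M_n:=\E_n[d_i^{-1}W_i^*\I[h_i\ge h_0]]$ equals $\Xi_0+o_p(1)$. Then $M_n\Delta\delta=\Xi_0\Delta\delta+o_p(\normm{\Delta\delta})$, which is the claim.

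Split $M_n-\Xi_0$ into a bias part and a stochastic part.
\begin{itemize}
\item \emph{Bias.} $\E M_n-\Xi_0=-\lim_{h\downarrow 0}\int_h^{h_0}u^{a-1}(r(u)-r(-u))\,du=O(h_0^{a+1})\to0$, using $h_0\to0$ (implied by Condition~\ref{cond:Kernel}).
\item \emph{Stochastic part.} Its variance is bounded by
\[
n^{-1}\E\!\left[d_i^{-2}\normm{W_i^*}^2\I[h_i\ge h_0]\right]\lesssim n^{-1}\Big(1+\int_{h_0}^{\bar h}u^{a-2}\,du\Big).
\]
For $a<1$ this is $O(n^{-1}h_0^{a-1})=O\big((nh_0^{1-a})^{-1}\big)\to0$ by Condition~\ref{cond:Kernel}. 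For $a=1$ it is $O(n^{-1}\log(1/h_0))$, and for $a>1$ it is $O(n^{-1})$; both vanish.
\end{itemize}
So $M_n\to_p\Xi_0$.

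\textbf{Main obstacle.} The delicate step is the principal-value existence of $\Xi_0$. Unlike $M_n$'s bias, it cannot use absolute integrability, because $\E|d_i|^{-1}$ diverges when $a\le 0$. It relies entirely on the two-sided $C^1$ smoothness of $g_Wq_D$ across $d=0$. The other point requiring care is that the variance bound uses the rate condition $nh_0^{1-a}\to\infty$ exactly in the regime $a<1$. We also note that the $v_i$ term is left unexpanded, since it is handled by the CLT in the main theorem.
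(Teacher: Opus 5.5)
Your proposal is correct and follows essentially the same route as the paper's proof. It uses the same decomposition $\hat\beta(X_i)=\beta(X_i)+v_i-d_i^{-1}W_i^{*}\Delta\delta$, the same $\pm d$ folding with $C^1$ smoothness of $g_Wq_D$ for the principal value and the $O(h_0^{a+1})$ bias, and the same three-case variance bound. The one detail you leave implicit is why $n^{-1}\log(1/h_0)\to0$ when $a=1$; the paper gets this from $nh_0^{1+a}=nh_0^{2}\to\infty$, which forces $h_0\gtrsim n^{-1/2}$.
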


\begin{remark}
The limit in (\ref{eq:Xi0_def}) must be read as a principal value: for $a\le0$
the map $d\mapsto d^{-1}g_W(d)f_D(d)$ is not absolutely integrable near the
origin, and finiteness of $\Xi_0$ rests on the cancellation between $+d$ and
$-d$ established in Step~2 below. This is exactly where the $C^1$ smoothness of
$g_W$ and $q_D$ across $d=0$ in Condition~\ref{cond:Smoothness} is used.
\end{remark}

\begin{lemma}\label{lemma:beta0_negative}
Suppose Conditions \ref{cond:Regularity}--\ref{cond:Kernel} hold. Then the
estimator defined in equation~(\ref{eq:beta0_estimation}) satisfies
\begin{align*}
    \E_n \hat{g}_\beta(h_i)\I[h_i<h_0] - \E g_\beta(h_i)\I[h_i<h_0]
    = \Lambda_n^{v}\,\omega_m
      + o_p\!\left(\sqrt{h_0^{a-1}/n}\right)
      + O_p\!\left(h_0^{a+1}\right)\Delta\delta
      + O_p\!\left(h_0^{m+a+2}\right)
      + o_p\!\left(n^{-1/2}\right),
\end{align*}
where
\begin{alignat*}{2}
    \Lambda_n^{v} &= \frac{1}{nc}\sum_{i=1}^n v_i\,
        \nu\!\left(\tfrac{h_i-h_0}{b}\right)'K\!\left(\tfrac{h_i-h_0}{b}\right)
    \in\R^{p\times(m+1)},
    &\qquad
    \omega_m &= \mathcal{Q}_m^{-1}\tau_m\in\R^{m+1},\\
    \mathcal{Q}_m &=\alpha\int_{0}^{1}\mathcal{H}(u)K(u)(1+cu)^{a}\,du,
    &\qquad
    \tau_{m,k}&=\alpha\left(-\tfrac{1}{c}\right)^{k}B(k+1,a+1),\\
     \tau_m&=(\tau_{m,0},\dots,\tau_{m,m})',
\end{alignat*}
 The leading term satisfies
$\E\left[\Lambda_n^{v}\omega_m\right]=0$ and
$\Var\!\left(\Lambda_n^{v}\omega_m\right) =O\!\left(h_0^{a-1}/n\right)$.
\end{lemma}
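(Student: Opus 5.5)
The plan is to write the local polynomial fit in Step 2 as a weighted least squares problem in the rescaled variable $u_i=(h_i-h_0)/b\in[0,1]$, and to decompose $\hat\beta(X_i)$ for units in the window. Write
\[
\hat\beta(X_i)=g_\beta(h_i)+\varsigma_i+v_i-d_i^{-1}X_i^*W_i\Delta\delta .
\]
This gives $\hat\gamma=\hat{\mathcal{Q}}_n^{-1}\frac{1}{nb\,h_0^{a}}\sum_i \nu(u_i)K(u_i)\hat\beta(X_i)'$, where
\[
\hat{\mathcal{Q}}_n=\frac{1}{nb\,h_0^{a}}\sum_i \nu(u_i)\nu(u_i)'K(u_i).
\]
A change of variables $h=h_0(1+cu)$, together with $f(h)=h^a(\alpha+o(1))$ and a variance bound of order $(nh_0^{a+1})^{-1}\to0$, gives $\hat{\mathcal{Q}}_n\to_p\mathcal{Q}_m$. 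The imputation term is
\[
\E_n\hat g_\beta(h_i)\I[h_i<h_0]=\sum_k\hat\gamma_k\,\E_n (h_i-h_0)^k\I[h_i<h_0].
\]
I will show $\E_n[(h_i-h_0)^k\I[h_i<h_0]]=b^k h_0^{a+1}\big(\alpha(-1/c)^kB(k+1,a+1)+o_p(1)\big)$. The substitution $h=h_0 s$ turns $\int_0^{h_0}(h-h_0)^k h^a\,dh$ into $h_0^{k+a+1}(-1)^kB(k+1,a+1)$, and the sampling error is $O_p(\sqrt{h_0^{a+1}/n})$. Rescaling the $k$th coefficient by $b^k$ then produces exactly the vector $\tau_m$ contracted against $\mathcal{Q}_m^{-1}$, i.e.\ $\omega_m$. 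The overall factor is $h_0^{a+1}/(nbh_0^a)=1/(nc)$, which is why $\Lambda_n^v$ carries the prefactor $1/(nc)$.

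Next, I treat the four pieces of $\hat\beta$ separately.
\begin{itemize}
\item \emph{Bias piece.} Taylor-expand $g_\beta(h_i)$ to order $m$ around $h_0$ with remainder $O(h_0^{m+1})$. The polynomial part is reproduced exactly by local polynomial regression, and evaluated and integrated over the stayers it gives exactly $\E_n g_\beta$'s polynomial approximation. Its discrepancy from $\E g_\beta(h_i)\I[h_i<h_0]$ is the Taylor remainder, $O(h_0^{m+1})\cdot P(h_i<h_0)=O(h_0^{m+a+2})$, plus an empirical-versus-population error of order $o_p(n^{-1/2})$.
\item \emph{$\varsigma_i$ piece.} This piece is mean zero with bounded conditional variance. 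It contributes $O_p\big((nh_0^{a+1})^{-1/2}\big)\cdot h_0^{a+1}=O_p(\sqrt{h_0^{a+1}/n})$, which is $o_p(n^{-1/2})$.
\item \emph{$\Delta\delta$ piece.} Here $d_i^{-1}X_i^*W_i$ is $O(h_0^{-1})$ in the window, but the bracketed average over the window is bounded up to the local mean of $d_i^{-1}W_i^*$. Combined with the $h_0^{a+1}$ imputation mass and the $h_0^{-1}$ scaling, I would bound it by $O_p(h_0^{a+1})\Delta\delta$ after checking that the $\pm d$ structure keeps the window average of $d_i^{-1}W_i^*$ bounded. If it does not, I would absorb an extra $h_0^{-1}$ and verify it is still negligible under Condition~\ref{cond:Kernel}.
\item \emph{$v_i$ piece.} Replacing $\hat{\mathcal{Q}}_n^{-1}$ and the empirical stayer moments by their limits gives precisely $\Lambda_n^v\omega_m$. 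The replacement error is $o_p(1)$ times the leading term, hence $o_p(\sqrt{h_0^{a-1}/n})$.
\end{itemize}

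Finally, for the leading term, $\E[v_i\mid X_i]=0$ gives the zero mean. Its variance is
\[
\frac{1}{nc^2}\,\E\!\big[h_i^{-2}\Sigma_v\ \text{-weighted}\ (\nu'\omega_m K)^2\big].
\]
Over the window $h_i\asymp h_0$ and $P(\text{window})\asymp h_0^{a+1}$, which yields $O(h_0^{a-1}/n)$. The main obstacle I expect is the uniform control of the $h_i^{-1}$-inflated terms, namely the $\Delta\delta$ cross term and the $o_p$ replacement errors multiplying $v_i$. These require the moment bound in Condition~\ref{cond:Regularity} and the fact that the window stays a fixed multiple away from zero, $h_i\ge h_0$, so that $h_i^{-1}\le h_0^{-1}$ deterministically.
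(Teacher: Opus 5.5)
Your architecture is the same as the paper's: rescaled weighted least squares in $u_i$, $Q_n\to_p\mathcal{Q}_m$, and stayer moments converging to $\tau_m$, so that the imputation collapses to $\frac{1}{nc}\sum_i \hat\beta(X_i)\nu(u_i)'K(u_i)\omega_m$. You also use the same four-way split of $\hat\beta(X_i)$: a bias of order $h_0^{m+a+2}$, a $\varsigma_i$ piece of order $\sqrt{h_0^{a+1}/n}=o(n^{-1/2})$, and the leading $v_i$ piece. The step that does not go through as written is the $\Delta\delta$ piece. This is the matrix $\frac{1}{nc}\sum_i d_i^{-1}W_i^{*}\nu(u_i)'K(u_i)$ multiplying $\Delta\delta$.

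\emph{The $\pm d$ cancellation.} This cannot be left as something to check with a fallback; it is the substance of the step. Conditioning on $d_i$, the mean of each column is $\frac{1}{c}\int_{h_0}^{(1+c)h_0}\kappa(d)\{\psi(d)-\psi(-d)\}d^{a-1}\,dd$, where $\psi=g_Wq_D$. The kernel weight $\kappa$ can be pulled out of the difference only because it depends on $d$ through $|d|$ alone. Then $g_W,q_D\in C^1[-\bar h,\bar h]$ (Condition~\ref{cond:Smoothness}) gives $\normm{\psi(d)-\psi(-d)}_F\le 2Ld$, and hence a mean of order $h_0^{a+1}$. Your fallback of absorbing an extra $h_0^{-1}$ yields only $O_p(h_0^{a})\Delta\delta$. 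That is not the stated bound, and it is not negligible downstream when $a\le 0$. At $a=0$ it is $O_p(1)\Delta\delta$, the same order as $\Xi_0\Delta\delta$. For $a<0$, $\sqrt{nh_0^{1-a}}\,h_0^{a}\Delta\delta=\sqrt{nh_0^{1+a}}\,\Delta\delta=O_p(1)$.

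\emph{The random fluctuation.} Even with the cancellation, the piece is not $O_p(h_0^{a+1})\Delta\delta$. The centred window sum has standard deviation of order $\sqrt{h_0^{a-1}/n}$, since $d_i^{-2}\asymp h_0^{-2}$ on a window of mass $h_0^{a+1}$. The ratio $\sqrt{h_0^{a-1}/n}\big/h_0^{a+1}=(nh_0^{a+3})^{-1/2}$ diverges whenever $nh_0^{a+3}\to0$, which the undersmoothing condition forces when $m=0$. So the empirical window average of $d_i^{-1}W_i^{*}$ is not bounded in probability in general. The correct statement is $\bigl[O(h_0^{a+1})+O_p(\sqrt{h_0^{a-1}/n})\bigr]\Delta\delta$. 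You then need $\Delta\delta=o_p(1)$ from Lemma~\ref{lemma:delta_convergence} to send the second part into the $o_p(\sqrt{h_0^{a-1}/n})$ remainder, which is exactly how the paper closes this step.
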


\begin{proof}[Proof of Lemma~\ref{lemma:primitive_conditions}]
 
Throughout write
\[
    \bar q(u)=\int q\!\left(x^{\bar p-1},u\right)d\!\left(x^{\bar p-1}\right),
    \qquad
    N_q(u)=\int \mathcal{Y}\!\left(x^{\bar p-1},u\right)
      q\!\left(x^{\bar p-1},u\right)d\!\left(x^{\bar p-1}\right),
\]
both integrals being finite by
Condition~\ref{cond:primitive}(\ref{prim:envelope}).
 
\textbf{Step 0: reduction to a ratio.}
Integrating the factorisation in
Condition~\ref{cond:primitive}(\ref{prim:factorization}) over $x^{\bar p-1}$
gives
\[
    f_{d(X_i)}(u)=|u|^{a}\,\bar q(u),\qquad u\in[-h_{*},h_{*}],
\]
and therefore
\[
    f_{x^{\bar p-1}\mid d_i=u}\!\left(x^{\bar p-1}\mid u\right)
    =\frac{f_{x^{\bar p-1},d(X_i)}\left(x^{\bar p-1},u\right)}{f_{d(X_i)}(u)}
    =\frac{q\!\left(x^{\bar p-1},u\right)}{\bar q(u)}
\]
at every $u$ with $\bar q(u)>0$. The singular factor $|u|^{a}$ cancels between
the joint and the marginal density: although both are singular at $u=0$
whenever $a\ne0$, the conditional density is not, and it is this cancellation
that allows $g_{\mathcal{Y}}$ to inherit the smoothness of $q$ and
$\mathcal{Y}$. Consequently
\begin{align}\label{eq:g_ratio}
    g_{\mathcal{Y}}(u)=\frac{N_q(u)}{\bar q(u)}
\end{align}
at every such $u$; Step 2 will show that this holds on a neighbourhood of the
origin.
 
\textbf{Step 1: $N_q$ and $\bar q$ are $C^{m+1}$ on $[-h_{*},h_{*}]$.}
Fix $k\in\{0,1,\dots,m+1\}$. For almost every $x^{\bar p-1}$,
Condition~\ref{cond:primitive}(\ref{prim:factorization}) gives
$u\mapsto q\!\left(x^{\bar p-1},u\right)\in C^{m+1}$ and
Condition~\ref{cond:primitive}(\ref{prim:smoothY}) gives
$u\mapsto \mathcal{Y}\!\left(x^{\bar p-1},u\right)\in C^{m+1}$, so the integrand
of $N_q$ is $k$ times continuously differentiable in $u$, with
\[
\partial_u^{k}\!\left[\mathcal{Y}\!\left(x^{\bar p-1},u\right)
  q\!\left(x^{\bar p-1},u\right)\right]
=\sum_{j=0}^{k}\binom{k}{j}\,
\partial_u^{j}\mathcal{Y}\!\left(x^{\bar p-1},u\right)\,
\partial_u^{k-j}q\!\left(x^{\bar p-1},u\right)
\]
by the Leibniz rule. Uniformly in $u\in[-h_{*},h_{*}]$, the right-hand side is
bounded in absolute value by
\[
    2^{k}\,\Theta_{\mathcal{Y}}\!\left(x^{\bar p-1}\right)
    \Theta_{q}\!\left(x^{\bar p-1}\right)
    \;\le\;2^{m+1}\,\Theta_{q}\!\left(x^{\bar p-1}\right)
    \left(1+\Theta_{\mathcal{Y}}\!\left(x^{\bar p-1}\right)\right),
\]
which is integrable in $x^{\bar p-1}$ by
Condition~\ref{cond:primitive}(\ref{prim:envelope}). Iterating for $k=1,\dots,m+1$, and
applying dominated convergence once more to obtain continuity of the resulting
integrals, yields $N_q\in C^{m+1}\bigl([-h_{*},h_{*}]\bigr)$ with
\[
N_q^{(k)}(u)
=\sum_{j=0}^{k}\binom{k}{j}\int
\partial_u^{j}\mathcal{Y}\!\left(x^{\bar p-1},u\right)\,
\partial_u^{k-j}q\!\left(x^{\bar p-1},u\right)\,d\!\left(x^{\bar p-1}\right).
\]
Taking $\mathcal{Y}\equiv 1$ — by
Condition~\ref{cond:primitive}(\ref{prim:envelope}) — the identical argument gives
$\bar q\in C^{m+1}\bigl([-h_{*},h_{*}]\bigr)$ with
\[
    \bar q^{(k)}(u)=\int \partial_u^{k}q\!\left(x^{\bar p-1},u\right)
    d\!\left(x^{\bar p-1}\right).
\]
Together with $f_{d(X_i)}(u)=|u|^{a}\bar q(u)$ from Step 0 and
Condition~\ref{cond:primitive}(\ref{prim:nondegenerate}), this proves part (i)
for any $\bar h\in(0,h_{*}]$.
 
\textbf{Step 2: the denominator is bounded away from zero near the origin.}
By Condition~\ref{cond:primitive}(\ref{prim:nondegenerate}) we have
$\bar q(0)>0$, and by Step 1 the function $\bar q$ is continuous on
$[-h_{*},h_{*}]$. Hence there exist $\bar h\in(0,h_{*}]$ and $\eta>0$ such that
\[
    \bar q(u)\ \ge\ \eta\ >\ 0\qquad\text{for all }|u|\le \bar h .
\]
In particular $f_{d(X_i)}(u)>0$ for $0<|u|\le \bar h$, so $g_{\mathcal{Y}}$ is
well defined there and the representation (\ref{eq:g_ratio}) is valid on
$[-\bar h,\bar h]$.
 
\textbf{Step 3: the ratio is $C^{m+1}$ with bounded derivatives.}
On $[-\bar h,\bar h]$ the function $g_{\mathcal{Y}}=N_q/\bar q$ is a quotient of
two $C^{m+1}$ functions (Step 1) whose denominator does not vanish (Step 2).
Since the reciprocal of a nonvanishing $C^{m+1}$ function is $C^{m+1}$, and
products of $C^{m+1}$ functions are $C^{m+1}$, it follows that
$g_{\mathcal{Y}}\in C^{m+1}\bigl([-\bar h,\bar h]\bigr)$. Each derivative
$g_{\mathcal{Y}}^{(k)}$, $k=0,1,\dots,m+1$, is then continuous on the compact
interval $[-\bar h,\bar h]$ and hence bounded there, which proves part (ii).
\end{proof}

\begin{proof}[Proof of Lemma \ref{lemma:delta_convergence}]
    \textbf{Step 0 (Reparametrisation).}
    Let $\nu_i = \nu(h_i/h_0) = \left(1,\tfrac{h_i}{h_0},\dots,(\tfrac{h_i}{h_0})^m\right)'
    \in\R^{m+1}$ and
    \begin{align*}
        Z_i \;=\; \nu_i'\otimes W_i^{*}
        \;=\;\left(W_i^{*},\ \tfrac{h_i}{h_0}W_i^{*},\ \dots,\ (\tfrac{h_i}{h_0})^m W_i^{*}\right)
        \;\in\;\R^{p\times (m+1)p(T-1)} ,
    \end{align*}
    recalling that $W_i^{*}=X_i^{*}W_i\in\R^{p\times p(T-1)}$ and $T=p$. With this
    notation the minimisation problem (\ref{eq:delta_estimation}) is the weighted
    least squares problem
    \begin{align}\label{eq:theta_hat_def}
        \hat{\theta} \;=\; \arg\min_{\theta}\ \sum_{i=1}^n
        \normm{Y_i^{*}-Z_i\theta}_2^2\,K(h_i/h_0).
    \end{align}

    \textbf{Step 1 (Orthogonality of the local residual).}
    Let $\zeta_i = Y_i^{*}-W_i^{*}g_y(h_i)$. Since
    $g_y(h)=\mathcal{A}(h)^{-1}\mathcal{B}(h)$ with
    $\mathcal{A}(h)=\E[W_i^{*\prime}W_i^{*}\mid h_i=h]$ and
    $\mathcal{B}(h)=\E[W_i^{*\prime}Y_i^{*}\mid h_i=h]$, we obtain
    \begin{align}\label{eq:orthogonality}
        \E\!\left[W_i^{*\prime}\zeta_i \mid h_i\right]
        = \mathcal{B}(h_i)-\mathcal{A}(h_i)\mathcal{A}(h_i)^{-1}\mathcal{B}(h_i)=0 .
    \end{align}
    This orthogonality condition — and not the stronger, and in general false,
    statement $\E[\zeta_i\mid h_i]=0$ — is what the argument below requires. It
    implies $\E\!\left[Z_i'\zeta_i\mid h_i\right]=\nu_i\otimes\E[W_i^{*\prime}\zeta_i\mid h_i]=0$
    and hence $\E V_n=0$.

    \textbf{Step 2 (A preliminary moment bound).}
    For every $q\ge 0$, using $\mathrm{supp}(K)=[0,1]$, the change of variable
    $h=uh_0$ and Condition~\ref{cond:Smoothness} gives,
    \begin{align}\label{eq:kernel_moment}
        \E\left|K(h_i/h_0)\right|^{q}
        = \int_0^{h_0}\left|K(h/h_0)\right|^{q}f(h)\,dh
        = h_0\int_0^1 \left|K(u)\right|^{q}f(uh_0)\,du
        \;\lesssim\; h_0^{a+1}\int_0^1 \left|K(u)\right|^{q}u^{a}\,du
        \;\lesssim\; h_0^{a+1}.
    \end{align}
    In particular $P(h_i\le h_0)\asymp h_0^{a+1}$, so that $nh_0^{a+1}$ is the
    effective number of observations entering (\ref{eq:theta_hat_def}); this
    explains the normalisation adopted below.

    \textbf{Step 3 (Decomposition).}
    By Condition \ref{cond:Smoothness} a Taylor expansion of $g_y$ at $0$ gives
    \begin{align*}
        g_y(h_i) = \sum_{k=0}^m \frac{g_y^{(k)}(0)}{k!}h_i^k + R_{y}(h_i),
        \qquad \normm{R_{y}(h_i)}_2\lesssim h_i^{m+1},
    \end{align*}
    uniformly for $h_i\in[0,h_0]$. Consequently
    \begin{align*}
        Y_i^{*} = W_i^{*}g_y(h_i) + \zeta_i
        = \sum_{k=0}^m h_i^k W_i^{*}\frac{g_y^{(k)}(0)}{k!} + W_i^{*}R_{y}(h_i) + \zeta_i
        = Z_i\theta_0 + W_i^{*}R_{y}(h_i) + \zeta_i,
    \end{align*}
    where
    \begin{align*}
        \theta_0 = \left(\frac{h_0^{0}g_y(0)'}{0!},\ \frac{h_0^{1}g_y^{(1)}(0)'}{1!},\
        \dots,\ \frac{h_0^{m}g_y^{(m)}(0)'}{m!}\right)'\in\R^{(m+1)p(T-1)},
    \end{align*}
    the $h_0^k$ factors compensating for the rescaling $h_i^k=(h_i/h_0)^k h_0^k$
    built into $Z_i$. Since $\delta_0=g_y(0)$ is the leading block of $\theta_0$,
    \begin{align}\label{eq:delta_estimation_equivalence}
        \hat{\delta}_0-\delta_0
        = \left(e_1'\otimes I_{p(T-1)}\right)(\hat{\theta}-\theta_0).
    \end{align}
    Define
    \begin{align*}
        S_n &= \frac{1}{nh_0^{a+1}}\sum_{i=1}^n Z_i'Z_i\,K(h_i/h_0)
            = \frac{1}{nh_0^{a+1}}\sum_{i=1}^n
            \left(\nu_i\nu_i'\otimes W_i^{*\prime}W_i^{*}\right)K(h_i/h_0),\\
        \Lambda_n &= \frac{1}{nh_0^{a+1}}\sum_{i=1}^n Z_i'W_i^{*}R_{y}(h_i)\,K(h_i/h_0),
        \qquad
        V_n = \frac{1}{nh_0^{a+1}}\sum_{i=1}^n Z_i'\zeta_i\,K(h_i/h_0).
    \end{align*}
    On the event that $S_n$ is invertible — which by Step 4 occurs with probability
    tending to one — the first-order condition of (\ref{eq:theta_hat_def}) yields
    \begin{align}\label{eq:theta_decomposition}
        \hat{\theta}-\theta_0
        = \left[\sum_{i=1}^n Z_i'Z_i K(h_i/h_0)\right]^{-1}
        \left[\sum_{i=1}^n Z_i'\left(W_i^{*}R_{y}(h_i)+\zeta_i\right)K(h_i/h_0)\right]
        = S_n^{-1}\Lambda_n + S_n^{-1}V_n .
    \end{align}

    \textbf{Step 4 (Convergence of $S_n$).}
    Since the observations are i.i.d.,
    $\E S_n = h_0^{-(a+1)}\,\E\!\left[\left(\nu_i\nu_i'\otimes \mathcal{A}(h_i)\right)K(h_i/h_0)\right]$
    by the law of iterated expectations. Substituting $h=uh_0$,
    \begin{align*}
        \E S_n
        &= \frac{1}{h_0^{a+1}}\int_{0}^{h_0}
        \left(\mathcal{H}(h/h_0)\otimes\mathcal{A}(h)\right)K(h/h_0)f(h)\,dh\\
        &= \frac{1}{h_0^{a}}\int_{0}^{1}
        \left(\mathcal{H}(u)\otimes\mathcal{A}(uh_0)\right)K(u)f(uh_0)\,du\\
        &= \int_{0}^{1}\left(\mathcal{H}(u)\otimes\mathcal{A}(uh_0)\right)K(u)\,u^{a}\,
        \frac{f(uh_0)}{(uh_0)^{a}}\,du
        \;\longrightarrow\;
        \alpha\int_{0}^{1}\mathcal{H}(u)K(u)u^{a}\,du \otimes \mathcal{A}(0),
    \end{align*}
    where the limit follows from $h_0\to0$, the continuity of $\mathcal{A}$ at $0$,
    $f(uh_0)/(uh_0)^{a}\to\alpha$, and dominated convergence (the integrand is
    bounded by a constant multiple of $K(u)u^{a}$). Hence
    \begin{align}\label{eq:delta_Sn_convergence}
        \E S_n = \mathcal{S}_m\otimes \mathcal{A}(0) + o(1).
    \end{align}
    For the variance, i.i.d.\ sampling, Condition \ref{cond:Regularity},
    $\normm{\nu_i}_2\lesssim 1$ on $\{h_i\le h_0\}$, the identity
    $\normm{A\otimes B}_F=\normm{A}_F\normm{B}_F$ and (\ref{eq:kernel_moment}) give
    \begin{align}\label{eq:delta_Sn_variance}
        \E\normm{S_n-\E S_n}_F^2
        \le \frac{1}{n h_0^{2(a+1)}}\,\E\normm{Z_i'Z_i K(h_i/h_0)}_F^2
        = \frac{1}{n h_0^{2(a+1)}}\,
        \E\left[\normm{\nu_i\nu_i'}_F^2\normm{W_i^{*\prime}W_i^{*}}_F^2 K(h_i/h_0)^2\right]
        \lesssim \frac{1}{nh_0^{a+1}}\to 0 .
    \end{align}
    Combining (\ref{eq:delta_Sn_convergence}) and (\ref{eq:delta_Sn_variance}),
    $S_n\to_p \mathcal{S}_m\otimes \mathcal{A}(0)$.
    
    The limit is invertible. Indeed, $\mathcal{S}_m$ is the Hankel (moment) matrix
    of the finite measure $\alpha K(u)u^{a}du$ on $[0,1]$, and for any
    $c=(c_1,\dots,c_{m+1})'\ne 0$,
    \begin{align*}
        c'\mathcal{S}_m c
        = \alpha\int_0^1\Big(\sum_{k=1}^{m+1}c_k u^{k-1}\Big)^2 K(u)\,u^{a}\,du > 0,
    \end{align*}
    because a non-zero polynomial has finitely many roots while $K(u)u^{a}>0$ on a
    set of positive measure. Thus $\mathcal{S}_m\succ 0$; together with
    $\mathcal{A}(0)\succ0$ this makes
    $\mathcal{S}_m\otimes\mathcal{A}(0)$ positive definite, and $S_n^{-1}=O_p(1)$.
    
    \textbf{Step 5 (Bias term).}
    By the triangle inequality, Condition \ref{cond:Regularity},
    $\normm{R_y(h_i)}_2\lesssim h_i^{m+1}\le h_0^{m+1}$ on the support of
    $K(\cdot/h_0)$, and (\ref{eq:kernel_moment}),
    \begin{align*}
        \E\normm{\Lambda_n}_2
        \;\le\; \frac{1}{h_0^{a+1}}\,\E\normm{Z_i'W_i^{*}R_{y}(h_i)}_2 K(h_i/h_0)
        \;\lesssim\; \frac{h_0^{m+1}}{h_0^{a+1}}\,\E\,K(h_i/h_0)
        \;\lesssim\; h_0^{m+1}.
    \end{align*}
    Markov's inequality therefore gives
    \begin{align}\label{eq:delta_Lambda_convergence}
        \Lambda_n = O_p\!\left(h_0^{m+1}\right),
        \qquad\text{and hence}\qquad
        S_n^{-1}\Lambda_n = O_p\!\left(h_0^{m+1}\right).
    \end{align}
    
    \textbf{Step 6 (Asymptotic normality of $V_n$).}
    By (\ref{eq:orthogonality}), $\E V_n=0$. Using
    $(\nu_i\otimes A)(\nu_i'\otimes B)=\nu_i\nu_i'\otimes AB$ and iterated
    expectations,
    \begin{align*}
        nh_0^{a+1}\operatorname{Var}(V_n)
        &= \frac{1}{h_0^{a+1}}\,
        \E\left[\left(\mathcal{H}(h_i/h_0)K(h_i/h_0)^2\right)
        \otimes \left(W_i^{*\prime}\zeta_i\zeta_i'W_i^{*}\right)\right]\\
        &= \frac{1}{h_0^{a+1}}\int_{0}^{h_0}
        \left(\mathcal{H}(h/h_0)K(h/h_0)^2\right)\otimes\Omega(h)\,f(h)\,dh\\
        &= \int_{0}^{1}\left(\mathcal{H}(u)K(u)^2\right)\otimes\Omega(uh_0)\,u^{a}\,
        \frac{f(uh_0)}{(uh_0)^{a}}\,du
        \;\longrightarrow\; \mathcal{T}_m\otimes\Omega(0),
    \end{align*}
    by dominated convergence and continuity of $\Omega$ at $0$.
    
    To verify the Lindeberg condition of Lemma \ref{lemma:LF_CLT}, set
    $Y_{n,i}=(nh_0^{a+1})^{-1/2}Z_i'\zeta_i K(h_i/h_0)$, so that
    $\sum_{i=1}^n \operatorname{Cov}(Y_{n,i})\to\mathcal{T}_m\otimes\Omega(0)$ by
    the display above. Boundedness of $\normm{\nu_i}_2$ and of $X_i$, $\beta(X_i)$,
    $\delta_0$ and $g_y$ on $[0,h_0]$ yields
    \begin{align*}
        \normm{Z_i'\zeta_i K(h_i/h_0)}_2
        = \normm{\left(\nu_i\otimes W_i^{*\prime}\zeta_i\right)K(h_i/h_0)}_2
        \lesssim \left|K(h_i/h_0)\right|\normm{\zeta_i}_2
        \lesssim \left|K(h_i/h_0)\right|\left(\normm{Y_i}_2+1\right)
        \lesssim \left|K(h_i/h_0)\right|\left(\normm{\epsilon_i}_2+1\right).
    \end{align*}
    Since $h_i$ is a function of $X_i$, Condition \ref{cond:Regularity} and
    iterated expectations give
    $\E\left[|K(h_i/h_0)|^{2+\delta}(\normm{\epsilon_i}_2+1)^{2+\delta}\right]
    \lesssim \E|K(h_i/h_0)|^{2+\delta}$, so by (\ref{eq:kernel_moment}) the
    Lyapunov ratio satisfies
    \begin{align*}
        \sum_{i=1}^n\E\normm{Y_{n,i}}_2^{2+\delta}
        \;\lesssim\; \frac{n}{(nh_0^{a+1})^{1+\delta/2}}\,\E\left|K(h_i/h_0)\right|^{2+\delta}
        \;\lesssim\; \frac{nh_0^{a+1}}{(nh_0^{a+1})^{1+\delta/2}}
        = \frac{1}{(nh_0^{a+1})^{\delta/2}}\;\to\;0 ,
    \end{align*}
    using $nh_0^{a+1}\to\infty$. A Lyapunov condition implies the Lindeberg
    condition, so Lemma \ref{lemma:LF_CLT} applies and
    \begin{align}\label{eq:delta_Vn_convergence}
        \sqrt{nh_0^{a+1}}\,V_n
        = \frac{1}{\sqrt{nh_0^{a+1}}}\sum_{i=1}^n Z_i'\zeta_i K(h_i/h_0)
        \;\to_d\; N\!\left(0,\ \mathcal{T}_m\otimes \Omega(0)\right).
    \end{align}
    In particular $V_n=O_p\!\left((nh_0^{a+1})^{-1/2}\right)$.

    \textbf{Step 7 (Conclusion).}
    Write $\mathcal{M}:=\mathcal{S}_m\otimes\mathcal{A}(0)$. By Step 4,
    $S_n^{-1}-\mathcal{M}^{-1}=o_p(1)$, hence
    $\left(S_n^{-1}-\mathcal{M}^{-1}\right)V_n = o_p(1)\,O_p\!\left((nh_0^{a+1})^{-1/2}\right)
    = o_p\!\left((nh_0^{a+1})^{-1/2}\right)$. Combining this with
    (\ref{eq:theta_decomposition}) and (\ref{eq:delta_Lambda_convergence}),
    \begin{align*}
        \hat{\theta}-\theta_0
        = S_n^{-1}\Lambda_n + S_n^{-1}V_n
        = O_p\!\left(h_0^{m+1}\right) + o_p\!\left((nh_0^{a+1})^{-1/2}\right)
        + \left(\mathcal{S}_m\otimes \mathcal{A}(0)\right)^{-1} V_n .
    \end{align*}
    Applying (\ref{eq:delta_estimation_equivalence}) and the mixed-product rule
    $\left(e_1'\otimes I_{p(T-1)}\right)\left(\mathcal{S}_m^{-1}\otimes\mathcal{A}(0)^{-1}\right)
    = e_1'\mathcal{S}_m^{-1}\otimes\mathcal{A}(0)^{-1}$,
    \begin{align*}
        \hat{\delta}_0-\delta_0
        = O_p\!\left(h_0^{m+1}\right) + o_p\!\left((nh_0^{a+1})^{-1/2}\right)
        + \left(e_1'\mathcal{S}_m^{-1}\otimes \mathcal{A}(0)^{-1}\right) V_n .
    \end{align*}
    Finally, by (\ref{eq:delta_Vn_convergence}) and the continuous mapping theorem,
    \begin{align*}
        \sqrt{nh_0^{a+1}}\left(e_1'\mathcal{S}_m^{-1}\otimes \mathcal{A}(0)^{-1}\right) V_n
        \;\to_d\; N\!\left(0,\ \Sigma_\delta\right),
    \end{align*}
    with, using the symmetry of $\mathcal{S}_m$ and $\mathcal{A}(0)$ and the
    mixed-product rule once more,
    \begin{align*}
        \Sigma_\delta
        &= \left(e_1'\mathcal{S}_m^{-1}\otimes \mathcal{A}(0)^{-1}\right)
        \left(\mathcal{T}_m\otimes \Omega(0)\right)
        \left(\mathcal{S}_m^{-1}e_1\otimes \mathcal{A}(0)^{-1}\right)\\
        &= \left(e_1'\mathcal{S}_m^{-1}\mathcal{T}_m\mathcal{S}_m^{-1}e_1\right)
        \otimes\left(\mathcal{A}(0)^{-1}\Omega(0)\mathcal{A}(0)^{-1}\right)
        = \left(e_1'\mathcal{S}_m^{-1}\mathcal{T}_m\mathcal{S}_m^{-1}e_1\right)
        \mathcal{A}(0)^{-1}\Omega(0)\mathcal{A}(0)^{-1},
    \end{align*}
    the last equality holding because the first factor is a scalar. This completes
    the proof.

\end{proof}

\begin{proof}[Proof of Lemma~\ref{lemma:beta0_positive}]

\textbf{Step 1 (Decomposition).}
Since $T=p$ and $\det(X_i)=d_i$, we have
$(X_i'X_i)^{-1}X_i'=X_i^{-1}=d_i^{-1}X_i^{*}$ whenever $d_i\ne 0$, and hence
$(X_i'X_i)^{-1}X_i'W_i=d_i^{-1}W_i^{*}$. Substituting
$Y_i-W_i\hat\delta = X_i\beta(X_i)+\epsilon_i-W_i\Delta\delta$ gives, on
$\{h_i\ge h_0\}$,
\begin{align*}
    \hat{\beta}(X_i)
    &= (X_i'X_i)^{-1}X_i'\left(X_i\beta(X_i)+\epsilon_i-W_i\Delta\delta\right)\\
    &= \beta(X_i) + v_i - (X_i'X_i)^{-1}X_i'W_i\Delta\delta
     = \beta(X_i) + v_i - \frac{1}{d_i}W_i^{*}\Delta\delta .
\end{align*}
Consequently, since $\Delta\delta$ does not depend on $i$,
\begin{align}\label{eq:beta0_decomposition}
    \E_n \hat{\beta}(X_i)\I[h_i\geq h_0] - \E \beta(X_i)\I[h_i\geq h_0]
    = \E_n v_i\I[h_i\geq h_0]
      - \left(\E_n \frac{1}{d_i}W_i^{*}\I[h_i\geq h_0]\right)\Delta\delta
      + \left(\E_n-\E\right)\beta(X_i)\I[h_i\geq h_0].
\end{align}
By Condition~\ref{cond:Regularity}.2 the summands
$\beta(X_i)\I[h_i\ge h_0]$ are i.i.d.\ and uniformly bounded, so their variance
is bounded by a constant that does not depend on $h_0$; Chebyshev's inequality
then yields
\begin{align}\label{eq:beta0_root_n}
    \left(\E_n-\E\right)\beta(X_i)\I[h_i\geq h_0]=O_p\!\left(n^{-1/2}\right).
\end{align}
It therefore remains to show that
$\E_n d_i^{-1}W_i^{*}\I[h_i\geq h_0]\to_p\Xi_0$, which we do in Steps 2--4.

\textbf{Step 2 (Existence of $\Xi_0$ and convergence of the mean).}
Write $\psi(d):=g_W(d)q_D(d)$ for $|d|\le\bar h$. By
Condition~\ref{cond:Smoothness}.2, $\psi\in C^{1}[-\bar h,\bar h]$, so
$L:=\sup_{|d|\le\bar h}\normm{\psi'(d)}_F<\infty$. By the law of iterated
expectations and $h_i=|d_i|$,
\begin{align*}
    \E\!\left[\frac{1}{d_i}W_i^{*}\I[h_i\ge h]\right]
    = \E\!\left[\frac{1}{d_i}g_W(d_i)\I[\,|d_i|\ge h\,]\right]
    = \int_{h\le|d|\le\bar h}\frac{g_W(d)f_D(d)}{d}\,dd \;+\; R,
\end{align*}
where $R:=\E\left[d_i^{-1}g_W(d_i)\I[\,|d_i|>\bar h\,]\right]$ satisfies
$\normm{R}_F\le C/\bar h$ and does not depend on $h$. Substituting $d\mapsto-d$
on the negative half-line and using $f_D(d)=|d|^{a}q_D(d)$,
\begin{align}\label{eq:pv_pairing}
    \int_{h\le|d|\le\bar h}\frac{g_W(d)f_D(d)}{d}\,dd
    = \int_{h}^{\bar h}\frac{g_W(d)f_D(d)-g_W(-d)f_D(-d)}{d}\,dd
    = \int_{h}^{\bar h} d^{\,a-1}\left(\psi(d)-\psi(-d)\right)dd .
\end{align}
Because $\psi$ is matrix valued, we use the integral form of the mean value
theorem rather than a one-point expansion:
\begin{align}\label{eq:mvt}
    \psi(d)-\psi(-d)=\int_{-d}^{d}\psi'(t)\,dt,
    \qquad\text{so}\qquad
    \normm{\psi(d)-\psi(-d)}_F\le 2Ld \quad\text{for } d\in(0,\bar h].
\end{align}
Hence the integrand in (\ref{eq:pv_pairing}) is bounded in norm by $2Ld^{a}$,
which is integrable on $(0,\bar h]$ because $a>-1$. Dominated convergence
therefore shows that the limit (\ref{eq:Xi0_def}) exists, with
\begin{align}\label{eq:mean_rate}
    \normm{\Xi_0}_F \le \frac{C}{\bar h} + \frac{2L\,\bar h^{\,a+1}}{a+1}<\infty ,\quad     \normm{\E\!\left[\frac{1}{d_i}W_i^{*}\I[h_i\ge h_0]\right]-\Xi_0}_F
    \le \int_{0}^{h_0} 2L\,d^{\,a}\,dd = \frac{2L}{a+1}h_0^{\,a+1}= O\!\left(h_0^{\,a+1}\right)=o(1).
\end{align}

\textbf{Step 3 (Variance).}
By i.i.d.\ sampling, $\normm{W_i^{*}}_F\le C$ and $h_i=|d_i|$,
\begin{align*}
    \E\normm{\E_n \frac{1}{d_i}W_i^{*}\I[h_i\ge h_0]
      -\E \frac{1}{d_i}W_i^{*}\I[h_i\ge h_0]}_F^2
    \le \frac{1}{n}\,\E\normm{\frac{1}{d_i}W_i^{*}\I[h_i\ge h_0]}_F^2
    \lesssim \frac{1}{n}\,\E\!\left[\frac{1}{h_i^{2}}\I[h_i\ge h_0]\right].
\end{align*}
Splitting at $\bar h$ and using $f(h)\lesssim h^{a}$ on $(0,\bar h]$,
\begin{align*}
    \E\!\left[\frac{1}{h_i^{2}}\I[h_i\ge h_0]\right]
    \le \frac{1}{\bar h^{2}} + \int_{h_0}^{\bar h}\frac{f(h)}{h^{2}}\,dh
    \lesssim 1 + \int_{h_0}^{\bar h}h^{\,a-2}\,dh
    \lesssim
    \begin{cases}
        h_0^{\,a-1}, & a<1,\\[2pt]
        \log(1/h_0), & a=1,\\[2pt]
        1, & a>1 .
    \end{cases}
\end{align*}
In each case the bound divided by $n$ is $o(1)$. For $a<1$ this is because
$n^{-1}h_0^{\,a-1}=(nh_0^{1-a})^{-1}\to0$ by
Condition~\ref{cond:Kernel}.3. For $a=1$, $nh_0^{a+1}\to\infty$ gives
$nh_0^{2}\to\infty$, hence $h_0\gtrsim n^{-1/2}$ and
$n^{-1}\log(1/h_0)=O(n^{-1}\log n)\to0$. For $a>1$ the bound is $O(n^{-1})$.
Therefore
\begin{align}\label{eq:var_negligible}
    \E_n \frac{1}{d_i}W_i^{*}\I[h_i\ge h_0]
    = \E \frac{1}{d_i}W_i^{*}\I[h_i\ge h_0] + o_p(1).
\end{align}

\textbf{Step 4 (Conclusion).}
Combining (\ref{eq:mean_rate}) and (\ref{eq:var_negligible}),
\begin{align*}
    \E_n \frac{1}{d_i}W_i^{*}\I[h_i\ge h_0] = \Xi_0 + o_p(1),
\end{align*}
so that, $\Delta\delta$ being common to all $i$,
\begin{align*}
    \normm{\left(\E_n \frac{1}{d_i}W_i^{*}\I[h_i\ge h_0]\right)\Delta\delta
    -\Xi_0\Delta\delta}_2
    \le \normm{\E_n \frac{1}{d_i}W_i^{*}\I[h_i\ge h_0]-\Xi_0}_F
        \normm{\Delta\delta}_2
    = o_p\!\left(\normm{\Delta\delta}_2\right).
\end{align*}
Substituting this and (\ref{eq:beta0_root_n}) into
(\ref{eq:beta0_decomposition}) gives
\begin{align*}
    \E_n \hat{\beta}(X_i)\I[h_i\geq h_0] - \E \beta(X_i)\I[h_i\geq h_0]
    = \E_n v_i\I[h_i\geq h_0] - \Xi_0\Delta\delta
      + o_p\!\left(\normm{\Delta\delta}_2\right) + O_p\!\left(n^{-1/2}\right),
\end{align*}
which is the assertion of the lemma.

\end{proof}

\begin{proof}[Proof of Lemma~\ref{lemma:beta0_negative}]
    Throughout, let $\nu(t) := (1,t,\dots,t^m)' \in\R^{m+1},\ \mathcal{H}(u):=\nu(u)\nu(u)', \ \varsigma_i = \beta(X_i)-g_{\beta}(h_i)$

\textbf{Step 1 (Reduction).}
Decompose
\begin{align}\label{eq:beta0neg_split}
    \E_n \hat{g}_{\beta}(h_i)\I[h_i< h_0] - \E g_{\beta}(h_i)\I[h_i< h_0]
    = \E_n\!\left[\left(\hat{g}_{\beta}(h_i)-g_{\beta}(h_i)\right)\I[h_i< h_0]\right]
      + \left(\E_n-\E\right)g_{\beta}(h_i)\I[h_i< h_0].
\end{align}
The summands of the second term are i.i.d., bounded by
Condition~\ref{cond:Regularity}.2, and supported on $\{h_i<h_0\}$, so
\begin{align*}
    \Var\!\left(\left(\E_n-\E\right)g_{\beta}(h_i)\I[h_i<h_0]\right)
    \le \frac{1}{n}\,\E\!\left[\normm{g_\beta(h_i)}_2^2\I[h_i<h_0]\right]
    \lesssim \frac{P(h_i<h_0)}{n}\asymp \frac{h_0^{a+1}}{n}=o\!\left(n^{-1}\right),
\end{align*}
whence, by Chebyshev's inequality,
\begin{align}\label{eq:beta0neg_sampling}
    \left(\E_n-\E\right)g_{\beta}(h_i)\I[h_i< h_0]=o_p\!\left(n^{-1/2}\right).
\end{align}
It remains to analyse the first term of (\ref{eq:beta0neg_split}).

\textbf{Step 2 (Matrix form of the local polynomial and rescaling).}
Collect the Step-2 coefficients into the $p\times(m+1)$ matrix
$\hat\gamma=(\hat\gamma_0,\dots,\hat\gamma_m)$, so that
$\hat g_\beta(h)=\hat\gamma\,\nu(h-h_0)$ and, from the FOC of
(\ref{eq:g_estimation}),
\begin{align*}
    \hat{\gamma} =\left(\sum_{i=1}^n\hat{\beta}(X_i)\,\nu(h_i-h_0)'
        K\!\left(\tfrac{h_i-h_0}{b}\right)\right)
        \left(\sum_{i=1}^n \nu(h_i-h_0)\nu(h_i-h_0)'
        K\!\left(\tfrac{h_i-h_0}{b}\right)\right)^{-1}.
\end{align*}
Since $g_\beta\in C^{m+1}[0,\bar h]$, a Taylor expansion \emph{at $h_0$} gives
\begin{align}\label{eq:taylor_gbeta}
    g_{\beta}(h_i) = \Gamma_0\,\nu(h_i-h_0) + R_{\beta}(h_i),
    \qquad
    \Gamma_0:=\left[\frac{g_{\beta}(h_0)}{0!},\dots,\frac{g_{\beta}^{(m)}(h_0)}{m!}\right]
    \in\R^{p\times(m+1)},
\end{align}
with $\normm{R_{\beta}(h_i)}_2\lesssim |h_i-h_0|^{m+1}$. (Note the expansion is
centred at $h_0$, not at $0$; on both regions used below,
$h_i<h_0$ and $h_i\in[h_0,(1+c)h_0]$, one has $|h_i-h_0|\lesssim h_0$.)
Then
\begin{align*}
    \hat{\beta}(X_i)
    = \beta(X_i) + v_i - d_i^{-1}W_i^{*}\Delta \delta
    = \Gamma_0\,\nu(h_i-h_0) + R_{\beta}(h_i) + \varsigma_i + v_i
      - d_i^{-1}W_i^{*}\Delta \delta ,
\end{align*}
so that, subtracting (\ref{eq:taylor_gbeta}), the $\Gamma_0$ terms cancel in
$\hat\gamma-\Gamma_0$ and
\begin{align*}
    \hat{g}_\beta(h_i) - g_\beta(h_i)
    = \left(\hat\gamma-\Gamma_0\right)\nu(h_i-h_0)-R_\beta(h_i)
    = A_n B_n^{-1}\nu(h_i-h_0)-R_{\beta}(h_i),
\end{align*}
with
$A_n=\sum_i\left(R_{\beta}(h_i)+\varsigma_i+v_i-d_i^{-1}W_i^{*}\Delta\delta\right)
\nu(h_i-h_0)'K(\tfrac{h_i-h_0}{b})$ and
$B_n=\sum_i\nu(h_i-h_0)\nu(h_i-h_0)'K(\tfrac{h_i-h_0}{b})$.
 
Let $D(b)=\mathrm{diag}(1,b,\dots,b^m)$, so that
$D(b)^{-1}\nu(h_i-h_0)=\nu\!\left(\tfrac{h_i-h_0}{b}\right)$, and write
$K_b(\cdot)=K(\cdot)/b$. Inserting $D(b)^{-1}D(b)$ twice, which leaves
$A_nB_n^{-1}\nu(h_i-h_0)$ unchanged, and noting that the factor $1/b$ in $K_b$
cancels between $A_n$ and $B_n^{-1}$,
\begin{align}\label{eq:ghat_expansion}
    \hat{g}_\beta(h_i) - g_\beta(h_i)
    = \left(\Lambda_n^R +\Lambda_n^{\varsigma} + \Lambda_n^{v} + \Lambda_n^{\delta}\right)
      Q_n^{-1}\,\nu\!\left(\tfrac{h_i-h_0}{b}\right)\frac{1}{h_0^{a+1}}
      -R_{\beta}(h_i),
\end{align}
where, with $u_i:=\tfrac{h_i-h_0}{b}$,
\begin{alignat*}{2}
    \Lambda_n^R &= \frac{h_0}{n}\sum_{i=1}^n R_{\beta}(h_i)\,\nu(u_i)'K_b(u_i),
    &\qquad
    \Lambda_n^{\varsigma} &= \frac{h_0}{n}\sum_{i=1}^n \varsigma_i\,\nu(u_i)'K_b(u_i),\\
    \Lambda_n^{v} &= \frac{h_0}{n}\sum_{i=1}^n v_i\,\nu(u_i)'K_b(u_i),
    &\qquad
    \Lambda_n^{\delta} &= -\frac{h_0}{n}\sum_{i=1}^n d_i^{-1}W_i^{*}\Delta \delta\,
        \nu(u_i)'K_b(u_i),\\
     Q_n &= \frac{1}{nh_0^a}\sum_{i=1}^n \nu(u_i)\nu(u_i)'K_b(u_i).
\end{alignat*}
 Indeed, with
these normalisations
$\frac{h_0}{n}A_n^{D}\left(\frac{1}{nh_0^{a}}B_n^{D}\right)^{-1}
=h_0^{a+1}A_n^{D}(B_n^{D})^{-1}$, where the superscript $D$ denotes the
$D(b)^{-1}$-rescaled matrices, which is the origin of the factor $h_0^{-(a+1)}$
in (\ref{eq:ghat_expansion}).
 
Averaging (\ref{eq:ghat_expansion}) over $\{h_i<h_0\}$, and using that
$\Lambda_n^{\bullet}$ and $Q_n$ do not depend on $i$,
\begin{align}\label{eq:beta0neg_master}
    \E_n\!\left[\left(\hat{g}_\beta(h_i) - g_\beta(h_i)\right)\I[h_i<h_0]\right]
    = \left(\Lambda_n^R +\Lambda_n^{\varsigma} + \Lambda_n^{v} + \Lambda_n^{\delta}\right)
      Q_n^{-1}\hat{\tau}_n - R_n,
\end{align}
where
\begin{alignat*}{2}
    \hat{\tau}_n &:= \E_n\!\left[\nu(u_i)\I[h_i<h_0]\right]/h_0^{a+1},
    &\qquad
    R_n &:= \E_n\!\left[R_{\beta}(h_i)\I[h_i<h_0]\right].
\end{alignat*}
Observe that $\hat\tau_n$ is built from the stayers ($u_i\in[-1/c,0)$, where the
kernel does not appear), whereas the $\Lambda_n^{\bullet}$ are built from the
movers with $h_i\in[h_0,(1+c)h_0]$, where the kernel does appear.

\textbf{Step 3 (Limit of $Q_n$).}
Note $\mathcal{H}(u)=\nu(u)\nu(u)'=(u^{k+l-2})_{1\le k,l\le m+1}$. Substituting
$h=h_0+bu$,
\begin{align*}
    \E Q_n
    = \frac{1}{h_0^a}\,\E\!\left[\mathcal{H}(u_i)K_b(u_i)\right]
    = \frac{1}{h_0^{a}}\int_{0}^{1}\mathcal{H}(u)K(u)f(h_0+bu)\,du
    = \alpha\int_{0}^{1}\mathcal{H}(u)K(u)(1+cu)^{a}\,
      \frac{f(h_0+bu)}{\alpha(h_0+bu)^{a}}\,du,
\end{align*}
and since $f(h)/(\alpha h^{a})\to1$ as $h\downarrow0$, the dominated convergence theorem gives
\begin{align}\label{eq:Qm_def}
    \E Q_n \;\longrightarrow\;
    \mathcal{Q}_m:=\alpha\int_{0}^{1}\mathcal{H}(u)K(u)(1+cu)^{a}\,du .
\end{align}
For the $(k,l)$ entry, using $K_b^2=K^2/b^2$, $dh=b\,du$ and $b=ch_0$,
\begin{align*}
    \Var(Q_{n,kl})
    \le \frac{1}{nh_0^{2a}}\,\E\!\left[u_i^{2(k+l-2)}K_b^2(u_i)\right]
    = \frac{1}{nh_0^{2a}b}\int_{0}^{1}u^{2(k+l-2)}K^2(u)f(h_0+bu)\,du
    \lesssim \frac{1}{nh_0^{a+1}}\;\to\;0,
\end{align*}
by Condition~\ref{cond:Kernel}.3. Hence $Q_n\to_p\mathcal{Q}_m$. The limit is
invertible: for $\lambda\in\R^{m+1}$, $\lambda\ne0$,
$\lambda'\mathcal{Q}_m\lambda=\alpha\int_0^1\left(\nu(u)'\lambda\right)^2K(u)(1+cu)^{a}du>0$,
since a non-zero polynomial has finitely many roots while $K$ is strictly
positive on a set of positive measure. Therefore
$Q_n^{-1}\to_p\mathcal{Q}_m^{-1}$.

\textbf{Step 4 (Limit of $\hat\tau_n$).}
For the $k$-th entry, $k=0,\dots,m$, substituting $h=h_0u$ and using
$\left(\tfrac{u-1}{c}\right)^{k}=(-1/c)^{k}(1-u)^{k}$,
\begin{align*}
    \E\hat\tau_{n,k}
    = \frac{1}{h_0^{a+1}}\int_{0}^{h_0}\left(\tfrac{h-h_0}{b}\right)^{k}f(h)\,dh
    = \frac{1}{h_0^{a}}\int_{0}^{1}\left(\tfrac{u-1}{c}\right)^{k}f(h_0u)\,du
    = \alpha\left(-\tfrac{1}{c}\right)^{k}\int_{0}^{1}(1-u)^{k}u^{a}\,
      \frac{f(h_0u)}{\alpha (h_0u)^{a}}\,du,
\end{align*}
so that
\begin{align}\label{eq:taum_def}
    \E\hat\tau_{n,k}\;\longrightarrow\;
    \tau_{m,k}:=\alpha\left(-\tfrac{1}{c}\right)^{k}B(k+1,a+1),
    \qquad \tau_m=(\tau_{m,0},\dots,\tau_{m,m})',
\end{align}
where $B(\cdot,\cdot)$ is the Beta function and the integral is finite because
$a>-1$. Moreover
\begin{align*}
    \Var(\hat\tau_{n,k})
    \le \frac{1}{nh_0^{2(a+1)}}\int_{0}^{h_0}\left(\tfrac{h-h_0}{b}\right)^{2k}f(h)\,dh
    = \frac{1}{nh_0^{2a+1}}\int_{0}^{1}\left(\tfrac{u-1}{c}\right)^{2k}f(h_0u)\,du
    \lesssim \frac{1}{nh_0^{a+1}}\to0 .
\end{align*}
Hence $\hat\tau_n\to_p\tau_m$ and, with $\omega_m:=\mathcal{Q}_m^{-1}\tau_m$,
\begin{align}\label{eq:omega_conv}
    Q_n^{-1}\hat\tau_n = \omega_m + o_p(1).
\end{align}

\textbf{Step 5 (The smoothing bias: $R_n$ and $\Lambda_n^R$).}
Since $|h_i-h_0|\le h_0$ on $\{h_i<h_0\}$,
\begin{align*}
    \E\normm{R_n}_2
    \le \E\!\left[\normm{R_{\beta}(h_i)}_2\I[h_i< h_0]\right]
    \lesssim h_0^{m+1}P(h_i<h_0)\lesssim h_0^{m+a+2}.
\end{align*}
Similarly, on the kernel support $h_i\in[h_0,(1+c)h_0]$ we have
$|h_i-h_0|\le ch_0$, so
\begin{align*}
    \E\normm{\Lambda_n^R}_F
    \lesssim h_0\,\E\!\left[\normm{R_{\beta}(h_i)}_2 K_b(u_i)\right]
    \lesssim h_0^{m+2}\,\E\!\left[K_b(u_i)\right]
    = h_0^{m+2}\int_{0}^{1}K(u)f(h_0+bu)\,du
    \lesssim h_0^{m+a+2}.
\end{align*}
By Markov's inequality, $R_n=O_p(h_0^{m+a+2})$ and
$\Lambda_n^R=O_p(h_0^{m+a+2})$; combined with (\ref{eq:omega_conv}), which gives
$Q_n^{-1}\hat\tau_n=O_p(1)$,
\begin{align}\label{eq:bias_terms}
    \Lambda_n^{R}Q_n^{-1}\hat\tau_n-R_n=O_p\!\left(h_0^{m+a+2}\right).
\end{align}

\textbf{Step 6 (The first-step term $\Lambda_n^{\delta}$).}
The $l$-th column of $\Lambda_n^{\delta}$ equals $\Psi_n^{(l)}\Delta\delta$ with
\begin{align*}
    \Psi_n^{(l)} := -\frac{h_0}{n}\sum_{i=1}^n u_i^{\,l-1}K_b(u_i)\,d_i^{-1}W_i^{*}
    \in\R^{p\times p(T-1)},\qquad l=1,\dots,m+1 .
\end{align*}
\emph{Variance.} Using $\normm{W_i^{*}}_F\le C$, $|u_i|\le1$, $h_0^2K_b^2\lesssim K^2$,
$h_i=|d_i|\asymp h_0$ on the kernel support, and $f(h)\asymp h^{a}$,
\begin{align*}
    \E\normm{\Psi_n^{(l)}-\E\Psi_n^{(l)}}_F^2
    \lesssim \frac{h_0^2}{n}\,\E\!\left[K_b^2(u_i)\,d_i^{-2}\right]
    \lesssim \frac{1}{n}\int_{0}^{1}K^2(u)(h_0+bu)^{-2}f(h_0+bu)\,b\,du
    \lesssim \frac{h_0^{a-1}}{n},
\end{align*}
which tends to $0$ by $nh_0^{1-a}\to\infty$.
 
\emph{Mean.} Write $\kappa_l(d):=\left(\tfrac{|d|-h_0}{b}\right)^{l-1}
K\!\left(\tfrac{|d|-h_0}{b}\right)$, which is a bounded \emph{even} function of
$d$ supported on $\{h_0\le|d|\le(1+c)h_0\}$, and $\psi(d):=g_W(d)q_D(d)$, which
is $C^1$ on $[-\bar h,\bar h]$ with $L:=\sup_{|d|\le\bar h}\normm{\psi'(d)}_F<\infty$
by Condition~\ref{cond:Smoothness}.2. Since $h_0K_b=K/c$ and
$f_D(d)=|d|^{a}q_D(d)$, conditioning on $d_i$ and pairing $\pm d$ gives
\begin{align*}
    \E\Psi_n^{(l)}
    = -\frac{1}{c}\int_{h_0}^{(1+c)h_0}
      \frac{\kappa_l(d)g_W(d)f_D(d)-\kappa_l(-d)g_W(-d)f_D(-d)}{d}\,dd
    = -\frac{1}{c}\int_{h_0}^{(1+c)h_0}\kappa_l(d)
      \left(\psi(d)-\psi(-d)\right)d^{\,a-1}\,dd,
\end{align*}
where the evenness of $\kappa_l$ is what allows it to be factored out of the
difference. By the integral form of the mean value theorem,
$\normm{\psi(d)-\psi(-d)}_F=\normm{\int_{-d}^{d}\psi'(t)dt}_F\le 2Ld$, so
\begin{align*}
    \normm{\E\Psi_n^{(l)}}_F
    \lesssim \int_{h_0}^{(1+c)h_0}d^{\,a}\,dd
    = h_0^{a+1}\int_{1}^{1+c}u^{a}\,du\lesssim h_0^{a+1}.
\end{align*}
Combining the two displays, $\Psi_n^{(l)}=O(h_0^{a+1})+O_p(\sqrt{h_0^{a-1}/n})$.
The stochastic part need not be dominated by the mean, but it is harmless:
since $\Delta\delta=o_p(1)$ by Lemma~\ref{lemma:delta_convergence},
\begin{align}\label{eq:Lambda_delta}
    \Lambda_n^{\delta}Q_n^{-1}\hat\tau_n
    = O_p\!\left(h_0^{a+1}\right)\Delta\delta
      + o_p\!\left(\sqrt{h_0^{a-1}/n}\right).
\end{align}

\textbf{Step 7 (The stochastic terms $\Lambda_n^{\varsigma}$ and $\Lambda_n^{v}$).}
Both have mean zero: $\E[\varsigma_i\mid h_i]=0$ by the definition of
$g_\beta$, and $\E[v_i\mid X_i]=0$ by $\E[\epsilon_i\mid X_i]=0$. For any fixed
$\lambda\in\R^{m+1}$, substituting $h=h_0+bu$ and using $b=ch_0$,
\begin{align*}
    \Var\!\left(\Lambda_n^{\varsigma}\lambda\right)
    &= \frac{h_0^2}{n}\,
       \E\!\left[\left(\nu(u_i)'\lambda\right)^2K_b^2(u_i)\Sigma_\beta(h_i)\right]
     = \frac{h_0^2}{nb}\int_{0}^{1}\left(\nu(u)'\lambda\right)^2K^2(u)
       \Sigma_\beta(h_0+bu)f(h_0+bu)\,du
     = O\!\left(\frac{h_0^{a+1}}{n}\right),\\
    \Var\!\left(\Lambda_n^{v}\lambda\right)
    &= \frac{h_0^2}{n}\,
       \E\!\left[\left(\nu(u_i)'\lambda\right)^2K_b^2(u_i)h_i^{-2}\Sigma_v(h_i)\right]
     = \frac{h_0^2}{nb}\int_{0}^{1}\left(\nu(u)'\lambda\right)^2K^2(u)
       \frac{\Sigma_v(h_0+bu)f(h_0+bu)}{(h_0+bu)^{2}}\,du
     = O\!\left(\frac{h_0^{a-1}}{n}\right),
\end{align*}
using $\normm{\Sigma_\beta}_F\le C$, $\normm{\Sigma_v}_F\le C$ and
$f(h)\asymp h^{a}$. Taking $\lambda=\omega_m$ and, separately,
\begin{align}\label{eq:stoch_rates}
    \Lambda_n^{\varsigma}\omega_m = O_p\!\left(\sqrt{h_0^{a+1}/n}\right)
    = o_p\!\left(n^{-1/2}\right),
    \qquad
    \Lambda_n^{v}\omega_m = O_p\!\left(\sqrt{h_0^{a-1}/n}\right),
    \qquad
    \normm{\Lambda_n^{\varsigma}}_F+\normm{\Lambda_n^{v}}_F
    = O_p\!\left(\sqrt{h_0^{a-1}/n}\right),
\end{align}
the first equality because $h_0^{a+1}\to0$. Consequently, by
(\ref{eq:omega_conv}),
\begin{align}\label{eq:stoch_replace}
    \left(\Lambda_n^{\varsigma}+\Lambda_n^{v}\right)Q_n^{-1}\hat\tau_n
    = \left(\Lambda_n^{\varsigma}+\Lambda_n^{v}\right)\omega_m
      + \left(\Lambda_n^{\varsigma}+\Lambda_n^{v}\right)
        \left(Q_n^{-1}\hat\tau_n-\omega_m\right)
    = \left(\Lambda_n^{\varsigma}+\Lambda_n^{v}\right)\omega_m
      + o_p\!\left(\sqrt{h_0^{a-1}/n}\right).
\end{align}

\textbf{Step 8 (Collecting terms).}
Substituting (\ref{eq:bias_terms}), (\ref{eq:Lambda_delta}) and
(\ref{eq:stoch_replace}) into (\ref{eq:beta0neg_master}),
\begin{align*}
    \E_n\!\left[\left(\hat{g}_\beta(h_i)-g_\beta(h_i)\right)\I[h_i<h_0]\right]
    = \left(\Lambda_n^{\varsigma}+\Lambda_n^{v}\right)\omega_m
      + o_p\!\left(\sqrt{h_0^{a-1}/n}\right)
      + O_p\!\left(h_0^{a+1}\right)\Delta\delta
      + O_p\!\left(h_0^{m+a+2}\right),
\end{align*}
and by the first statement in (\ref{eq:stoch_rates}),
$\Lambda_n^{\varsigma}\omega_m=o_p(n^{-1/2})$. Adding
(\ref{eq:beta0neg_sampling}) yields
\begin{align*}
    \E_n \hat{g}_\beta(h_i)\I[h_i<h_0] - \E g_\beta(h_i)\I[h_i<h_0]
    = \Lambda_n^{v}\omega_m
      + o_p\!\left(\sqrt{h_0^{a-1}/n}\right)
      + O_p\!\left(h_0^{a+1}\right)\Delta\delta
      + O_p\!\left(h_0^{m+a+2}\right)
      + o_p\!\left(n^{-1/2}\right),
\end{align*}
which is the assertion of the lemma.

\end{proof}

\section{Proof of the main theorem}

\begin{proof}[Proof of Theorem~\ref{theorem:asymptotic}]
 
\textbf{Step 1 (Combining the lemmas).}
By the law of iterated expectations,
$\beta_0=\E\beta(X_i)=\E\left[\beta(X_i)\I[h_i\ge h_0]\right]
+\E\left[g_\beta(h_i)\I[h_i<h_0]\right]$, so that $\hat\beta_0-\beta_0$ is the
sum of the two left-hand sides of Lemmas~\ref{lemma:beta0_positive} and
\ref{lemma:beta0_negative}. Writing $\Delta\delta=\hat\delta_0-\delta_0$ and
$u_i=(h_i-h_0)/b$, and absorbing $O_p(h_0^{a+1})\Delta\delta$ into
$o_p(\normm{\Delta\delta}_2)$ (legitimate since $h_0^{a+1}\to0$), those two
lemmas give
\begin{align}\label{eq:beta0_master}
    \hat{\beta}_0 - \beta_0
    = \E_n v_i\I[h_i\geq h_0] + \Lambda_n^{v}\omega_m - \Xi_0\Delta\delta
      + o_p\!\left(\normm{\Delta\delta}_2\right)
      + O_p\!\left(n^{-1/2}\right)
      + o_p\!\left(\sqrt{h_0^{a-1}/n}\right)
      + O_p\!\left(h_0^{m+a+2}\right).
\end{align}
Since $h_0K_b(\cdot)=K(\cdot)/c$, the second term is itself a sample average:
\begin{align*}
    \Lambda_n^{v}\omega_m
    = \frac{h_0}{n}\sum_{i=1}^n v_i\,\nu(u_i)'K_b(u_i)\,\omega_m
    = \frac{1}{n}\sum_{i=1}^n v_i\,a_{+}(u_i),
    \qquad a_{+}(u):=\frac{1}{c}\nu(u)'K(u)\omega_m ,
\end{align*}
where $a_{+}$ is bounded and supported on $u\in[0,1]$, i.e.\ on
$h_i\in[h_0,(1+c)h_0]$. Hence, setting
\begin{align}\label{eq:psi_def}
    \psi_i := v_i\left(\I[h_i\geq h_0]+a_{+}(u_i)\right)\in\R^{p},
\end{align}
equation (\ref{eq:beta0_master}) reads
\begin{align}\label{eq:beta0_master2}
    \hat{\beta}_0 - \beta_0
    = \E_n\psi_i - \Xi_0\Delta\delta
      + o_p\!\left(\normm{\Delta\delta}_2\right)
      + O_p\!\left(n^{-1/2}\right)
      + o_p\!\left(\sqrt{h_0^{a-1}/n}\right)
      + O_p\!\left(h_0^{m+a+2}\right),
\end{align}
while Lemma~\ref{lemma:delta_convergence} gives
\begin{align}\label{eq:delta_master}
    \Delta\delta
    = O_p\!\left(h_0^{m+1}\right) + o_p\!\left((nh_0^{a+1})^{-1/2}\right)
      + \left(e_1'\mathcal{S}_m^{-1}\otimes \mathcal{A}(0)^{-1}\right)V_n .
\end{align}
Note that $\psi_i$ vanishes unless $h_i\ge h_0$, whereas every summand of $V_n$
carries the factor $K(h_i/h_0)$ and so vanishes unless $h_i\le h_0$.

\textbf{Step 2 (Variance of $\E_n\psi_i$).}
Since $\E[v_i\mid X_i]=0$ we have $\E\psi_i=0$, and since
$\E[v_iv_i'\mid h_i]=h_i^{-2}\Sigma_v(h_i)$,
\begin{align}\label{eq:psi_var}
    n\Var\!\left(\E_n\psi_i\right)
    = \E\!\left[h_i^{-2}\Sigma_v(h_i)
      \left(\I[h_i\ge h_0]+a_{+}(u_i)\right)^{2}\right]
    = \int_{h_0}^{\infty} h^{-2}\Sigma_v(h)
      \Big(1+a_{+}\big(\tfrac{h-h_0}{b}\big)\Big)^{2}f(h)\,dh .
\end{align}
Substitute $h=h_0t$ and write $f(h)=h^{a}\varrho(h)$ with
$\varrho(h)=f(h)/h^{a}\to\alpha$ as $h\downarrow0$, $\varrho$ bounded. Because
$h^{-2}dh=h_0^{-1}t^{-2}dt$ and $(h-h_0)/b=(t-1)/c$, splitting the integral at $h=\bar h$ gives
\begin{align}\label{eq:psi_var_scaled}
    n h_0^{1-a}\Var\!\left(\E_n\psi_i\right)
    =& \int_{1}^{\infty}t^{a-2}\,\Sigma_v(h_0t)\,\varrho(h_0t)
      \Big(1+a_{+}\big(\tfrac{t-1}{c}\big)\Big)^{2}dt + h_0^{1-a}\int_{\bar h}^{\infty} h^{-2}\Sigma_v(h)
      \Big(1+a_{+}\big(\tfrac{h-h_0}{b}\big)\Big)^{2}f(h)\,dh \notag\\
    =& \int_{1}^{\infty}t^{a-2}\,\Sigma_v(h_0t)\,\varrho(h_0t)
      \Big(1+a_{+}\big(\tfrac{t-1}{c}\big)\Big)^{2}dt + O(h_0^{1-a}).
\end{align}
For each fixed $t$ the integrand converges to
$\alpha t^{a-2}\Sigma_v(0)\big(1+a_{+}(\tfrac{t-1}{c})\big)^{2}$, by $h_0\to0$,
the continuity of $\Sigma_v$ at $0$ (Condition~\ref{cond:Regularity}.3) and
$\varrho(h)\to\alpha$; and it is dominated by $Ct^{a-2}$, which is integrable on
$[1,\infty)$ precisely when $a<1$. Dominated convergence therefore gives, for
$a<1$,
\begin{align}\label{eq:Upsilon_limit}
    n h_0^{1-a}\Var\!\left(\E_n\psi_i\right)\;\longrightarrow\;\Upsilon_0
    =\alpha\varphi_0\Sigma_v(0),
\end{align}
with $\varphi_0$ as in (\ref{eq:Upsilon_def}); splitting the integral at
$t=1+c$ and substituting $t=1+cu$ on $[1,1+c]$, where $a_{+}$ is supported, and
using $a_{+}\equiv0$ on $(1+c,\infty)$, yields the second expression for
$\varphi_0$.
 
For later use we also record a bound valid for every $a>-1$: since
$\normm{\Sigma_v}_F\le C$, $a_{+}$ is bounded and $f(h)\le C h^{a}$ for $h\in[0,\bar{h}]$,
(\ref{eq:psi_var}) gives
\begin{align}\label{eq:psi_var_bound}
    \Var\!\left(\E_n\psi_i\right)
    \lesssim \frac{1}{n}(\int_{h_0}^{\bar h}h^{a-2}dh+\int_{\bar h}^{\infty}\bar{h}^{-2}\ f(h)dh)
    \lesssim \frac{1}{n}\times
    \begin{cases}
        h_0^{a-1}, & a<1,\\
        \log(1/h_0), & a=1,\\
        1, & a>1 .
    \end{cases}
\end{align}

\textbf{Step 3 (Central limit theorem for $\E_n\psi_i$).}
Let $a<1$ and put $Y_{n,i}=\sqrt{h_0^{1-a}/n}\,\psi_i$, so that
$\sum_{i=1}^n\Cov(Y_{n,i})\to\Upsilon_0$ by (\ref{eq:Upsilon_limit}). Since
$\normm{v_i}_2\lesssim h_i^{-1}\normm{\epsilon_i}_2$ and
$\E[\normm{\epsilon_i}_2^{2+\delta}\mid X_i]\le C$, iterated expectations give
\begin{align*}
    \E\normm{\psi_i}_2^{2+\delta}
    \lesssim \E\!\left[h_i^{-(2+\delta)}\I[h_i\ge h_0]\right]
    \lesssim \int_{h_0}^{\bar h}h^{a-2-\delta}dh\lesssim h_0^{a-1-\delta},
\end{align*}
so that the Lyapunov ratio equals
\begin{align*}
    \sum_{i=1}^n\E\normm{Y_{n,i}}_2^{2+\delta}
    \lesssim \frac{n\,h_0^{(1-a)(1+\delta/2)}}{n^{1+\delta/2}}\,h_0^{a-1-\delta}
    = n^{-\delta/2}h_0^{-(1+a)\delta/2}
    = \left(nh_0^{1+a}\right)^{-\delta/2}\;\to\;0
\end{align*}
by Condition~\ref{cond:Kernel}.3. A Lyapunov condition implies the Lindeberg
condition, so Lemma~\ref{lemma:LF_CLT} yields
\begin{align}\label{eq:psi_CLT}
    \sqrt{nh_0^{1-a}}\,\E_n\psi_i\;\to_d\;N\!\left(0,\Upsilon_0\right),
    \qquad a<1 .
\end{align}

\textbf{Step 4 (Zero covariance and the joint CLT).}
The covariance of $\{\psi_i\}$ and $\{Z_i'\zeta_iK(h_i/h_0)\}$ is $0$. Indeed, by Step 1 the product
$\psi_i\left(Z_i'\zeta_iK(h_i/h_0)\right)'$ is zero almost surely for every $i$
(the two factors are supported on $\{h_i\ge h_0\}$ and $\{h_i\le h_0\}$
respectively). Hence
\begin{align}\label{eq:zero_cov}
    \Cov\!\left(\E_n\psi_i,\;V_n\right)=0\qquad\text{for every }n .
\end{align}
Applying the Cramér--Wold device to the stacked array
$\left(\lambda_1'\psi_i,\;\lambda_2'Z_i'\zeta_iK(h_i/h_0)\right)$ — the
Lyapunov condition holds for each coordinate by Step 3 and by Step 6 of the
proof of Lemma~\ref{lemma:delta_convergence} — and using (\ref{eq:zero_cov}),
the two normalised sums converge jointly to independent Gaussian limits:
\begin{align}\label{eq:joint_CLT}
    \begin{pmatrix}
        \sqrt{nh_0^{1+a}}\,\Delta\delta\\[2pt]
        \sqrt{nh_0^{1-a}}\,\E_n\psi_i
    \end{pmatrix}
    \;\to_d\;
    N\!\left(0,\begin{pmatrix}\phi_1\Lambda_0&0\\0&\Upsilon_0\end{pmatrix}\right),
    \qquad a<1,
\end{align}
where the first block uses (\ref{eq:delta_master}) and holds for any $a>-1$, the undersmoothing condition
$nh_0^{a+2m+3}\to0$ (which makes $\sqrt{nh_0^{a+1}}O_p(h_0^{m+1})=o_p(1)$) and
Lemma~\ref{lemma:delta_convergence}.

\textbf{Step 5 (Case $a>0$).}
Multiply (\ref{eq:beta0_master2}) by $\sqrt{nh_0^{1+a}}$ and treat the terms in
turn. By (\ref{eq:psi_var_bound}) and Chebyshev,
\begin{align*}
    \sqrt{nh_0^{1+a}}\,\E_n\psi_i
    = O_p\!\left(\sqrt{h_0^{1+a}\left(h_0^{a-1}\vee\log(1/h_0)\vee1\right)}\right)
    = o_p(1),
\end{align*}
since the bracket equals $h_0^{2a}$ for $a<1$, $h_0^{2}\log(1/h_0)$ for $a=1$
and $h_0^{1+a}$ for $a>1$, each tending to $0$ when $a>0$. Next, Lemma~\ref{lemma:delta_convergence} gives
$\sqrt{nh_0^{1+a}}\normm{\Delta\delta}_2=O_p(1)$, so
$\sqrt{nh_0^{1+a}}\,o_p(\normm{\Delta\delta}_2)=o_p(1)$. Furthermore
\begin{align*}
    \sqrt{nh_0^{1+a}}\,O_p\!\left(n^{-1/2}\right)&=O_p\!\left(h_0^{(1+a)/2}\right)=o_p(1),\\
    \sqrt{nh_0^{1+a}}\,o_p\!\left(\sqrt{h_0^{a-1}/n}\right)&=o_p\!\left(h_0^{a}\right)=o_p(1),\\
    \sqrt{nh_0^{1+a}}\,O_p\!\left(h_0^{m+a+2}\right)
    &=O_p\!\left(\sqrt{nh_0^{a+2m+3}}\;h_0^{a+1}\right)=o_p(1),
\end{align*}
Hence $\sqrt{nh_0^{1+a}}(\hat\beta_0-\beta_0)
=-\Xi_0\sqrt{nh_0^{1+a}}\Delta\delta+o_p(1)$.
We have asymptotic normality for $\hat{\delta}_0-\delta_0$ by Step 4. 
The result for $\hat{\beta}_0-\beta_0$ and the joint statement follow from the result for $\hat{\delta}_0-\delta_0$,
 the continuous mapping theorem, and Slutsky's lemma.

\textbf{Step 6 (Case $a=0$).}
Here $\sqrt{nh_0^{1+a}}=\sqrt{nh_0^{1-a}}=\sqrt{nh_0}$, and the two leading
terms of (\ref{eq:beta0_master2}) are of the same order. The remainder terms
are handled as in Step 4 and 5 with $a=0$. Therefore
\begin{align*}
    \sqrt{nh_0}\begin{pmatrix}\hat{\delta}_0-\delta_0\\ \hat{\beta}_0-\beta_0\end{pmatrix}
    = \begin{pmatrix}I_{p(T-1)} & 0\\ -\Xi_0 & I_p\end{pmatrix}
      \sqrt{nh_0}\begin{pmatrix}\Delta\delta\\ \E_n\psi_i\end{pmatrix}
      + o_p(1),
\end{align*}
and (\ref{eq:joint_CLT}) with $a=0$ gives, by the continuous mapping theorem,
\begin{align*}
    \begin{pmatrix}I&0\\-\Xi_0&I\end{pmatrix}
    \begin{pmatrix}\phi_1\Lambda_0&0\\0&\Upsilon_0\end{pmatrix}
    \begin{pmatrix}I&-\Xi_0'\\0&I\end{pmatrix}
    =\begin{pmatrix}
        \phi_1\Lambda_0 & -\phi_1\Lambda_0\Xi_0'\\
        -\Xi_0\phi_1\Lambda_0 & \Xi_0\phi_1\Lambda_0\Xi_0'+\Upsilon_0
    \end{pmatrix},
\end{align*}
which is part (ii).

\textbf{Step 7 (Case $-1<a<0$).}
Now $h_0^{-a}\to0$, so by Lemma~\ref{lemma:delta_convergence}
\begin{align*}
    \sqrt{nh_0^{1-a}}\,\Delta\delta
    = h_0^{-a}\,\sqrt{nh_0^{1+a}}\,\Delta\delta
    = h_0^{-a}O_p(1)=o_p(1);
\end{align*}
the first step is estimated at the faster rate $\sqrt{nh_0^{1+a}}$ and is
therefore asymptotically negligible at the slower rate $\sqrt{nh_0^{1-a}}$
relevant for $\hat\beta_0$. Multiplying (\ref{eq:beta0_master2}) by
$\sqrt{nh_0^{1-a}}$, and using $a<0<1$,
\begin{align*}
    \sqrt{nh_0^{1-a}}\,O_p\!\left(n^{-1/2}\right)&=O_p\!\left(h_0^{(1-a)/2}\right)=o_p(1),\\
    \sqrt{nh_0^{1-a}}\,o_p\!\left(\sqrt{h_0^{a-1}/n}\right)&=o_p(1),\\
    \sqrt{nh_0^{1-a}}\,o_p\!\left(\normm{\Delta\delta}_2\right)&=o_p\!\left(h_0^{-a}\right)=o_p(1),\\
    \sqrt{nh_0^{1-a}}\,O_p\!\left(h_0^{m+a+2}\right)
    &=O_p\!\left(\sqrt{nh_0^{a+2m+3}}\;h_0\right)=o_p(1).
\end{align*}
Hence $\sqrt{nh_0^{1-a}}(\hat\beta_0-\beta_0)=\sqrt{nh_0^{1-a}}\,\E_n\psi_i+o_p(1)$. The joint
statement follows from (\ref{eq:joint_CLT}).

\end{proof}

\noindent\textbf{Declaration of generative AI and AI-assisted technologies in the manuscript preparation process}

\noindent During the preparation of this work, the authors used Claude and ChatGPT to edit and polish the text, including the
presentation of theorems and proofs, and to write and debug code. The authors verified
all mathematical results and takes full responsibility for the content of the article.

\end{appendices}
\newpage
\bibliographystyle{chicago}
\bibliography{ref}

@article{CHAMBERLAIN1982,
title = {Multivariate regression models for panel data},
journal = {Journal of Econometrics},
volume = {18},
number = {1},
pages = {5-46},
year = {1982},
issn = {0304-4076},
doi = {https://doi.org/10.1016/0304-4076(82)90094-X},
url = {https://www.sciencedirect.com/science/article/pii/030440768290094X},
author = {Gary Chamberlain}
}

@article{CHAMBERLAIN1987,
title = {Asymptotic efficiency in estimation with conditional moment restrictions},
journal = {Journal of Econometrics},
volume = {34},
number = {2},
pages = {305-334},
year = {1987},
issn = {0304-4076},
doi = {https://doi.org/10.1016/0304-4076(87)90015-7},
url = {https://www.sciencedirect.com/science/article/pii/0304407687900157},
author = {Gary Chamberlain}
}

@article{CHAMBERLAIN1992,
 ISSN = {00129682, 14680262},
 URL = {http://www.jstor.org/stable/2951584},
 author = {Gary Chamberlain},
 journal = {Econometrica},
 number = {3},
 pages = {567--596},
 publisher = {[Wiley, Econometric Society]},
 title = {Efficiency Bounds for Semiparametric Regression},
 urldate = {2026-06-23},
 volume = {60},
 year = {1992}
}

@article{grahampowell,
author = {Graham, Bryan S. and Powell, James L.},
title = {Identification and Estimation of Average Partial Effects in “Irregular” Correlated Random Coefficient Panel Data Models},
journal = {Econometrica},
volume = {80},
number = {5},
pages = {2105-2152},
doi = {https://doi.org/10.3982/ECTA8220},
url = {https://onlinelibrary.wiley.com/doi/abs/10.3982/ECTA8220},
eprint = {https://onlinelibrary.wiley.com/doi/pdf/10.3982/ECTA8220},
year = {2012}
}

@article{PESARAN199579,
title = {Estimating long-run relationships from dynamic heterogeneous panels},
journal = {Journal of Econometrics},
volume = {68},
number = {1},
pages = {79-113},
year = {1995},
issn = {0304-4076},
doi = {https://doi.org/10.1016/0304-4076(94)01644-F},
url = {https://www.sciencedirect.com/science/article/pii/030440769401644F},
author = {M.Hashem Pesaran and Ron Smith}
}

@techreport{hausman2025linear,
  title={Fisher-Schultz Lecture: Linear Estimation of Structural and Causal Effects for Nonseparable Panel Data},
  author={Chernozhukov, Victor and Deaner, Ben and Gao, Ying and Hausman, Jerry A. and Newey, Whitney K.},
  institution={National Bureau of Economic Research},
  type={NBER Working Paper},
  number={33325},
  year={2025}
}

@misc{kwon2026estimatingtreatmenteffectsbounded,
      title={Estimating Treatment Effects Under Bounded Heterogeneity},
      author={Soonwoo Kwon and Liyang Sun},
      year={2025},
      eprint={2510.05454},
      archivePrefix={arXiv},
      primaryClass={econ.EM},
      url={https://arxiv.org/abs/2510.05454}, 
}

@article{ArellanoB2012,
  author  = {Arellano, Manuel and Bonhomme, Stephane},
   title   = {Identifying distributional characteristics in random
              coefficients panel data models},
   journal = {Review of Economic Studies},
   year    = {2012},
  volume  = {79},
   pages   = {987--1020}
 }

@article{LAAGE2024105804,
title = {A Correlated Random Coefficient panel model with time-varying endogeneity},
journal = {Journal of Econometrics},
volume = {242},
number = {2},
pages = {105804},
year = {2024},
issn = {0304-4076},
author = {Louise Laage}
}

@article{CorrelatedPanelModelLargeT,
author = {Fernández-Val, Iván and Lee, Joonhwah},
title = {Panel data models with nonadditive unobserved heterogeneity: Estimation and inference},
journal = {Quantitative Economics},
volume = {4},
number = {3},
pages = {453-481},
year = {2013}
}

@article{Swamy1970,
  author  = {Swamy, P. A. V. B.},
  title   = {Efficient Inference in a Random Coefficient Regression Model},
  journal = {Econometrica},
  year    = {1970},
  volume  = {38},
  number  = {2},
  pages   = {311--323},
  doi     = {10.2307/1913012}
}

@article{OlleyPakes1996,
  author  = {Olley, G. Steven and Pakes, Ariel},
  title   = {The Dynamics of Productivity in the Telecommunications
             Equipment Industry},
  journal = {Econometrica},
  year    = {1996},
  volume  = {64},
  number  = {6},
  pages   = {1263--1297},
  doi     = {10.2307/2171831}
}

@article{LevinsohnPetrin2003,
  author  = {Levinsohn, James and Petrin, Amil},
  title   = {Estimating Production Functions Using Inputs to Control
             for Unobservables},
  journal = {The Review of Economic Studies},
  year    = {2003},
  volume  = {70},
  number  = {2},
  pages   = {317--341},
  doi     = {10.1111/1467-937X.00246}
}

@article{ACF2015,
  author  = {Ackerberg, Daniel A. and Caves, Kevin and Frazer, Garth},
  title   = {Identification Properties of Recent Production Function
             Estimators},
  journal = {Econometrica},
  year    = {2015},
  volume  = {83},
  number  = {6},
  pages   = {2411--2451},
  doi     = {10.3982/ECTA13408}
}

@article{Raval2023,
  author  = {Raval, Devesh},
  title   = {Testing the Production Approach to Markup Estimation},
  journal = {The Review of Economic Studies},
  year    = {2023},
  volume  = {90},
  number  = {5},
  pages   = {2592--2611},
  doi     = {10.1093/restud/rdad002}
}

@article{gandhi2020identification,
  author = {Gandhi, Amit and Navarro, Salvador and Rivers, David},
  title = {On the Identification of Gross Output Production Functions},
  journal = {Journal of Political Economy},
  volume = {128},
  number = {8},
  pages = {2973--3016},
  year = {2020},
  publisher = {The University of Chicago Press},
  doi = {10.1086/707736}
}

@techreport{Greenstreet2007,
  author      = {Greenstreet, David},
  title       = {Exploiting Sequential Learning to Estimate
                 Establishment-Level Productivity Dynamics and Decision Rules},
  institution = {University of Oxford, Department of Economics},
  type        = {Economics Series Working Paper},
  number      = {345},
  year        = {2007}
}

@book{van2000asymptotic,
  title={Asymptotic statistics},
  author={Van der Vaart, Aad W},
  volume={3},
  year={2000},
  publisher={Cambridge university press}
}

@article{Sasaki_2026,
   title={SLOW MOVERS IN PANEL DATA},
   ISSN={1469-4360},
   url={http://dx.doi.org/10.1017/S0266466625100157},
   DOI={10.1017/s0266466625100157},
   journal={Econometric Theory},
   publisher={Cambridge University Press (CUP)},
   author={Sasaki, Yuya and Ura, Takuya},
   year={2026},
   month=Jan, pages={1–43} }

@incollection{MairesseGriliches1990,
  author    = {Mairesse, Jacques and Griliches, Zvi},
  title     = {Heterogeneity in Panel Data: Are There Stable Production Functions?},
  booktitle = {Essays in Honor of Edmond Malinvaud, Volume 3: Empirical Economics},
  editor    = {Champsaur, Paul and others},
  publisher = {MIT Press},
  address   = {Cambridge, MA},
  pages     = {192--231},
  year      = {1990}
}

@misc{KasaharaSchrimpfSuzuki2023,
  author        = {Kasahara, Hiroyuki and Schrimpf, Paul and Suzuki, Michio},
  title         = {Identification and Estimation of Production Function with Unobserved Heterogeneity},
  year          = {2023},
  eprint        = {2305.12067},
  archivePrefix = {arXiv},
  primaryClass  = {econ.EM},
  url           = {https://arxiv.org/abs/2305.12067}
}

@article{LiSasaki2024,
  author  = {Li, Tong and Sasaki, Yuya},
  title   = {Identification of Heterogeneous Elasticities in Gross-Output Production Functions},
  journal = {Journal of Econometrics},
  volume  = {238},
  number  = {2},
  pages   = {105637},
  year    = {2024},
  doi     = {10.1016/j.jeconom.2023.105637}
}

@article{Li2026TERC,
  author  = {Li, Ming},
  title   = {Identification and Estimation in a Time-Varying Endogenous Random Coefficient Panel Data Model},
  journal = {Journal of Econometrics},
  volume  = {256},
  pages   = {106272},
  year    = {2026},
  doi     = {10.1016/j.jeconom.2026.106272}
}

\end{document}